\date{}
\newcommand{\de}{\delta}
\newcommand{\ga}{\gamma}
\newcommand{\al}{\alpha}
\newcommand{\be}{\beta}
\newcommand{\ld}{\lambda}
\newcommand{\Ld}{\Lambda}
\newcommand{\ba}{\begin{array}}
\newcommand{\ea}{\end{array}}
\newcommand{\beqq}{\begin{equation*}}
\newcommand{\eeqq}{\end{equation*}}
\newcommand{\beq}{\begin{equation}}
\newcommand{\eeq}{\end{equation}}
\newcommand{\bdm}{\begin{displaymath}}
\newcommand{\edm}{\end{displaymath}}
\theoremstyle{definition}
\newtheorem{theorem}{Theorem}[section]
\newtheorem{definition}{Definition}[section]
\newtheorem{lemma}{Lemma}[section]
\newtheorem{remark}{Remark}[section]
\newtheorem{example}{Example}[section]
\numberwithin{equation}{section}
\begin{document}
\pagestyle{plain}

\begin{center}
{\large \bf  Chaotic polynomial maps}
\\ [0.3in]
XU ZHANG \footnote{ Email address:\ \ xuzhang08@gmail.com (X. Zhang).}

 \vspace{0.15in}
{\it  Department of Mathematics, Michigan State University,
East Lansing, MI 48824, USA}
\end{center}

\vspace{0.18in}

\baselineskip=20pt

{\bf\large{ Abstract.}}\  This paper introduces a class of polynomial maps in Euclidean spaces, investigates the conditions under which there exist Smale horseshoes and uniformly hyperbolic invariant sets, studies the chaotic dynamical behavior and strange attractors, and shows that some maps are chaotic in the sense of Li-Yorke or Devaney. This type of maps includes both the Logistic map and the H\'{e}non map. For some maps in three-dimensional spaces under certain conditions, if the expansion dimension is equal to one or two, it is shown that there exist a Smale horseshoe and a uniformly hyperbolic invariant set on which the system is topologically conjugate to the two-sided fullshift on finite alphabet; if the system is expanding, then it is verified that there is an forward invariant set on which the system is topologically semi-conjugate to the one-sided fullshift on eight symbols. For three types of high-dimensional polynomial maps with degree two, the existence of Smale horseshoe and the uniformly hyperbolic invariant sets
are studied, and it is proved that the map is topologically conjugate to the two-sided fullshift on finite alphabet on the invariant set under certain conditions. Some interesting maps with chaotic attractors and positive Lyapunov exponents in three-dimensional spaces are found  by using computer simulations. In the end, two examples are provided to illustrate the theoretical results.

{\sl\bf{Keywords:}} Attractor; Devaney chaos; fullshift; H\'{e}non map; Li-Yorke chaos; Logistic map; polynomial map; Smale horseshoe; uniformly hyperbolic set.

\bigskip

\baselineskip=20pt
\section{Introduction}

The development of chaos theory dates back to Poincar\'{e}'s work on the three-body problem \cite{Poincare90, Poincare93}. Thanks to the discovery of the Lorenz attractor and Li-Yorke chaos \cite{LiYorke, Lorenz, Sparrow}, the chaos theory is gradually becoming popular. Chaos theory is being successfully applied in many fields,  from natural science to engineering. For example, it is an important branch of dynamical
systems in mathematics \cite{Wiggins1990},  planetary
orbits \cite{MurrayHolman1999} and fluid motion \cite{EckmannRuelle1985} in physics,  the analysis of chemical
compounds \cite{Strizhak2003} in chemistry, secure communication \cite{FengTse2007} and
chaos control \cite{ChenDong1993} in
engineering, new types of econometric
models \cite{Kelsey1998} and job selection \cite{PryorBright2007} in economics.

The well-known chaotic polynomial maps are the Logistic map and the H\'{e}non map, which were brought forward by May \cite{May} and H\'{e}non \cite{henon}, respectively. These examples are important models in the research of nonlinear dynamics because of their simple expressions and complicated dynamical behavior. In engineering, chaotic polynomial maps
were used as random number generators and have many applications \cite{Tsuneda2000}, such as spread spectrum
communications \cite{KohdaOokuboIshii1998},
and cryptosystems containing image encoding \cite{BelkhoucheQidwai2003} and data encryption \cite{Jessa2000}. Since the encoding methods are dependent on the properties of the chaotic maps, and the expressions for the polynomial maps are relatively easy, the polynomial maps in our present work might be very useful in the future applications.

The H\'{e}non map or generalized H\'{e}non map is an important model in the research of two-dimensional polynomial diffeomorphic maps. Devaney and Nitecki investigated the
conditions under which the real quadratic H\'{e}non map has a hyperbolic invariant set on which it is topologically conjugate to the two-sided fullshift on two symbols \cite{devaneynitecki}. Friedland and Milnor showed that the polynomial diffeomorphic map from the real or complex plane to itself is either conjugate to a composition of generalized H\'{e}non maps or dynamically trivial \cite{friedlandmilnor}. Dullin and Meiss studied the conditions which imply that the real cubic H\'{e}non map has hyperbolic invariant set \cite{dullinmeiss}.
The deeply relationship between the dynamical behavior and the change of parameters for the real H\'{e}non map was described clearly based on the work contributed by Benedicks, Carleson, Viana, Young and et al. \cite{BenedicksCarleson, BenedicksViana, BenedicksYoung}.
Bedford and Smillie applied the techniques in complex dynamics to show the existence of a hyperbolic horseshoe and a quadratic tangency between stable and unstable manifolds of fixed points for the real H\'{e}non map under certain conditions, where the parameter is on the boundary of the set, in which the map is uniformly hyperbolic on its non-wandering set \cite{bedfsmil, bedfsmil2}.
Cao et al. investigated the existence of
 the uniformly hyperbolic set and an orbit of tangency for the H\'{e}non-like families of diffeomorphisms of real plane by using the methods in real dynamics \cite{CaoLuzzatoRios2008}.
The hyperbolicity, ergodicity, Lyapunov exponents, topological entropy and so on, were also discussed for the complex H\'{e}non maps (see \cite{BedfordSmillieVI, BedfordSmillieVII, BedfordSmillieVIII} and references therein).
In \cite{Zhang2010}, we studied the existence of Smale horseshoe and uniformly hyperbolic invariant set on which the map is topologically conjugate to the two-sided fullshift on finite alphabet for the generalized H\'{e}non maps.

There are many interesting chaotic attractors have been found in the study of the three-dimensional polynomial maps. Many authors investigated the following three-dimensional map $F:(x,y,z)\in\mathbb{R}^3\to(f_1,f_2,f_3)\in\mathbb{R}^3:$
\begin{equation}\label{sys5}
\left\{
  \begin{array}{ll}
    f_1(x,y,z)=y \\
    f_2(x,y,z)=z \\
    f_3(x,y,z)=M_1+Bx+M_2y-z^2
  \end{array}
\right.
\end{equation}
where $M_1$, $M_2$, $B$ are parameters. In \cite{GonchenkoOvsyannikov}, Gonchenko et.al. studied the wild Lorenz-type strange attractors of system \eqref{sys5}.  In \cite{GonchenkoMeiss}, Gonchenko et.al. studied the existence of wild-hyperbolic strange attractors and homoclinic bifurcations of system \eqref{sys5}. In \cite{GonchenkoShilnikov}, Gonchenko et.al. investigated the bifurcations of system \eqref{sys5} with non-transverse heteroclinic cycles. In \cite{LiYang}, Li and Yang investigated the three-dimensional Smale horseshoe and gave a computer assisted verification of the existence of hyperchaos of a class of three-dimensional polynomial maps. In \cite{ElhadjSprott}, Elhadj and Sprott determined
all the possible forms of the three-dimensional quadratic diffeomorphisms with constant Jacobian. In \cite{FournierLopez}, Fournier-Prunaret et.al.
studied the bifurcation and chaotic dynamics of a kind of three-dimensional maps of logistic type. In \cite{GonchenkoLi2008}, Gonchenko and Li studied the existence of the uniformly hyperbolic invariant sets topologically conjugating to the Smale horseshoe for two cases of three-dimensional quadratic maps with constant Jacobian. In \cite{Sprott1993, Sprott2004}, Sprott gave the idea that the computer simulations play an important role in the search of the chaotic attractors and studied several types of polynomial maps.

The polynomial diffeomorphic maps in high-dimensional spaces are different from those in two-dimensional spaces, since the degree of the inverse maps might be a large number and it is difficult to find the explicit expressions for the inverse maps. For more examples, please refer to the maps in Section 3 or related references \cite{Moser1994}. There exists a bound on the degree of the inverse, that is, $\mbox{degree}(F^{-1})\leq(\mbox{degree}(F))^{n-1}$ \cite{ChengWangYu}. To investigate the inverse of the polynomial maps, an interesting problem is the well-known Jacobian Conjecture on $\mathbb{R}^n$, that is, is every polynomial map $F:\mathbb{R}^n\to\mathbb{R}^n$ with non-zero constant $\mbox{det}F'(x)$ a bijective map with a polynomial inverse? This is called the real Jacobian Conjecture, if the map is defined on $\mathbb{C}^n$, then this is the complex Jacobian Conjecture. The important progress on this problem includes Moh's proof for $n=2$ and $\mbox{degree}(F)\leq100$ \cite{Moh1983}, Wang's study for all $n$ and $\mbox{degree}(F)\leq2$ \cite{Wang1980}, Bass, Connell, and Wright reduced the problem to a special case, and obtained that the maps of homogeneous type of degree exactly three can imply the Jacobian Conjecture \cite{BassConnellWright}, and so on.

We introduce the following type of maps: 
\begin{equation}\label{newmap}
\left\{
\begin{array}{ll}
f_1(x_1,...,x_n)=a_{11}p_1(x_1)+\sum_{2\leq j\leq n} a_{1j}x_j\\
\vdots \qquad  \qquad \vdots \qquad  \qquad \vdots \qquad  \qquad \vdots\\
f_k(x_1,...,x_n)=a_{kk}p_k(x_k)+\sum_{1\leq j\leq n,\ j\neq k}a_{kj}x_{j}\\
\vdots \qquad  \qquad \vdots \qquad  \qquad \vdots \qquad  \qquad \vdots\\
f_n(x_1,...,x_n)=a_{nn}p_n(x_n)+\sum_{1\leq j\leq n-1}a_{nj}x_j
\end{array} \right.,
\end{equation}
where $p_i(x_i)$, $1\leq i\leq n$, are real polynomials, $a_{ij}$, $1\leq i,j\leq n$, are real parameters. We study the case that $n\geq3$. For $n=1,2$, some results can be found in \cite{Zhang2010, zhang2012}.

First, we study the following type of three-dimensional polynomial maps $F:(x,y,z)\in\mathbb{R}^3\to(f_1,f_2,f_3)\in\mathbb{R}^3:$
\begin{equation}\label{sys1}
\left\{
  \begin{array}{ll}
    f_1(x,y,z)=a_1p(x)+a_2y+a_3z \\
    f_2(x,y,z)=b_1x+b_2q(y)+b_3z \\
    f_3(x,y,z)=c_1x+c_2y+c_3r(z),
  \end{array}
\right.
\end{equation}
where $a_i,b_i,c_i$ are real parameters, $1\leq i\leq3$, $p(x)$, $q(y)$, and $r(z)$ are real polynomials.

(1). We investigate a type of diffeomorphism with polynomial inverse by assuming that $a_1\neq0$, $b_2=c_3=0$, and $p(x)$ having two different non-negative real zeros, and verify the existence of a Smale horseshoe and a uniformly hyperbolic invariant set on which the system is topologically conjugate to the two-sided fullshift on two symbols under certain conditions (See Theorems \ref{onepositive-zero}--\ref{onepositive}). If $p(x)$ has only $m$ simple real roots, then there is a uniformly hyperbolic invariant set on which the system is topologically conjugate to the two-sided fullshift on $m$ symbols under certain conditions (see Theorem \ref{onepositive-n-zero}).

(2). We study a class of diffeomorphism with polynomial inverse,  where $a_1b_2\neq0$, $c_3=0$, $p(x)$ and $q(y)$ are assumed to have two distinct non-negative real roots, and show that there exist a Smale horseshoe and a uniformly hyperbolic invariant set on which the system is topologically conjugate to the two-sided fullshift on four symbols under certain conditions (See Theorems \ref{twopositive-zero}-- \ref{twopositive-1zero-1positive}).
If $p(x)$ has only $m$ simple real roots, and $q(y)$ has only $n$ simple real zeros, then there is a uniformly hyperbolic invariant set on which the system is topologically conjugate to the two-sided fullshift on $mn$ symbols under certain conditions (see Theorem \ref{twopositive-nzero}).

(3). We study a kind of maps, where $a_1b_2c_3\neq0$, $p(x)$, $q(y)$, and $r(z)$ have two different non-negative real roots, and prove that there is an forward invariant set on which the system is topologically semi-conjugate to the one-sided fullshift on eight symbols (See Theorems \ref{threepositive-zero}--\ref{threepositive-2zero-1positive}).

Second, we investigate the following type of maps : $F:(x_1,...,x_n)\in\mathbb{R}^n\to(f_1,...,f_n)\in\mathbb{R}^n$:
\begin{equation}\label{generalmap}
\left\{
\begin{array}{ll}
f_1(x_1,...,x_n)=a_{11}-x_1^2+\sum_{2\leq j\leq n} a_{1j}x_j\\
\vdots \qquad  \qquad \vdots \qquad  \qquad \vdots \qquad  \qquad \vdots\\
f_d(x_1,...,x_n)=a_{dd}-x_d^2+\sum_{1\leq j\leq n,\ j\neq d}a_{dj}x_{j}\\
f_{d+1}(x_1,...,x_n)=\sum_{1\leq j\leq n,\ j\neq d+1}a_{d+1,j}x_{j}\\
\vdots \qquad  \qquad \vdots \qquad  \qquad \vdots \qquad  \qquad \vdots\\
f_n(x_1,...,x_n)=\sum_{1\leq j\leq n-1}a_{nj}x_j
\end{array} \right.,
\end{equation}
where $a_{ij}$, $1\leq i,j\leq n$, are real parameters. We study three kinds of polynomial maps by giving the inverse expressions, and show that there exist Smale horseshoe and uniformly hyperbolic invariant sets on which the maps are topologically conjugate with fullshift on finite alphabet (see Theorem \ref{hyp-highd-dim}).

We show that some maps are chaotic in the sense of Li-Yorke or Devaney. We also apply the computer simulation methods to study the existence of chaotic attractors and calculate the maximal Lyapunov exponents (see Section 5).

The rest of the paper is organized as follows. In Section 2, we introduce some basic concepts and results. In Section 3, we investigate the existence of uniformly hyperbolic invariant sets and complicated dynamics of maps \eqref{sys1} in three-dimensional spaces. We split this section into three parts. In Section 4, we study the existence of uniformly hyperbolic invariant sets and the chaotic dynamics of the maps \eqref{generalmap} in high-dimensional Euclidean spaces. In Section 5, some interesting maps with chaotic attractors and positive Lyapunov exponents are obtained by applying computer simulations. In Section 6, several examples are given to illustrate the theoretical results.

\bigskip

\section{Preliminaries}

In this section, the concepts of uniformly hyperbolic set, symbolic dynamics, and chaos, and related results are introduced.

\begin{definition} \cite{Newhouse, Robinson}  Consider a map $F:\mathbb{R}^n\to\mathbb{R}^n$.
An invariant set $\Lambda\subset\mathbb{R}^n$ is called a uniformly hyperbolic set for $F$, if there are constants $C>0$, $\lambda>1$, and a
continuous splitting $T_w\mathbb{R}^n=E^u_w\oplus E^s_w$, $w\in\Lambda$, such that
\begin{itemize}
\item [1]. $DF_w(E^s_w)=E^s_{F(w)}$ and $DF_w(E^u_w)=E^u_{F(w)}$;
\item [2].
$|DF^m_w(v)|\geq C\lambda^m|v|$, $v\in E^u_w$, $m\geq0$;
\item [3]. $|DF^{-m}_w(v)|\geq C\lambda^m|v|$, $v\in E^s_w$, $m\geq0$.
\end{itemize}
\end{definition}

Given a subspace $E\subset\mathbb{R}^n$ with $0 < \mbox{dim}E<n$ and $\mathbb{R}^n=E\oplus E^{c}$, where $E^c$ is the algebraic complement of $E$, the
standard unit cone with respect to $E$ is given by:
\beqq
K_1(E, E^c)=\bigg\{
\mathbf{v} =
\left(
\begin{array}{c}
v_1\\
v_2
\end{array}
\right)
:\ v_1\in E,\ v_2\in E^{c},\ \mbox{and}\ |v_2|\leq|v_1|\bigg\}.
\eeqq

Given a linear automorphism
$T : \mathbb{R}^n\to\mathbb{R}^n$ with $T(E)=E$, the image $T(K_1(E,E^c))$ is called a cone in $\mathbb{R}^n$ with core $E$,  denoted by $\mathcal{C}(E)$.
For convenience, $\mathcal{C}$ is used to represent a cone in  $\mathbb{R}^n$, where the proper subspace $E$ of $\mathbb{R}^n$ is omitted.

Suppose that $F$ is a diffeomorphism on $\mathbb{R}^n$ and $U$ is a compact subset of $\mathbb{R}^n$, then $\Lambda=\cap^{\infty}_{-\infty}F^i(U)$ is a compact invariant set. Let the cone field $\mathcal{C}=\{\mathcal{C}_w\}$ on $\Lambda$ be the set of cones $\mathcal{C}_w\subset  T\mathbb{R}^n=\mathbb{R}^n$, $w\in\Lambda$. It is said that the cone field $\mathcal{C}_w$ has constant orbit core dimension on $\Lambda$ if
$\mbox{dim}E_w=\mbox{dim}E_{F(w)}$, $w\in\Lambda$, where $E_w$ and $E_{F(w)}$ are the cores of $\mathcal{C}_w$ and
$\mathcal{C}_{F(w)}$, respectively.

Set
\beqq
m_{\mathcal{C},w}=m_{\mathcal{C},w}(F):=\inf_{\mathbf{v}\in\mathcal{C}_w\setminus\{0\}}\frac{|DF_w(\mathbf{v})|}{|\mathbf{v}|}
\eeqq
and
\beqq
m'_{\mathcal{C},w}=m'_{\mathcal{C},w}(F):=\inf_{\mathbf{v}\not\in\mathcal{C}_{F(w)}}\frac{|DF^{-1}_{F(w)}(\mathbf{v})|}{|\mathbf{v}|},
\eeqq
where $m_{\mathcal{C},w}$ is said to be the minimal expansion of $F$ on $\mathcal{C}_w$, and $m'_{\mathcal{C},w}$
is called the minimal co-expansion of $F$ on $\mathcal{C}_w$.

\begin{lemma} (\cite{Newhouse}, Theorem 1.4)\label{unifhyp}
A necessary and sufficient condition for  $\Lambda$ to be a uniformly hyperbolic set for $F$ is that there are an integer $N>0$ and a cone field $\mathcal{C}$
with constant orbit core dimension over  $\Lambda$ such that $F^N$ is both expanding
and co-expanding on $\mathcal{C}$.
\end{lemma}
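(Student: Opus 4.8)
\noindent The plan is to prove the two implications separately, working throughout with the single map $G:=F^N$. Hyperbolicity of $\Lambda$ for $G$ will entail hyperbolicity for $F$: the $DG$-invariant splitting produced below is the intrinsic stable/unstable splitting, hence unique, and since $F$ commutes with $G$ (so that $DG_{F(w)}DF_w=DF_{G(w)}DG_w$) the isomorphism $DF_w$ carries the splitting at $w$ to the one at $F(w)$; the one-step estimates for $F$ then follow by absorbing the finitely many factors $DF^{j}$, $0\le j<N$, which are bounded above and below on the compact set $\Lambda$, into the constant $C$. It therefore suffices to characterize $G$-hyperbolicity of $\Lambda$ by a cone field $\mathcal{C}$ of constant orbit core dimension with $DG_w(\mathcal{C}_w)\subset\mathcal{C}_{G(w)}$, $m_{\mathcal{C},w}(G)\ge\mu>1$, and $m'_{\mathcal{C},w}(G)\ge\mu>1$, uniformly in $w$.

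For \emph{necessity}, I would start from the hyperbolic splitting $T_w\mathbb{R}^n=E^u_w\oplus E^s_w$ and build the cone field by taking the core $E_w:=E^u_w$ and setting $\mathcal{C}_w=\{\,v=v^u+v^s:\ |v^s|\le|v^u|\,\}$ relative to this splitting, after passing to an adapted norm in which $DG$ expands $E^u$ and contracts $E^s$ by factors greater than and less than $1$ in a single step (enlarging $N$ so that $C\lambda^N>1$ if needed). Continuity of the splitting makes $\mathcal{C}$ continuous, and its invariance gives $\dim E_w=\dim E_{G(w)}$, so the orbit core dimension is constant. A direct estimate on $v=v^u+v^s\in\mathcal{C}_w$, using expansion of the unstable part and contraction of the stable part, simultaneously yields $DG_w(\mathcal{C}_w)\subset\mathcal{C}_{G(w)}$ and $m_{\mathcal{C},w}(G)>1$; the symmetric estimate for $DG^{-1}$ on the complement of the cone (where the stable part dominates and is expanded by $DG^{-1}$) gives $m'_{\mathcal{C},w}(G)>1$.

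The harder direction is \emph{sufficiency}, where the splitting must be manufactured from the cone field. I would define the candidate unstable bundle as the forward intersection $E^u_w:=\bigcap_{k\ge0}DG^k_{G^{-k}(w)}\big(\mathcal{C}_{G^{-k}(w)}\big)$ and the candidate stable bundle symmetrically as the backward intersection of the complementary cones under $DG^{-1}$ (which $G^{-1}$ expands, by the co-expansion hypothesis). Cone invariance makes each family nested, so the intersections are well defined, and $DG_w(E^u_w)=E^u_{G(w)}$ (resp. for $E^s$) is immediate. The three points needing real work are: (i) that each intersection is an honest linear subspace, of dimension equal to the core dimension $\dim E_w$; (ii) that $E^u_w\oplus E^s_w=\mathbb{R}^n$; and (iii) continuity of $w\mapsto E^u_w,E^s_w$. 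The hyperbolic estimates then come for free: iterating $m_{\mathcal{C},w}(G)\ge\mu>1$ gives $|DG^m v|\ge\mu^m|v|$ for $v\in E^u_w$, and iterating $m'_{\mathcal{C},w}(G)\ge\mu>1$ gives the contraction $|DG^{-m}v|\ge\mu^m|v|$ on $E^s_w$.

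The main obstacle is step (i): proving that the nested cone images collapse onto a subspace of the correct dimension. I expect the cleanest route is to show that uniform strict cone invariance makes $DG$ act as a uniform contraction in the Hilbert (projective) metric on $\mathcal{C}$; the co-expansion $m'_{\mathcal{C},w}(G)>1$ guarantees that directions transverse to the core are uniformly expelled, so the projective diameter of $DG^k_{G^{-k}(w)}\big(\mathcal{C}_{G^{-k}(w)}\big)$ tends to zero geometrically, forcing the forward intersection to be a single linear subspace; a dimension count using that the core dimension is locally constant and preserved along orbits pins its dimension to $\dim E_w$. The same quantitative, $w$-uniform contraction yields the continuity in (iii), and complementarity (ii) follows since a nonzero vector in $E^u_w\cap E^s_w$ would be simultaneously expanded by every $DG^m$ and every $DG^{-m}$, which is impossible, whence $\dim E^u_w+\dim E^s_w=n$ gives the direct sum. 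Finally I would transfer the splitting and estimates from $G=F^N$ back to $F$ as noted at the outset.
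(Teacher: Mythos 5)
This lemma is not proved in the paper at all: it is quoted verbatim as Theorem 1.4 of the cited reference \cite{Newhouse} and used as a black box, so there is no in-paper argument to compare your proposal against. Judged on its own terms, your sketch follows the standard route to this cone-field criterion (adapted norm plus the unit cone of the splitting for necessity; nested forward images of cones and of their complements, collapsing to the unstable and stable bundles, for sufficiency), and the overall architecture is sound, including the reduction to $G=F^N$ and the transfer back to $F$ via uniqueness of the splitting and commutation of $F$ with $G$.

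Two points deserve attention. First, you list the invariance $DG_w(\mathcal{C}_w)\subset\mathcal{C}_{G(w)}$ among your working hypotheses, but the lemma does not assume it; it must be \emph{derived} from expansion plus co-expansion. The derivation is one line and you should include it: if $v\in\mathcal{C}_w$ had $DG_wv\notin\mathcal{C}_{G(w)}$, then co-expansion applied to $DG_wv$ gives $|v|=|DG^{-1}_{G(w)}DG_wv|\geq\mu|DG_wv|$, while expansion gives $|DG_wv|\geq\mu|v|$, forcing $\mu^2\leq1$, a contradiction since $\mu>1$. Without this observation the nestedness of your forward and backward intersections is unjustified. Second, the paper's notion of cone field does not require continuity in $w$, so in the sufficiency direction you cannot lean on continuity of $\mathcal{C}$ itself when establishing continuity of the limit bundles; the continuity has to come, as you indicate, from the $w$-uniform geometric rate at which the projective diameters of the nested cone images shrink, combined with a compactness argument on $\Lambda$. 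The remaining step you flag as the main obstacle --- that the nested intersection is an honest linear subspace of the core dimension --- is genuinely where the work lies, and either the Hilbert-metric contraction you propose or the usual graph-transform formulation (cones as graphs of linear maps $E\to E^c$ of norm at most one, with $DG$ inducing a contraction on that space of linear maps) closes it; as written your text is a plan rather than a proof of that step.
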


Next, we introduce the symbolic dynamics \cite{Robinson}. Set $S_0:=\{1,2,...,m\},\ m\geq2$. Let
$$\textstyle\sum_m:=\{\al=(...,a_{-2},a_{-1},a_0,a_1,a_2,...):\ a_i\in S_0,\ \ i\in\mathbb{Z}\}$$
be the two-sided sequence space, where the distance on $\sum_m$ is defined by
$ d(\al,\beta)=\sum^{\infty}_{i=-\infty}\frac{d(a_i,b_i)}{2^{|i|}},$
 for $\al=(...,a_{-2},a_{-1},a_0,a_1,a_2,...)$, $\beta=(...,b_{-2},b_{-1},b_0,b_1,b_2,...)\in\sum_m$, and
$d(a_i,b_i)=1$ if $a_i\neq b_i$, and $d(a_i,b_i)=0$ if $a_i=b_i$,
$i\in\mathbb{Z}$. The shift map $\sigma:\sum_m\to\sum_m$ is defined by
$\sigma(\al)=(...,b_{-2},b_{-1},b_0,b_1,b_2...)$ for any $\al=(...,a_{-2},a_{-1},a_0,a_1,a_2...)\in\sum_m$, where $b_{i}=a_{i+1}$ for any $i\in\mathbb{Z}$. We call $(\sum_m,\sigma)$ is the two-sided symbolic dynamical
system on $m$ symbols, or simply two-sided fullshift on $m$ symbols.

Two definitions of chaos are introduced, which will be useful
in the sequel.

\begin{definition}\cite{LiYorke} Let $(X,d)$ be a metric space,
$F:X\rightarrow X$ a map, and $S$ a subset of $X$ with at least two
points. Then $S$ is called a scrambled set of $F$ if for any two
distinct points $x,y\in S,$
$$\liminf_{n\to\infty}d(F^{n}(x),F^{n}(y))=0,\ \limsup_{n\to\infty}d(F^{n}(x),F^{n}(y))>0.$$
The map $F$ is said to be chaotic in the sense of Li-Yorke if there
exists an uncountable scrambled set $S$ of $F$.
\end{definition}

Let $(X,d)$ be a metric space. A continuous map $F:X\to X$ is said to be topologically transitive
if for any two open subsets $U$ and $V$ of $X$, there exists a positive integer $m$ such
that $F^m(U)\cap V\neq\emptyset$; $F$ is said
to have sensitive dependence on initial conditions in $X$ if there exists
a positive constant $\de$ such that for any point $x\in X$ and any neighborhood $U$ of $x$,
there exist $y\in U$ and a positive integer $m$ such that $d(F^m(x),F^m(y))>\de$.

\begin{definition} \cite{devaney} Let $(X,d)$ be a
metric space. A map $F: V \subset X \to V$ is said to be chaotic on
$V$ in the sense of Devaney if
\begin{itemize}
\item [\rm {(i)}] the set of the periodic points of $F$ is dense in $V$;
\item [\rm {(ii)}] $F$ is topologically transitive in $V$;
\item [\rm {(iii)}] $F$ has sensitive dependence on initial conditions in $V$.
\end{itemize}
\end{definition}

In the above definition, condition $\rm{(iii)}$ is redundant if $F$
is continuous in $V$ by the result of \cite{banksbrooks}, where $V$ is an infinite set.

\begin{lemma} \cite[Theorem 2.2]{ShiJuChen} \label{chaoticshift}
The fullshift on finite alphabet is chaotic in the sense of both of Li-Yorke and Devaney.
\end{lemma}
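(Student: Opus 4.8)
The plan is to treat the two notions separately: Devaney chaos first, where the argument reduces to bookkeeping on the metric $d$, and then the construction of an uncountable scrambled set for Li--Yorke chaos, which is the more delicate part. Throughout I write $\Sigma_m$ for the symbol space, and I will repeatedly use the basic estimate that if two sequences $\al,\be$ agree on all coordinates with $|i|\le N$, then $d(\al,\be)\le\sum_{|i|>N}2^{-|i|}=2^{-N+1}$, so proximity in $d$ is controlled by agreement on a central window.

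For Devaney chaos I would verify density of periodic points and topological transitivity; sensitivity is then automatic by the Banks--Brooks result quoted above, since $\Sigma_m$ is infinite. For density, given $\al\in\Sigma_m$ and $\ep>0$, I choose $N$ with $2^{-N+1}<\ep$ and let $\be$ be the point obtained by repeating the finite block $(a_{-N},\dots,a_N)$ periodically in both directions; then $\si^{2N+1}(\be)=\be$ and $\be$ agrees with $\al$ on $|i|\le N$, so $d(\al,\be)<\ep$. For transitivity, given nonempty open sets $U,V$ I pick cylinders $[a_{-k}\cdots a_k]\subset U$ and $[b_{-l}\cdots b_l]\subset V$ and build a point $x$ with $x_i=a_i$ for $|i|\le k$ and $x_{m+j}=b_j$ for $|j|\le l$, where $m>k+l$ removes any conflict; then $x\in U$ and $\si^m(x)\in V$, so $\si^m(U)\cap V\neq\emptyset$.

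For Li--Yorke chaos I would construct an uncountable scrambled set using the two symbols $0,1\in S_0$ (here $m\ge2$). Fix an uncountable family $\mathcal{F}\subset\{0,1\}^{\mathbb{N}}$ any two distinct members of which differ in infinitely many coordinates; such a family exists, for instance by taking one representative from each class of the relation ``eventually equal,'' or explicitly from an almost disjoint family of infinite subsets of $\mathbb{N}$. To each $s=(s_1,s_2,\dots)\in\mathcal{F}$ I associate the point $x^s$ defined on the nonnegative coordinates by the concatenation
\beqq
x^s=s_1\,0^{1}\,s_2\,0^{2}\,s_3\,0^{3}\cdots,
\eeqq
and by $x^s_i=0$ for $i<0$, and I set $S=\{x^s:\ s\in\mathcal{F}\}$. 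Since $s\mapsto x^s$ is injective, $S$ is uncountable. The key observation is that the positions of the ever-lengthening zero blocks $0^{k}$ depend only on the common block pattern, not on $s$: centering a shift $\si^n$ inside the $k$-th zero block forces $\si^n(x^s)$ and $\si^n(x^t)$ to agree on a central window whose length tends to $\infty$ with $k$, whence $\liminf_n d(\si^n(x^s),\si^n(x^t))=0$. On the other hand, because $s$ and $t$ differ in infinitely many coordinates, the sequences $x^s,x^t$ disagree at infinitely many positions $p$; taking $n=p$ places a disagreement at coordinate $0$, so $d(\si^n(x^s),\si^n(x^t))\ge1$ for infinitely many $n$ and therefore $\limsup_n d(\si^n(x^s),\si^n(x^t))\ge1>0$. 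Hence $S$ is scrambled.

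I expect the main obstacle to be the Li--Yorke part, precisely the task of designing a single construction for which \emph{both} scrambled conditions hold for \emph{every} pair drawn from an uncountable set: the $\liminf$ condition requires the long zero-runs to be aligned across all points (arranged by making the block lengths independent of $s$), while the $\limsup$ condition requires pairwise infinitely many disagreements (arranged through the choice of $\mathcal{F}$). Checking that these two requirements are simultaneously satisfiable, and that the resulting $S$ is genuinely uncountable, is the crux; the Devaney conditions, by contrast, all follow from the elementary window estimate on $d$.
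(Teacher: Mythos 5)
Your argument is correct, but note that the paper offers no proof of this lemma at all: it is imported verbatim as \cite[Theorem 2.2]{ShiJuChen}. So your proposal is not a variant of the paper's proof but a self-contained substitute for the citation. The Devaney part is the standard one (periodic repetition of a central block for density, block-splicing for transitivity, and sensitivity via the redundancy result of \cite{banksbrooks} quoted after Definition 2.3, which applies since $\Sigma_m$ is infinite), and your window estimate $d(\al,\be)\le 2^{-N+1}$ for agreement on $|i|\le N$ is the right quantitative input. The Li--Yorke part is the genuinely nontrivial piece, and your construction works: the lengths and positions of the padding blocks $0^k$ are independent of $s$, so centering $\si^n$ in the $k$-th block forces agreement on a window of radius about $k/2$ and gives $\liminf=0$ along that subsequence, while choosing $\mathcal{F}$ so that distinct members differ in infinitely many coordinates places a disagreement at coordinate $0$ infinitely often and gives $\limsup\ge1$; injectivity of $s\mapsto x^s$ makes $S$ uncountable. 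Two cosmetic points you should fix before writing this up: the alphabet is $S_0=\{1,\dots,m\}$, so replace the symbols $0,1$ by, say, $1,2$ (and the padding blocks by constant blocks of the symbol $1$); and in the transitivity argument the shift exponent should not be called $m$, which is already the number of symbols.
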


\bigskip


\section{Smale horseshoe in three-dimensional space}

In this section, we study the parameter regions of the maps \eqref{sys1} such that there exist invariant sets on which the maps are uniformly hyperbolic and topologically conjugate with fullshift on finite alphabet.

For map \eqref{sys1}, consider the situations that the polynomials $p(x)$, $q(y)$, and $r(z)$ have at least two distinct real zeros.

The polynomial $p$ can be written as
$$p(x)=(x-\al_1)^{m_1}\cdots(x-\al_{r_1})^{m_{r_1}}(y^2+\be_{r_1+1}y+\ga_{r_1+1})^{m_{r_1+1}}\cdots(y^2+\be_{s_1}y+\ga_{s_1})^{m_{s_1}},$$
where $$m_i\geq1,1\leq i\leq s_1;\
\sum^{r_1}_{i=1}m_i+\sum^{s_1}_{i=r_1+1}2m_i=d_1;\ \be^2_j-4\ga_j<0,\ r_1+1\leq
j\leq s_1,$$ and $\al_1,...,\al_{r_1}$ are real zeros of $p(x)$ with
$\al_1<\cdots<\al_{r_1}$.

The polynomial $q(y)$ can be represented as follows:
$$q(y)=(y-\xi_1)^{n_1}\cdots(y-\xi_{r_2})^{n_{r_2}}(y^2+\eta_{r_2+1}y+\zeta_{r_2+1})^{n_{r_2+1}}\cdots(y^2+\eta_{s_2}y+\zeta_{s_2})^{n_{s_2}},$$
where $$n_i\geq1,1\leq i\leq s_2;\
\sum^{r_2}_{i=1}n_i+\sum^{s_2}_{i=r_2+1}2n_i=d_2;\ \eta^2_j-4\zeta_j<0,\ r_2+1\leq
j\leq s_2,$$ and $\xi_1,...,\xi_{r_2}$ are real zeros of $q(y)$ with
$\xi_1<\cdots<\xi_{r_2}$.

The polynomial $r(z)$ can be expressed as follows:
$$r(z)=(z-\kappa_1)^{l_1}\cdots(z-\kappa_{r_3})^{l_{r_3}}(z^2+\mu_{r_3+1}z+\nu_{r_3+1})^{l_{r_3+1}}\cdots(z^2+\mu_{s_3}z+\nu_{s_3})^{l_{s_3}},$$
where $$l_i\geq1,1\leq i\leq s_3;\
\sum^{r_3}_{i=1}l_i+\sum^{s_3}_{i=r_3+1}2l_i=d_3;\ \mu^2_k-4\nu_k<0,\ r_3+1\leq
k\leq s_3,$$ and $\kappa_1,...,\kappa_{r_3}$ are real zeros of $r(z)$ with
$\kappa_1<\cdots<\kappa_{r_3}$.

It is easy to obtain that
$$\frac{p'(x)}{p(x)}=\sum^{r_1}_{i=1}\frac{m_i}{x-\al_{i}}+
\sum^{s_1}_{i=r_1+1}\frac{m_i(2x+\be_i)}{x^2+\be_ix+\ga_i},$$
$$\frac{q'(y)}{q(y)}=\sum^{r_2}_{i=1}\frac{n_i}{y-\xi_{i}}+
\sum^{s_2}_{i=r_2+1}\frac{n_i(2y+\eta_i)}{y^2+\eta_iy+\zeta_i}.$$
$$\frac{r'(z)}{r(z)}=\sum^{r_3}_{i=1}\frac{l_i}{z-\kappa_{i}}+
\sum^{s_3}_{i=r_3+1}\frac{l_i(2z+\mu_i)}{z^2+\mu_iz+\nu_i}.$$
So,
\beq\label{polyineq1}
\lim_{x\to\al^{+}_{i_0}}\frac{p'(x)}{p(x)}=\infty,\ \lim_{x\to\al^{-}_{i_0+1}}\frac{p'(x)}{p(x)}=-\infty;
\eeq
\beq\label{polyineq2}
\lim_{y\to\xi^{+}_{j_0}}\frac{q'(y)}{q(y)}=\infty,\ \lim_{y\to\xi^{-}_{j_0+1}}\frac{q'(y)}{q(y)}=-\infty;
\eeq
\beq\label{polyineq3}
\lim_{z\to\kappa^{+}_{k_0}}\frac{r'(z)}{r(z)}=\infty,\ \lim_{z\to\kappa^{-}_{k_0+1}}\frac{r'(z)}{r(z)}=-\infty.
\eeq

\begin{lemma}\label{monotone} \cite[Lemma 3.1]{Zhang2010}
\begin{itemize}
\item [(i)] Suppose that there exists $i_0$, $1\leq i_0<r_1$, such that
$\al_{i_0}\geq0$. Then, there exist $\delta_1,\delta'_1\in(\al_{i_0},\al_{i_0+1})$
such that $p^{(m_{i_0})}(\al_{i_0})p'(x)>0$ for all $x\in(\al_{i_0},\delta_1]$,
$p^{(m_{i_0})}(\al_{i_0})p'(x)<0$ for all $x\in[\delta'_1,\al_{i_0+1})$.
\item [\rm {(ii)}] Suppose that there exists $j_0$, $1\leq j_0<r_2$, such that
$\xi_{j_0}\geq0$. Then, there exist $\delta_2,\delta'_2\in(\xi_{j_0},\xi_{j_0+1})$
such that $q^{(n_{j_0})}(\xi_{j_0})q'(y)>0$ for all $y\in(\xi_{j_0},\delta_2]$,
$q^{(n_{j_0})}(\xi_{j_0})q'(y)<0$ for all $y\in[\delta'_2,\xi_{j_0+1})$.
\item [\rm {(iii)}] Suppose that there exists $k_0$, $1\leq k_0<r_3$, such that
$\kappa_{k_0}\geq0$. Then, there exist $\delta_3,\delta'_3\in(\kappa_{k_0},\kappa_{k_0+1})$
such that $r^{(l_{k_0})}(\kappa_{k_0})r'(z)>0$ for all $z\in(\kappa_{k_0},\delta_3]$,
$r^{(l_{k_0})}(\kappa_{k_0})r'(z)<0$ for all $z\in[\delta'_3,\kappa_{k_0+1})$.
\end{itemize}
\end{lemma}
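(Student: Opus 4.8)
The plan is to reduce the whole statement to two ingredients: the sign behaviour of the logarithmic derivative $p'/p$ at the two endpoints of the interval $(\alpha_{i_0},\alpha_{i_0+1})$, which is already encoded in \eqref{polyineq1}, together with the fact that $p$ keeps a single sign throughout that open interval. I would prove only part (i); parts (ii) and (iii) are verbatim the same argument applied to $q$ on $(\xi_{j_0},\xi_{j_0+1})$ and to $r$ on $(\kappa_{k_0},\kappa_{k_0+1})$, using \eqref{polyineq2} and \eqref{polyineq3} in place of \eqref{polyineq1}.

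First I would record the local picture at $x=\alpha_{i_0}$. Since $\alpha_{i_0}$ is a real zero of multiplicity $m_{i_0}$, I factor $p(x)=(x-\alpha_{i_0})^{m_{i_0}}h(x)$, where $h$ absorbs the remaining linear factors and all of the (everywhere strictly positive) irreducible quadratic factors, so that $h$ is continuous and $h(\alpha_{i_0})\neq0$. By the general Leibniz rule, each term in $p^{(m_{i_0})}(\alpha_{i_0})$ that carries a derivative of $(x-\alpha_{i_0})^{m_{i_0}}$ of order strictly less than $m_{i_0}$ vanishes at $\alpha_{i_0}$, leaving $p^{(m_{i_0})}(\alpha_{i_0})=m_{i_0}!\,h(\alpha_{i_0})$; hence $\mathrm{sign}\,p^{(m_{i_0})}(\alpha_{i_0})=\mathrm{sign}\,h(\alpha_{i_0})$. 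The key step is then to pin down the sign of $p$ on the whole interval: between two consecutive real zeros $p$ has no root (the intervening quadratic factors never vanish), so by the intermediate value theorem $p$ has constant sign on $(\alpha_{i_0},\alpha_{i_0+1})$. To identify this sign, note that for $x$ just to the right of $\alpha_{i_0}$ one has $(x-\alpha_{i_0})^{m_{i_0}}>0$ and, by continuity, $\mathrm{sign}\,h(x)=\mathrm{sign}\,h(\alpha_{i_0})$, so $\mathrm{sign}\,p(x)=\mathrm{sign}\,p^{(m_{i_0})}(\alpha_{i_0})$ on all of $(\alpha_{i_0},\alpha_{i_0+1})$.

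Finally I would combine this with the blow-up of the logarithmic derivative. By \eqref{polyineq1} we have $p'(x)/p(x)\to+\infty$ as $x\to\alpha_{i_0}^{+}$ and $p'(x)/p(x)\to-\infty$ as $x\to\alpha_{i_0+1}^{-}$, so there exist $\delta_1,\delta'_1\in(\alpha_{i_0},\alpha_{i_0+1})$ with $p'(x)/p(x)>0$ on $(\alpha_{i_0},\delta_1]$ and $p'(x)/p(x)<0$ on $[\delta'_1,\alpha_{i_0+1})$. Multiplying the sign of $p'/p$ by the constant value $\mathrm{sign}\,p(x)=\mathrm{sign}\,p^{(m_{i_0})}(\alpha_{i_0})$ yields $\mathrm{sign}\,p'(x)=\mathrm{sign}\,p^{(m_{i_0})}(\alpha_{i_0})$ on $(\alpha_{i_0},\delta_1]$ and the opposite sign on $[\delta'_1,\alpha_{i_0+1})$, which is precisely $p^{(m_{i_0})}(\alpha_{i_0})p'(x)>0$ and $p^{(m_{i_0})}(\alpha_{i_0})p'(x)<0$ on the two subintervals.

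I do not expect a genuine obstacle here; the only point demanding care is the bookkeeping of signs, in particular verifying that $p$ cannot change sign inside the open interval and that this common sign is inherited from $h(\alpha_{i_0})$ (equivalently, from $p^{(m_{i_0})}(\alpha_{i_0})$) rather than from the far endpoint $\alpha_{i_0+1}$. I would also note that the hypothesis $\alpha_{i_0}\geq0$ plays no role in the monotonicity statement itself and is recorded only because it is needed later, when the lemma is applied to locate the horseshoe in the region where the coordinates are non-negative.
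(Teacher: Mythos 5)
Your proof is correct and uses exactly the machinery the paper assembles immediately before the lemma (the partial-fraction identity for $p'/p$ and the limits \eqref{polyineq1}--\eqref{polyineq3}), combined with the observation that $p$ keeps a constant sign between consecutive real zeros and that this sign is that of $p^{(m_{i_0})}(\al_{i_0})=m_{i_0}!\,h(\al_{i_0})$; the paper itself does not reproduce a proof but defers it to \cite[Lemma 3.1]{Zhang2010}, so yours is the intended argument. Your closing remark that the hypothesis $\al_{i_0}\geq 0$ is irrelevant to the monotonicity statement and only matters for the later placement of the horseshoe in the non-negative region is also accurate.
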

\medskip

\subsection{The maps with the dimension of the unstable subspace equals to one}

In this subsection, we investigate the polynomial maps \eqref{sys1} in the following form, $F:(x,y,z)\in\mathbb{R}^3\to(f_1,f_2,f_3)\in\mathbb{R}^3:$
\begin{equation}\label{sys3}
\left\{
  \begin{array}{ll}
    f_1(x,y,z)=a_1p(x)+a_2y+a_3z \\
    f_2(x,y,z)=b_1x \\
    f_3(x,y,z)=c_2y
  \end{array}
\right.,
\end{equation}
where $a_1,a_2,a_3,b_1,c_2$ are real parameters and $a_3b_1c_2\neq0$.

The derivative of \eqref{sys3} is
 \begin{equation}\label{derivativesys3}
DF=\left(
  \begin{array}{ccc}
    a_1p'(x) & a_2 & a_3 \\
    b_1 & 0 & 0 \\
    0 & c_2 &0\\
  \end{array}
\right).
\end{equation}
The inverse of $F$ is $H:(x,y,z)\in\mathbb{R}^3\to(h_1,h_2,h_3)\in\mathbb{R}^3:$
\begin{equation}\label{inversesys3}
\left\{
  \begin{array}{ll}
    h_1(x,y,z)=\frac{y}{b_1}\\
    h_2(x,y,z)=\frac{z}{c_2} \\
    h_3(x,y,z)=\frac{1}{a_3}\bigg(x-a_1p(y/b_1)-\frac{a_2}{c_2}z\bigg)
  \end{array}
\right.
\end{equation}
and the derivative of $H$ is
\begin{equation}\label{inversederivativesys3}
DF^{-1}=\left(
  \begin{array}{ccc}
0& \frac{1}{b_1} &0 \\
0 & 0 & \frac{1}{c_2} \\
    \frac{1}{a_3} & -\frac{a_1}{b_1a_3}p'(y/b_1) & -\frac{a_2}{a_3c_2} \\
\end{array}
\right).
\end{equation}
So, the determinant of the Jacobian of this type of maps is constant.

To start our work, an example is introduced as follows:
\begin{equation*}
\left\{
  \begin{array}{ll}
    f_1(x,y,z)=a_1x(1-x)+a_2y+a_3z \\
    f_2(x,y,z)=b_1x \\
    f_3(x,y,z)=c_2y
  \end{array}
\right..
\end{equation*}
This example can be thought of as the generalization of the H\'{e}non map in three-dimensional spaces.

\begin{theorem}\label{onepositive-zero}
Suppose that there are $1\leq i_0<r_1$ such that $0=\al_{i_0}<\al_{i_0+1}$ and $m_{i_0}=m_{i_0+1}=1$, that is, there are two distinct non-negative zeros of $p(x)$ with the multiplicity equals to one. If
\beq\label{assumption-1-zero}
\max\{a_3c_2b_1\al_{i_0+1},
a_2b_1\al_{i_0+1},a_2b_1\al_{i_0+1}+a_3c_2b_1\al_{i_0+1}\}\leq0\ \mbox{and}\ a_1p'(0)>0,
\eeq
then, for fixed $a_2$, $a_3$, $b_1$, and $c_2$ with $|c_2|<1$, and sufficiently large $|a_1|$, there exist a Smale horseshoe and a uniformly hyperbolic invariant set $\Lambda$ for the map \eqref{sys3}, on which $F$ is topologically conjugate to two-sided fullshift on two  symbols. Consequently, they are chaotic in the sense of both Li-Yorke and Devaney.
\end{theorem}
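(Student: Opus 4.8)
The plan is to realize the classical Smale--horseshoe mechanism in three dimensions: use Newhouse's cone criterion (Lemma \ref{unifhyp}) to certify uniform hyperbolicity, use the monotonicity of $p'$ near its simple zeros (Lemma \ref{monotone}) to locate two ``good'' strips free of critical behaviour, and finally invoke Lemma \ref{chaoticshift} to pass from a conjugacy with the two-sided $2$-shift to Li--Yorke and Devaney chaos. The one-dimensionality of the unstable direction is already visible in $DF$: for large $|a_1|$ the entry $a_1p'(x)$ dominates, so $e_1=(1,0,0)^{T}$ is strongly expanded, while $|c_2|<1$ together with the structure of $DF^{-1}$ forces the plane $\mathrm{span}(e_2,e_3)$ to contract. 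Accordingly I would work with an unstable cone field whose core is a thin one-dimensional cone about $e_1$, and with the trapping box $U=I_x\times I_y\times I_z$, where $I_x=[0,\al_{i_0+1}]$ joins the two simple zeros $\al_{i_0}=0$ and $\al_{i_0+1}$, $I_y$ is bounded by $0$ and $b_1\al_{i_0+1}$, and $I_z$ is bounded by $0$ and $c_2b_1\al_{i_0+1}$. The three quantities in the hypothesis \eqref{assumption-1-zero} are precisely $a_2\cdot(b_1\al_{i_0+1})$, $a_3\cdot(c_2b_1\al_{i_0+1})$ and their sum, i.e.\ the extreme values of the correction $a_2y+a_3z$, and demanding them to be $\le 0$ keeps $f_1=a_1p(x)+a_2y+a_3z$ from dropping below the level needed for $F(U)$ to sit correctly against $U$, while $f_2=b_1x$ and $f_3=c_2y$ automatically return the other two coordinates to $I_y$ and $I_z$.

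Next I would carve out the two symbols. Since $m_{i_0}=m_{i_0+1}=1$ both zeros are simple, so $p'$ vanishes at neither; combined with $a_1p'(0)>0$, Lemma \ref{monotone}(i) supplies $\de_1,\de_1'\in(0,\al_{i_0+1})$ on which $a_1p$ is strictly monotone (increasing on $(0,\de_1]$, decreasing on $[\de_1',\al_{i_0+1})$) and $|p'|$ is bounded below. I take the two strips $V_1=[0,\de_1]\times I_y\times I_z$ and $V_2=[\de_1',\al_{i_0+1}]\times I_y\times I_z$, which straddle the hump of $a_1p$ and avoid the interior critical point where $p'=0$. For $|a_1|$ large, the $f_1$-image of each strip is stretched across an interval far longer than $\al_{i_0+1}$, so each $F(V_i)$ crosses $U$ completely in the expanding direction while being squeezed in the $(y,z)$-directions; hence $F(V_i)\cap V_j\ne\emptyset$ for all $i,j\in\{1,2\}$, the full covering relation that encodes the $2$-shift. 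The maximal invariant set $\Lambda=\bigcap_{n\in\mathbb{Z}}F^{n}(U)$ then lies in $V_1\cup V_2$, and the itinerary map $w\mapsto(s_i)_{i\in\mathbb{Z}}$ defined by $F^{i}(w)\in V_{s_i}$ is the candidate conjugacy.

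To verify hyperbolicity I would estimate the cone quantities over $V_1\cup V_2$. With the cone field above, $m_{\mathcal{C},w}(F)$ is bounded below by $|a_1p'(x)|$ minus the bounded contributions of the entries $a_2,a_3,b_1$, and hence exceeds $1$ once $|a_1|$ is large; dually, reading off $DF^{-1}$, $m'_{\mathcal{C},w}(F)$ is bounded below by a constant larger than $1$ governed by $1/|c_2|>1$, with the pure $e_3$ direction as the critical case and with no dependence on $a_1$. Thus $F$ itself ($N=1$ in Lemma \ref{unifhyp}) is both expanding and co-expanding on a cone field of constant core dimension one over $\Lambda$, so $\Lambda$ is uniformly hyperbolic. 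Hyperbolicity then upgrades the covering relation into a homeomorphism conjugating $F|_{\Lambda}$ to the two-sided fullshift on two symbols, and Lemma \ref{chaoticshift} yields chaos in the senses of both Li--Yorke and Devaney.

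I expect the main obstacle to be the tension between the two demands on the strips: they must avoid the interior zero of $p'$, so that the expansion and the cone estimates stay uniform, yet their images must still cover $U$ completely in the unstable direction with $f_1$ kept on the correct side of $\partial I_x$. Reconciling these — choosing $\de_1,\de_1'$ and the exact $y$- and $z$-extents so that the covering relation, the sign constraints \eqref{assumption-1-zero}, and the cone invariance $DF(\mathcal{C}_w)\subset\mathcal{C}_{F(w)}$ hold simultaneously for all large $|a_1|$ and for every admissible sign pattern of $a_2,a_3,b_1,c_2$ — is the delicate parameter-bookkeeping at the heart of the argument; the rest follows the standard horseshoe template.
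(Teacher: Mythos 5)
Your proposal follows essentially the same route as the paper: the same box $U=[0,\al_{i_0+1}]\times[0,b_1\al_{i_0+1}]\times[0,c_2b_1\al_{i_0+1}]$, the two strips obtained from Lemma \ref{monotone}(i) giving the two symbols, a cone field with one-dimensional core along $e_1$ checked via $m_{\mathcal{C},w}$ and $m'_{\mathcal{C},w}$ with $\lambda=1/|c_2|>1$ and $N=1$ in Lemma \ref{unifhyp}, and Lemma \ref{chaoticshift} for the chaos conclusions. The only slight imprecision is the claim that the co-expansion bound has ``no dependence on $a_1$'': in the paper's estimate the case where the dominant component of a vector outside the cone lies along $e_2$ also requires $|a_1p'|$ to be large, though the binding constant is indeed $1/|c_2|$ from the $e_3$-dominant case.
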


For convenient, assume that $[a,b]$ and $[b,a]$ represent the same interval. Set $$U:=[0,\al_{i_0+1}]\times
[0,b_1\al_{i_0+1}]\times
[0,c_2b_1\al_{i_0+1}],\ \lambda:=\frac{1}{|c_2|}.$$

\begin{lemma}\label{connectedtwo-zero}
Under the assumptions of Theorem \ref{onepositive-zero}, $F(U)\cap U$ and $F^{-1}(U)\cap U$ have two non-empty connected components, respectively.
\end{lemma}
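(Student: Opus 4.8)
The plan is to deduce both assertions from a single computation. Set
\[
D:=U\cap F^{-1}(U)=\{w\in U:\ F(w)\in U\}.
\]
Then $F^{-1}(U)\cap U=D$ by definition, and since $F$ is a bijection of $\mathbb{R}^3$ (its polynomial inverse $H$ is exhibited in \eqref{inversesys3}),
\[
F(U)\cap U=F(U)\cap F\!\left(F^{-1}(U)\right)=F\!\left(U\cap F^{-1}(U)\right)=F(D).
\]
Because $F$ is a homeomorphism and the number of connected components is a topological invariant, it suffices to prove that $D$ consists of \emph{exactly} two non-empty connected components; the two components of $F(U)\cap U$ are then their images under $F$.

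Next I would strip the trivial constraints defining $D$. With $F(x,y,z)=(a_1p(x)+a_2y+a_3z,\,b_1x,\,c_2y)$ and the convention $[a,b]=[b,a]$, the second and third coordinate conditions $b_1x\in[0,b_1\al_{i_0+1}]$ and $c_2y\in[0,c_2b_1\al_{i_0+1}]$ hold automatically for $(x,y,z)\in U$, so that
\[
D=\{(x,y,z)\in U:\ a_1p(x)+a_2y+a_3z\in[0,\al_{i_0+1}]\}.
\]
Writing $\ell:=a_2y+a_3z$, I would observe that assumption \eqref{assumption-1-zero} forces $\ell\le0$ on $U$: the linear function $\ell$ attains its extrema at the corners of the $(y,z)$-rectangle, where its values are $0,\ a_2b_1\al_{i_0+1},\ a_3c_2b_1\al_{i_0+1},$ and $a_2b_1\al_{i_0+1}+a_3c_2b_1\al_{i_0+1}$, the last three being $\le0$. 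Hence the target band $[-\ell,\ \al_{i_0+1}-\ell]$ for $a_1p(x)$ lies inside $[0,\ \al_{i_0+1}-\min_U\ell]$, a bounded interval independent of $a_1$.

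I would then analyse the one-dimensional profile $x\mapsto a_1p(x)$ on $[0,\al_{i_0+1}]$. Since $\al_{i_0}=0$ and $\al_{i_0+1}$ are consecutive simple zeros, $p$ keeps a constant sign on $(0,\al_{i_0+1})$, equal near $0$ to the sign of $p'(0)$; with $a_1p'(0)>0$ this gives $a_1p>0$ on the open interval and $a_1p=0$ at its endpoints. Because $m_{i_0}=1$, Lemma \ref{monotone}(i) (using $p^{(m_{i_0})}(0)=p'(0)$) produces $\delta_1\le\delta_1'$ in $(0,\al_{i_0+1})$ with $a_1p$ strictly increasing on $[0,\delta_1]$ and strictly decreasing on $[\delta_1',\al_{i_0+1}]$. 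On the compact middle piece $[\delta_1,\delta_1']$ one has $\min a_1p=|a_1|\,\min_{[\delta_1,\delta_1']}|p|>0$, which tends to $+\infty$ as $|a_1|\to\infty$; hence for $|a_1|$ large it exceeds $\al_{i_0+1}-\min_U\ell$, the fixed upper endpoint of every admissible band. Consequently, for each fixed $(y,z)$ the band $[-\ell,\al_{i_0+1}-\ell]\subset[0,\infty)$ is met by $a_1p$ on exactly one subinterval of the increasing branch $[0,\delta_1]$ and exactly one subinterval of the decreasing branch $[\delta_1',\al_{i_0+1}]$, and on no point of $[\delta_1,\delta_1']$. Strict monotonicity on the two branches makes the four endpoints continuous in $(y,z)$ (monotone inversion / implicit function theorem), and the two $x$-slabs stay disjoint since $a_1p>\al_{i_0+1}-\ell$ at $\delta_1\le\delta_1'$. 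This exhibits $D$ as a disjoint union of exactly two continuously varying, nonempty families of intervals over the connected base rectangle, each of which is connected; whence $F^{-1}(U)\cap U=D$ and $F(U)\cap U=F(D)$ each have exactly two non-empty connected components.

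The main obstacle is the quantitative step: combining Lemma \ref{monotone} with the largeness of $|a_1|$ to guarantee that the graph of $a_1p$ is lifted above the fixed admissible band throughout the entire middle region $[\delta_1,\delta_1']$. This is exactly what keeps the two boundary slabs separated and rules out any spurious third component. The secondary point requiring care is the sign and orientation bookkeeping forced by \eqref{assumption-1-zero} and by the convention $[a,b]=[b,a]$, which is what confines the admissible band to $[0,\infty)$ and thereby matches it to the nonnegative range of $a_1p$ on $[0,\al_{i_0+1}]$.
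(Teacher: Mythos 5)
Your proposal is correct, and it reaches the conclusion by a route that differs in execution from the paper's, while resting on the same geometric picture. The paper argues in the forward direction: it lists the images under $F$ (and under $H=F^{-1}$) of the eight vertices of the box $U$, notes that $F$ is affine in $(y,z)$ and ``parabola-like'' in $x$, and from the vertex data plus \eqref{assumption-1-zero} asserts that $F(U)\cap U$ has two components for $|a_1|$ large, transferring to $F^{-1}(U)\cap U$ via $U\cap F^{-1}(U)=F^{-1}(U\cap F(U))$. You reverse the transfer (analyzing $D=U\cap F^{-1}(U)$ first and taking $F(D)$ afterwards) and replace the vertex-image description by an exact characterization of $D$ as $\{(x,y,z)\in U:\ a_1p(x)\in[-\ell,\al_{i_0+1}-\ell]\}$ with $\ell=a_2y+a_3z\le0$ --- the corner values of $\ell$ that you check are precisely the first coordinates of the paper's vertex images, so the role of \eqref{assumption-1-zero} is identical. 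Your fiberwise count (two intervals per $(y,z)$-slice, coming from the increasing branch on $[0,\delta_1]$ and the decreasing branch on $[\delta_1',\al_{i_0+1}]$ supplied by Lemma \ref{monotone}, with the middle piece lifted above the admissible band once $|a_1|\min_{[\delta_1,\delta_1']}|p|$ exceeds $\al_{i_0+1}-\min_U\ell$) is exactly the quantitative content that the paper leaves implicit in the phrase ``if $|a_1|$ is sufficiently large, then there are two connected components''; your version additionally pins down connectedness of each slab via continuity of the slice endpoints over the connected $(y,z)$-rectangle, and rules out a spurious third component. In short, your argument buys rigor and an explicit largeness threshold for $|a_1|$, at the cost of the paper's quicker (but sketchier) picture-based description; both are sound.
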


\begin{proof}
First, we need to determine the image of the vertices of the cubic $U$ under $F$. The vertices of $U$ are
\beqq
A=(\al_{i_0+1},0,0),
B=(\al_{i_0+1},b_1\al_{i_0+1},0),
C=(0,b_1\al_{i_0+1},0),
D=(0,0,0),
\eeqq
\beqq
A'=(\al_{i_0+1},0,c_2b_1\al_{i_0+1}),
B'=(\al_{i_0+1},b_1\al_{i_0+1},c_2b_1\al_{i_0+1}),
\eeqq
\beqq
C'=(0,b_1\al_{i_0+1},c_2b_1\al_{i_0+1}),
D'=(0,0,c_2b_1\al_{i_0+1}).
\eeqq
The images of these vertices under $F$ are
\beqq
F(A)=(0,b_1\al_{i_0+1},0),
F(B)=(a_2b_1\al_{i_0+1},b_1\al_{i_0+1},c_2b_1\al_{i_0+1}),
\eeqq
\beqq
F(C)=(a_2b_1\al_{i_0+1},0,c_2b_1\al_{i_0+1}),
F(D)=(0,0,0),
\eeqq
\beqq
F(A')=(a_3c_2b_1\al_{i_0+1},b_1\al_{i_0+1},0),
F(B')=(a_2b_1\al_{i_0+1}+a_3c_2b_1\al_{i_0+1},b_1\al_{i_0+1},c_2b_1\al_{i_0+1}),
\eeqq
\beqq
F(C')=(a_2b_1\al_{i_0+1}+a_3c_2b_1\al_{i_0+1},0,c_2b_1\al_{i_0+1}),
F(D')=(a_3c_2b_1\al_{i_0+1},0,0).
\eeqq

From the map \eqref{sys3}, it follows that the transformation is linear with respect to the $y$ and $z$ variables. For the $x$ variable, it is a parabolic-like function for $x\in[0,\al_{i_0+1}]$. By \eqref{assumption-1-zero} and the calculation of the vertices above, one has that if $|a_1|$ is sufficiently large, then there are two connected components of $U\cap F(U)$.

On the other hand, the image of the vertices under $F^{-1}=H$ are
\beqq
H(A)=(0,0,\al_{i_0+1}/a_3),
H(B)=(\al_{i_0+1},0,\al_{i_0+1}/a_3),
H(C)=(\al_{i_0+1},0,0),
H(D)=(0,0,0),
\eeqq
\beqq
H(A')=(0,b_1\al_{i_0+1},\al_{i_0+1}(1-a_2b_1)/a_3),
H(B')=(\al_{i_0+1},b_1\al_{i_0+1},\al_{i_0+1}(1-a_2b_1)/a_3),
\eeqq
\beqq
H(C')=(\al_{i_0+1},b_1\al_{i_0+1},-a_2b_1\al_{i_0+1}/a_3),
H(D')=(0,b_1\al_{i_0+1},-a_2b_1\al_{i_0+1}/a_3).
\eeqq
Since $F$ is diffeomorphic, $U\cap F^{-1}(U)=F^{-1}(U\cap F(U))$ has two connected components. This completes the proof.
\end{proof}

\begin{lemma}\label{hyponeexp-singel}
Under the assumptions of Theorem \ref{onepositive-zero}, the invariant set $\Lambda$ is uniformly hyperbolic.
\end{lemma}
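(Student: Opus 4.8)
The plan is to invoke Lemma~\ref{unifhyp}: it suffices to produce a cone field $\mathcal{C}$ of constant orbit core dimension on $\Lambda$ and an integer $N>0$ for which $F^{N}$ is simultaneously expanding and co-expanding on $\mathcal{C}$. Since uniform hyperbolicity does not depend on the choice of norm, I would measure vectors in the box norm $\|v\|=\max\{|v_1|,|v_2|,|v_3|\}$, which makes the cone estimates transparent. The first step is to locate $\Lambda$ in the $x$-direction. Because $\Lambda\subset U$ with $F(\Lambda)\subset U$, every $w=(x,y,z)\in\Lambda$ has $f_1(w)=a_1p(x)+a_2y+a_3z\in[0,\al_{i_0+1}]$, while $a_2y+a_3z$ stays bounded as $y,z$ range over the fixed edges of $U$. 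Hence $|a_1p(x)|$ is bounded independently of $a_1$, forcing $|p(x)|=O(1/|a_1|)$; since $0=\al_{i_0}$ and $\al_{i_0+1}$ are the only zeros of $p$ in $[0,\al_{i_0+1}]$ and they are simple, the $x$-coordinate of any point of $\Lambda$ is trapped in one of two intervals shrinking to these endpoints, where $|p'(x)|\geq c_0>0$ for a constant $c_0$ depending only on $p$. Thus $|a_1p'(x)|\geq c_0|a_1|\to\infty$ on $\Lambda$, and this uniform lower bound on the dominant entry of $DF$ drives the whole argument.

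I would take the unstable cone field to have the fixed one-dimensional core $E=\mathrm{span}(e_1)$, i.e. $\mathcal{C}_w=\{v:\max\{|v_2|,|v_3|\}\leq|v_1|\}$; constant orbit core dimension then holds trivially, as $\dim E_w\equiv 1$. The forward condition is the easy half. For $v\in\mathcal{C}_w$, the matrix \eqref{derivativesys3} gives $DF_w(v)$ with first component $a_1p'(x)v_1+a_2v_2+a_3v_3$ of size at least $(c_0|a_1|-|a_2|-|a_3|)|v_1|$, second component $b_1v_1$, and third component $c_2v_2$ of size at most $|c_2|\,|v_1|$. For $|a_1|$ large the first component dominates, so $DF_w(\mathcal{C}_w)\subset\mathcal{C}_{F(w)}$ and $m_{\mathcal{C},w}(F)\geq c_0|a_1|-|a_2|-|a_3|\to\infty$; in particular $F$, hence $F^{N}$, is uniformly expanding on $\mathcal{C}$.

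The co-expansion is the real obstacle, and it is exactly where the constant $\lambda=1/|c_2|$ enters. Using \eqref{inversederivativesys3}, I must bound $\|DF^{-1}(v)\|$ from below for $v\notin\mathcal{C}_{F(w)}$, that is, with $\max\{|v_2|,|v_3|\}>|v_1|$. If the second coordinate dominates, the third component $\tfrac{1}{a_3}\bigl(v_1-\tfrac{a_1}{b_1}p'(y/b_1)v_2-\tfrac{a_2}{c_2}v_3\bigr)$ of $DF^{-1}(v)$ carries the large term $-\tfrac{a_1}{b_1a_3}p'v_2$, so $\|DF^{-1}(v)\|$ exceeds a constant multiple of $|a_1|\,|v_2|$, which is as strong as the forward estimate. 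If instead the third coordinate dominates, then the second component of $DF^{-1}(v)$ is $v_3/c_2$, whence $\|DF^{-1}(v)\|\geq|v_3|/|c_2|=\lambda\|v\|$. In the box norm these two cases give $m'_{\mathcal{C},w}(F)\geq\min\{\,\mathrm{const}\cdot|a_1|,\ \lambda\}=\lambda>1$ for $|a_1|$ large, so $F$ is co-expanding on $\mathcal{C}$ with $N=1$. (In the Euclidean norm the third-coordinate case degrades to $\lambda/\sqrt{3}$, which may fail to exceed $1$; this is repaired by passing to $F^{2}$, since $DF^{-1}$ turns a third-coordinate-dominant vector into a second-coordinate-dominant one, which the next application expands by order $|a_1|$.)

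With a cone field of constant core dimension and an $N$ (here $N=1$) for which $F^{N}$ is both expanding and co-expanding, Lemma~\ref{unifhyp} shows that $\Lambda$ is uniformly hyperbolic. The only delicate point is the co-expansion estimate in the weak stable direction, controlled precisely by $\lambda=1/|c_2|>1$, which is why the hypothesis $|c_2|<1$ is imposed.
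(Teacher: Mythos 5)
Your proposal is correct and follows essentially the same route as the paper: the same cone $\{\,|v_1|\geq\max\{|v_2|,|v_3|\}\,\}$ in the box norm, the same localization of the $x$-coordinate near the two simple zeros so that $|a_1p'(x)|$ is uniformly large, the same easy forward expansion from the first row of $DF$, and the same two-case co-expansion estimate in which the second-coordinate-dominant case is driven by the $a_1p'$ entry of $DF^{-1}$ and the third-coordinate-dominant case yields exactly $\lambda=1/|c_2|>1$, all fed into Lemma~\ref{unifhyp} with $N=1$. The only cosmetic difference is that you justify the localization directly from $|p(x)|=O(1/|a_1|)$ on $\Lambda$, where the paper cites Lemma~\ref{monotone} and an explicit lower bound on $|a_1|$.
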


\begin{proof}
By (i) of Lemma \ref{monotone}, assume that $|a_1|$ is large enough such that for $w=(x,y,z)\in U\cap F^{-1}(U)$, one has that $x\in[0,\delta_1]\cup[\delta'_1,\al_{i_0+1}]$.

Since $m_{i_0}=m_{i_0+1}=1$, set $$M_0:=\min_{x\in[0,\delta_1]\cup[\delta'_1,\al_{i_0+1}]}|p'(x)|>0.$$ Suppose that
\beq\label{bound-deriv-1-single}
|a_1|\geq\max\bigg\{\frac{\lambda+|a_2|+|a_3|}{M_0},\ \frac{\lambda|a_3b_1c_2|+|b_1c_2|+|a_2b_1|}{M_0|c_2|}\bigg\}.
\eeq

Now, it is to show that for any $w_0\in U\cap F^{-1}(U)$, one has that $m_{\mathcal{C},w_0}\geq\lambda$.

We introduce the unit cone:
\beqq
K_1(\mathbb{R}, \mathbb{R}^2)=\bigg\{\mathbf{v}=
\left(
\begin{array}{c}
v_0\\
u_0
\end{array}
\right)
:\ v_0\in \mathbb{R},\ u_0=(u_{01},u_{02})^T\in\mathbb{R}^2,\ \mbox{and}\ |u_0|\leq|v_0|\bigg\},
\eeqq
where $|u_0|=\max\{|u_{01}|,|u_{02}|\}$ and $|\mathbf{v}|=\max\{|v_0|,|u_0|\}$.

Denote $
\left(
\begin{array}{c}
v_1\\
u_1
\end{array}
\right)
=DF_{w_0}
\left(
\begin{array}{c}
v_0\\
u_0
\end{array}
\right)$, where $
\left(
\begin{array}{c}
v_0\\
u_0
\end{array}
\right)\in K_1(\mathbb{R},\mathbb{R}^2)$.
By \eqref{derivativesys3},
\beqq
\left\{
  \begin{array}{ll}
v_{1}=a_1p'(x_0)v_0+a_2u_{01}+a_3u_{02}\\
u_{11}=b_1v_{0}\\
u_{12}=c_2u_{01}
  \end{array},
\right.
\eeqq
this, together with \eqref{bound-deriv-1-single}, implies that $$|v_{1}|\geq |v_{0}||a_1p'(x_0)|-|a_2||u_{01}|-|a_3||u_{02}|\geq
 |v_{0}||a_1p'(x_0)|-|a_2||v_{0}|-|a_3||v_{0}|\geq
 \lambda|v_{0}|.$$ Hence, $m_{\mathcal{C},w_0}\geq\lambda>1$.

Next, it is to show that for any point $w_0\in U\cap F(U)$,  $m'_{\mathcal{C},w_0}\geq\lambda$.
For any $
\left(
\begin{array}{c}
v_0\\
u_0
\end{array}
\right)\not\in K_1(\mathbb{R},\mathbb{R}^2)$, denote $
\left(
\begin{array}{c}
v_{-1}\\
u_{-1}
\end{array}
\right)
=DF^{-1}_{w_0}
\left(
\begin{array}{c}
v_0\\
u_0
\end{array}
\right)$, it follows from \eqref{inversederivativesys3} that
\beqq
\left\{
  \begin{array}{ll}
 v_{-1}=\frac{1}{b_1}u_{01} \\
  u_{-11} =\frac{1}{c_2}u_{02}\\
  u_{-12}=  \frac{1}{a_3}v_0-\frac{a_1}{b_1a_3}p'(y_0/b_1)u_{01} -\frac{a_2}{a_3c_2}u_{02}\\
\end{array}.
\right.
\eeqq
If $|u_0|=|u_{01}|$, then by \eqref{bound-deriv-1-single},
\beqq
\begin{split}
|u_{-1}|\geq&|u_{-12}|\geq -\frac{1}{|a_3|}|v_0|+\frac{1}{|b_1a_3|}|a_1p'(y_0/b_1)||u_{01}| -\frac{|a_2|}{|a_3c_2|}|u_{02}|\\
\geq&\frac{1}{|b_1a_3|}|a_1p'(y_0/b_1)||u_{0}|-\frac{1}{|a_3|}|u_0|-\frac{|a_2|}{|a_3c_2|}|u_{0}|\\
\geq&\lambda|u_0|;
\end{split}
\eeqq
if $|u_0|=|u_{02}|$, then $$|u_{-1}|\geq|u_{02}/c_2|=\lambda|u_{0,2}|=\lambda|u_0|.$$
Hence, $m'_{\mathcal{C},w_0}\geq\lambda$.

Thus, by Lemma \ref{unifhyp}, one has that the invariant set $\Lambda$ is uniformly hyperbolic. The proof is completed.
\end{proof}

Therefore, it follows from Lemmas \ref{chaoticshift}, \ref{connectedtwo-zero}, and \ref{hyponeexp-singel} that Theorem \ref{onepositive-zero} holds.

\begin{theorem}\label{onepositive}
Suppose that there are $1\leq i_0<r_1$ such that $0<\al_{i_0}<\al_{i_0+1}$, that is, there are two distinct positive zeros of $p(x)$. Set
\beqq
M_1:=\max\{a_2b_1\al_{i_0}+a_3c_2b_1\al_{i_0},
a_2b_1\al_{i_0}+a_3c_2b_1\al_{i_0+1},
a_2b_1\al_{i_0+1}+a_3c_2b_1\al_{i_0},
a_2b_1\al_{i_0+1}+a_3c_2b_1\al_{i_0+1}\},
\eeqq
\beqq
M_2:=\min\{a_2b_1\al_{i_0}+a_3c_2b_1\al_{i_0},
a_2b_1\al_{i_0}+a_3c_2b_1\al_{i_0+1},
a_2b_1\al_{i_0+1}+a_3c_2b_1\al_{i_0},
a_2b_1\al_{i_0+1}+a_3c_2b_1\al_{i_0+1}\}.
\eeqq
Consider the following two situations:
\beq\label{assumption-1-positive}
\mbox{(i).}\ M_1<\al_{i_0}\ \mbox{and}\ a_1p^{(m_{i_0})}(\al_{i_0})>0;\
\mbox{(ii).}\ M_2>\al_{i_0+1}\ \mbox{and}\ a_1p^{(m_{i_0})}(\al_{i_0})<0.
\eeq
Then, for fixed $a_2$, $a_3$, $b_1$, and $c_2$ with $|c_2|<1$, and sufficiently large $|a_1|$, there exist a Smale horseshoe and a uniformly hyperbolic invariant set $\Lambda$ for the map \eqref{sys3}, on which $F$ is topologically conjugate to two-sided fullshift on two  symbols. Consequently, they are chaotic in the sense of both Li-Yorke and Devaney.
\end{theorem}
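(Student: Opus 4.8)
The plan is to run the same two-step scheme that proves Theorem~\ref{onepositive-zero}, namely to establish a two-component horseshoe structure for an appropriate box $U$ (the analogue of Lemma~\ref{connectedtwo-zero}), then to verify uniform hyperbolicity through a one-dimensional expanding cone field (the analogue of Lemma~\ref{hyponeexp-singel}), and finally to invoke Lemmas~\ref{unifhyp} and~\ref{chaoticshift}. Since both zeros are now strictly positive, the correct box is the translate
\[
U:=[\alpha_{i_0},\alpha_{i_0+1}]\times[b_1\alpha_{i_0},b_1\alpha_{i_0+1}]\times[c_2b_1\alpha_{i_0},c_2b_1\alpha_{i_0+1}],\qquad \lambda:=\frac{1}{|c_2|},
\]
with the convention $[a,b]=[b,a]$ so that the signs of $b_1,c_2$ play no role. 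The reason for this choice is that $f_2=b_1x$ and $f_3=c_2y$ carry the $x$- and $y$-ranges of $U$ exactly onto its $y$- and $z$-ranges, so the whole geometry of $F(U)\cap U$ is controlled by the single coordinate $f_1=a_1p(x)+a_2y+a_3z$.

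First I would prove the two-component property. Evaluating $f_1$ at the eight vertices of $U$, the shift term $a_2y+a_3z$ runs over exactly the four values $a_2b_1\alpha_s+a_3c_2b_1\alpha_t$ with $\alpha_s,\alpha_t\in\{\alpha_{i_0},\alpha_{i_0+1}\}$, whose maximum and minimum are the constants $M_1$ and $M_2$; and since $p(\alpha_{i_0})=p(\alpha_{i_0+1})=0$, the $x$-coordinate of $F$ at every vertex is precisely this shift. In case~(i) the bound $M_1<\alpha_{i_0}$ places every vertex image strictly to the left of the target interval $[\alpha_{i_0},\alpha_{i_0+1}]$, so for each fixed $(y,z)$ the map $x\mapsto f_1$ equals $a_2y+a_3z<\alpha_{i_0}$ at both endpoints $x=\alpha_{i_0},\alpha_{i_0+1}$. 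Because $\alpha_{i_0},\alpha_{i_0+1}$ are consecutive zeros, $p$ has one fixed sign on the open interval, and Lemma~\ref{monotone}(i) together with $a_1p^{(m_{i_0})}(\alpha_{i_0})>0$ makes $a_1p$ monotone increasing on $(\alpha_{i_0},\delta_1]$ and monotone decreasing on $[\delta_1',\alpha_{i_0+1})$, while on the middle interval $[\delta_1,\delta_1']$ one has $\min|p|>0$, so $a_1p\to+\infty$ uniformly there as $|a_1|\to\infty$. Hence for $|a_1|$ large the graph of $x\mapsto f_1$ meets $[\alpha_{i_0},\alpha_{i_0+1}]$ in exactly one point of $(\alpha_{i_0},\delta_1]$, one point of $[\delta_1',\alpha_{i_0+1})$, and lies above it on $[\delta_1,\delta_1']$, giving two connected components of $F(U)\cap U$. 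Case~(ii) is the mirror image, with $M_2>\alpha_{i_0+1}$ and a downward excursion. As $F$ is a diffeomorphism with polynomial inverse~\eqref{inversesys3}, $F^{-1}(U)\cap U=F^{-1}(F(U)\cap U)$ also has two components.

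For hyperbolicity I would reuse the cone field $K_1(\mathbb{R},\mathbb{R}^2)$ with the $x$-axis as the one-dimensional expanding core, exactly as in Lemma~\ref{hyponeexp-singel}, and the matrices~\eqref{derivativesys3} and~\eqref{inversederivativesys3}. The expansion estimate $m_{\mathcal{C},w_0}\ge\lambda$ requires $|a_1p'(x)|\ge\lambda+|a_2|+|a_3|$, and the co-expansion estimate $m'_{\mathcal{C},w_0}\ge\lambda$ requires the corresponding bound from the third row of~\eqref{inversederivativesys3}, where $p'$ is evaluated at $y_0/b_1\in[\alpha_{i_0},\alpha_{i_0+1}]$; both follow from a lower bound on $|a_1p'|$ together with a smallness restriction on $1/|a_1|$ of the form~\eqref{bound-deriv-1-single}. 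Lemma~\ref{monotone}(i) is applicable since $\alpha_{i_0}>0$, and it confines the relevant $x$-coordinates of $U\cap F^{-1}(U)$ to $(\alpha_{i_0},\delta_1]\cup[\delta_1',\alpha_{i_0+1})$, where $p'$ has fixed nonzero sign. Lemma~\ref{unifhyp} then gives uniform hyperbolicity of $\Lambda$, combining with the two-component structure yields the conjugacy with the two-sided $2$-shift exactly as in Theorem~\ref{onepositive-zero}, and Lemma~\ref{chaoticshift} gives Li--Yorke and Devaney chaos.

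The one genuinely new difficulty is the expansion bound when the zeros are not simple. If $m_{i_0}>1$ or $m_{i_0+1}>1$, then $p'(\alpha_{i_0})=0$ or $p'(\alpha_{i_0+1})=0$, so the constant $M_0:=\min|p'|$ used in Lemma~\ref{hyponeexp-singel} degenerates to zero and~\eqref{bound-deriv-1-single} cannot be applied verbatim. The remedy is that the two crossing intervals migrate toward the endpoints: solving $a_1p(x)\in[\alpha_{i_0}-M_1,\alpha_{i_0+1}-M_2]$ forces $|x-\alpha_{i_0}|\sim|a_1|^{-1/m_{i_0}}$ on the left and $|x-\alpha_{i_0+1}|\sim|a_1|^{-1/m_{i_0+1}}$ on the right, whence $|a_1p'(x)|\sim|a_1|^{1/m_{i_0}}$, respectively $|a_1|^{1/m_{i_0+1}}$, which still tend to $\infty$. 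The correct hyperbolicity argument therefore replaces the uniform $M_0$ by this $|a_1|$-dependent rate and chooses $|a_1|$ large in terms of $m_{i_0}$ and $m_{i_0+1}$; making this estimate quantitative, and checking that the crossing region remains inside the monotonicity intervals of Lemma~\ref{monotone}(i) in the limit, is the main obstacle beyond the simple-zero case treated in Theorem~\ref{onepositive-zero}.
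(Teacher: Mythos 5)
Your proposal is correct and follows the same route as the paper: the same box $U=[\al_{i_0},\al_{i_0+1}]\times[b_1\al_{i_0},b_1\al_{i_0+1}]\times[c_2b_1\al_{i_0},c_2b_1\al_{i_0+1}]$ with $\lambda=1/|c_2|$, the same vertex computation showing that the affine shift $a_2y+a_3z$ ranges over the four combinations whose extremes are $M_1$ and $M_2$, the same two-component argument for $F(U)\cap U$ (and $F^{-1}(U)\cap U=F^{-1}(F(U)\cap U)$ by diffeomorphy), the same cone field $K_1(\mathbb{R},\mathbb{R}^2)$ with one-dimensional expanding core, and the same appeal to Lemmas \ref{unifhyp} and \ref{chaoticshift}. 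The one place where you diverge is the issue you flag at the end as ``the main obstacle'': the expansion bound when $m_{i_0}>1$ or $m_{i_0+1}>1$. You propose an asymptotic scaling $|x-\al_{i_0}|\sim|a_1|^{-1/m_{i_0}}$, hence $|a_1p'(x)|\sim|a_1|^{1/m_{i_0}}\to\infty$, and leave its quantification open. The paper closes this gap cleanly and without any case distinction on multiplicities: since $w_0\in U\cap F^{-1}(U)$ forces $a_1p(x_0)=x_1-a_2y_0-a_3z_0\geq\al_{i_0}-M_1>0$ (inequality \eqref{bound-1-i}), and since the logarithmic derivative satisfies $p'(x)/p(x)\to\pm\infty$ as $x$ approaches the consecutive zeros (see \eqref{polyineq1}), one writes $|a_1p'(x_0)|=\frac{|p'(x_0)|}{|p(x_0)|}\cdot|a_1p(x_0)|\geq\frac{|p'(x_0)|}{|p(x_0)|}(\al_{i_0}-M_1)$ and uses \eqref{mono-ineq-1-i} to get $|a_1p'(x_0)|\geq\lambda+|a_2|+|a_3|$ uniformly, with the analogous bound at $y_0/b_1$ for the co-expansion. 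This is the rigorous form of exactly the scaling you sketch, so your argument is sound once that substitution is made; you should adopt the $p'/p$ mechanism rather than the uniform $M_0=\min|p'|>0$ of Lemma \ref{hyponeexp-singel}, which indeed degenerates for non-simple zeros as you correctly observe.
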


For convenient, suppose that $[a,b]$ and $[b,a]$ represent the same interval. Denote $$U:=[\al_{i_0},\al_{i_0+1}]\times
[b_1\al_{i_0},b_1\al_{i_0+1}]\times
[c_2b_1\al_{i_0},c_2b_1\al_{i_0+1}].$$

Set $\lambda:=\frac{1}{|c_2|}$. It follows from the assumption $0<|c_2|<1$ that $\lambda>1$. The invariant set is given by
$$\Lambda:=\bigcap^{+\infty}_{-\infty}F^i(U).$$

\begin{lemma} \label{connectedtwo}
In Cases (i)-(ii) of Theorem \ref{onepositive},  for fixed $a_2$, $a_3$, $b_1$, and $c_2$, and sufficiently large $|a_1|$, one has that
$F(U)\cap U$ and $F^{-1}(U)\cap U$ have two non-empty connected components, respectively.
\end{lemma}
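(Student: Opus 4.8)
The strategy mirrors the proof of Lemma 3.3 (\texttt{connectedtwo-zero}) exactly, but now the defining box $U$ sits over the interval $[\al_{i_0},\al_{i_0+1}]$ of two \emph{positive} zeros rather than $[0,\al_{i_0+1}]$. The plan is to track the images of the eight vertices of the parallelepiped $U$ under $F$ and under $H=F^{-1}$, and then to argue that the parabolic-like shape of the $x$-component, combined with the sign conditions $M_1<\al_{i_0}$ (Case (i)) or $M_2>\al_{i_0+1}$ (Case (ii)), forces $F(U)\cap U$ to break into precisely two connected components.

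First I would compute $F$ on the vertices. Writing a generic vertex as $(\xi,b_1\eta,c_2b_1\zeta)$ with $\xi,\eta,\zeta\in\{\al_{i_0},\al_{i_0+1}\}$, the map \eqref{sys3} sends it to $\bigl(a_1p(\xi)+a_2b_1\eta+a_3c_2b_1\zeta,\ b_1\xi,\ c_2b_1\eta\bigr)$. Because $\al_{i_0}$ and $\al_{i_0+1}$ are both zeros of $p$, the first coordinate collapses to $a_2b_1\eta+a_3c_2b_1\zeta$ on every vertex; the four possible values of this linear combination are exactly the four quantities whose max and min define $M_1$ and $M_2$. The key geometric point, just as in Lemma~3.3, is that the second and third coordinates $b_1\xi$ and $c_2b_1\eta$ already lie in the correct coordinate ranges of $U$, so the only constraint that matters is on the $x$-coordinate. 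By part (i) of Lemma \ref{monotone}, for large $|a_1|$ the relevant points of $U\cap F^{-1}(U)$ have $x\in[\al_{i_0},\delta_1]\cup[\delta_1',\al_{i_0+1}]$, and $p$ is strictly monotone with a definite sign on each piece; thus the $x$-image sweeps out the full interval $[\al_{i_0},\al_{i_0+1}]$ twice while re-entering $U$ in two disjoint strands.

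Next I would use the sign hypotheses to confirm the two strands are genuinely separated. In Case (i), $M_1<\al_{i_0}$ together with $a_1p^{(m_{i_0})}(\al_{i_0})>0$ guarantees that at the vertices the $x$-coordinate of the image lies below $\al_{i_0}$ (the low end of $U$), while the parabolic overshoot driven by large $|a_1|$ pushes the midsection past $\al_{i_0+1}$; Case (ii) is the mirror image, with $M_2>\al_{i_0+1}$ keeping the vertex images above $\al_{i_0+1}$ and the parabola dipping below $\al_{i_0}$. Either way the preimage $F^{-1}\bigl([\al_{i_0},\al_{i_0+1}]\bigr)\cap[\al_{i_0},\al_{i_0+1}]$ consists of two disjoint closed subintervals, giving two connected components of $F(U)\cap U$. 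Finally, since $F$ is a diffeomorphism with polynomial inverse $H$ of \eqref{inversesys3}, I would invoke $U\cap F^{-1}(U)=F^{-1}(U\cap F(U))$ to transfer the count: the two components of $F(U)\cap U$ pull back homeomorphically to two components of $F^{-1}(U)\cap U$, which can also be checked directly by listing $H$ on the vertices as in Lemma~3.3.

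The main obstacle I anticipate is purely bookkeeping rather than conceptual: unlike the $\al_{i_0}=0$ case, here \emph{neither} endpoint is the origin, so the linear part $a_2b_1\eta+a_3c_2b_1\zeta$ no longer vanishes at one corner, and one must carefully verify that the four extreme values are correctly ordered by $M_1$ and $M_2$ and that the two hypotheses in \eqref{assumption-1-positive} indeed separate the vertex images from the parabolic overshoot on the correct side. Keeping the sign of $a_1p^{(m_{i_0})}(\al_{i_0})$ consistent with the monotonicity direction supplied by Lemma \ref{monotone} is the delicate step; once that alignment is fixed, the two-component conclusion follows from the intermediate value theorem applied to the $x$-coordinate exactly as before.
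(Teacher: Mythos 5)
Your proposal follows the paper's own proof essentially verbatim: compute the images of the eight vertices of $U$ under $F$ (where $p$ vanishes, so the first coordinate reduces to the linear combination $a_2b_1\eta+a_3c_2b_1\zeta$ whose four values define $M_1$ and $M_2$), use the sign hypotheses of \eqref{assumption-1-positive} together with the parabolic overshoot for large $|a_1|$ to get two components of $F(U)\cap U$, and then transfer the count to $F^{-1}(U)\cap U$ via $U\cap F^{-1}(U)=F^{-1}(U\cap F(U))$ and the diffeomorphism property. The approach and all key steps match; no substantive difference.
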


\begin{proof}
By the definition of map \eqref{sys3}, one has that the transformation is linear with respect to the $y$ and $z$ variables.

Next, it is to study the transformation with respect to the $x$ variable. First, it is to obtain the image of the vertices of $U$ under the map $F$. The vertices of $U$ are
\beqq
A=(\al_{i_0+1},b_1\al_{i_0},c_2b_1\al_{i_0}),
B=(\al_{i_0+1},b_1\al_{i_0+1},c_2b_1\al_{i_0}),
\eeqq
\beqq
C=(\al_{i_0},b_1\al_{i_0+1},c_2b_1\al_{i_0}),
D=(\al_{i_0},b_1\al_{i_0},c_2b_1\al_{i_0}),
\eeqq
\beqq
A'=(\al_{i_0+1},b_1\al_{i_0},c_2b_1\al_{i_0+1}),
B'=(\al_{i_0+1},b_1\al_{i_0+1},c_2b_1\al_{i_0+1}),
\eeqq
\beqq
C'=(\al_{i_0},b_1\al_{i_0+1},c_2b_1\al_{i_0+1}),
D'=(\al_{i_0},b_1\al_{i_0},c_2b_1\al_{i_0+1}).
\eeqq
The images under $F$ are
\beqq
F(A)=(a_2b_1\al_{i_0}+a_3c_2b_1\al_{i_0},b_1\al_{i_0+1},c_2b_1\al_{i_0}),
F(B)=(a_2b_1\al_{i_0+1}+a_3c_2b_1\al_{i_0},b_1\al_{i_0+1},c_2b_1\al_{i_0+1}),
\eeqq
\beqq
F(C)=(a_2b_1\al_{i_0+1}+a_3c_2b_1\al_{i_0},b_1\al_{i_0},c_2b_1\al_{i_0+1}),
F(D)=(a_2b_1\al_{i_0}+a_3c_2b_1\al_{i_0},b_1\al_{i_0},c_2b_1\al_{i_0}),
\eeqq
\beqq
F(A')=(a_2b_1\al_{i_0}+a_3c_2b_1\al_{i_0+1},b_1\al_{i_0+1},c_2b_1\al_{i_0}),
F(B')=(a_2b_1\al_{i_0+1}+a_3c_2b_1\al_{i_0+1},b_1\al_{i_0+1},c_2b_1\al_{i_0+1}),
\eeqq
\beqq
F(C')=(a_2b_1\al_{i_0+1}+a_3c_2b_1\al_{i_0+1},b_1\al_{i_0},c_2b_1\al_{i_0+1}),
F(D')=(a_2b_1\al_{i_0}+a_3c_2b_1\al_{i_0+1},b_1\al_{i_0},c_2b_1\al_{i_0}).
\eeqq
This, together with \eqref{assumption-1-positive}, implies that there are two connected components of $U\cap F(U)$ for sufficiently large $|a_1|$.

On the other hand, the image of the vertices of $U$ under $F^{-1}=H$ are
\beqq
H(A)=\bigg(\al_{i_0},b_1\al_{i_0},\frac{1}{a_3}(\al_{i_0+1}-a_2b_1\al_{i_0})\bigg),
H(B)=\bigg(\al_{i_0+1},b_1\al_{i_0},\frac{1}{a_3}(\al_{i_0+1}-a_2b_1\al_{i_0})\bigg),
\eeqq
\beqq
H(C)=\bigg(\al_{i_0+1},b_1\al_{i_0},\frac{1}{a_3}(\al_{i_0}-a_2b_1\al_{i_0})\bigg),
H(D)=\bigg(\al_{i_0},b_1\al_{i_0},\frac{1}{a_3}(\al_{i_0}-a_2b_1\al_{i_0})\bigg),
\eeqq
\beqq
H(A')=\bigg(\al_{i_0},b_1\al_{i_0+1},\frac{1}{a_3}(\al_{i_0+1}-a_2b_1\al_{i_0+1})\bigg),
H(B')=\bigg(\al_{i_0+1},b_1\al_{i_0+1},\frac{1}{a_3}(\al_{i_0+1}-a_2b_1\al_{i_0+1})\bigg),
\eeqq
\beqq
H(C')=\bigg(\al_{i_0+1},b_1\al_{i_0+1},\frac{1}{a_3}(\al_{i_0}-a_2b_1\al_{i_0+1})\bigg),
H(D')=\bigg(\al_{i_0},b_1\al_{i_0+1},\frac{1}{a_3}(\al_{i_0}-a_2b_1\al_{i_0+1})\bigg).
\eeqq
It follows from the fact $F$ is diffeomorphic that $U\cap F^{-1}(U)=F^{-1}(U\cap F(U))$ has two connected components. The proof is completed.
\end{proof}

\begin{lemma} \label{hyponeexp}
In Cases (i)-(ii) of Theorem \ref{onepositive},  for fixed $a_2$, $a_3$, $b_1$, and $c_2$, and sufficiently large $|a_1|$,  one has that the invariant set $\Lambda$ is uniformly hyperbolic.
\end{lemma}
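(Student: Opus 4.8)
The plan is to verify the cone-field criterion of Lemma \ref{unifhyp} with $N=1$, in complete parallel with the proof of Lemma \ref{hyponeexp-singel}, using the standard unit cone $K_1(\mathbb{R},\mathbb{R}^2)$ whose core is the $x$-axis. Two estimates must be secured: the minimal expansion $m_{\mathcal{C},w_0}\geq\lambda$ for every $w_0\in U\cap F^{-1}(U)$, and the minimal co-expansion $m'_{\mathcal{C},w_0}\geq\lambda$ for every $w_0\in U\cap F(U)$, where $\lambda=1/|c_2|>1$. Since the cone field has constant core dimension one, Lemma \ref{unifhyp} then delivers the uniform hyperbolicity of $\Lambda$.

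The first and decisive step is to localize the surviving points and to extract a usable lower bound on $|a_1p'|$. As in the discussion preceding Lemma \ref{connectedtwo}, the only nontrivial constraint for $(x,y,z)\in U\cap F^{-1}(U)$ is $a_1p(x)+a_2y+a_3z\in[\al_{i_0},\al_{i_0+1}]$; since $y,z$ range over a bounded set, this confines $a_1p(x)$ to a bounded window and hence $x$ to a shrinking neighborhood of one of the two consecutive zeros $\al_{i_0},\al_{i_0+1}$. By part (i) of Lemma \ref{monotone}, for $|a_1|$ large such $x$ lie in the monotone pieces $[\al_{i_0},\delta_1]\cup[\delta'_1,\al_{i_0+1}]$, on which $p'$ has a fixed sign. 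The point at which hypothesis \eqref{assumption-1-positive} enters is exactly here: in Case (i) the quantity $M_1$ is the maximum of the $f_1$-coordinates of the images $F(A),\dots,F(D')$ of the vertices of $U$, all of which equal $a_2y+a_3z$ evaluated at a vertex because $p$ vanishes at $\al_{i_0},\al_{i_0+1}$; thus $M_1<\al_{i_0}$ forces the term $a_1p(x)$ to cover a gap bounded below by $\al_{i_0}-M_1>0$ before $f_1$ can reach $\al_{i_0}$. Hence the surviving $x$-values stay a definite $p$-distance from the zeros, and the local model $p(x)\sim c(x-\al_{i_0})^{m_{i_0}}$ gives $|a_1p'(x)|\gtrsim|a_1|^{1/m_{i_0}}\to\infty$; Case (ii) is symmetric via the gap $M_2-\al_{i_0+1}>0$. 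Consequently, for $|a_1|$ sufficiently large,
\beq\label{bound-positive}
|a_1p'(x)|\geq\max\bigg\{\lambda+|a_2|+|a_3|,\ \frac{\lambda|a_3b_1c_2|+|b_1c_2|+|a_2b_1|}{|c_2|}\bigg\}
\eeq
holds on the entire surviving $x$-region.

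Granting \eqref{bound-positive}, the two estimates are the same linear algebra as in Lemma \ref{hyponeexp-singel}. For expansion, writing $(v_1,u_1)^T=DF_{w_0}(v_0,u_0)^T$ with $(v_0,u_0)\in K_1(\mathbb{R},\mathbb{R}^2)$ and using \eqref{derivativesys3}, one has $|v_1|\geq|a_1p'(x_0)||v_0|-|a_2||u_{01}|-|a_3||u_{02}|\geq(|a_1p'(x_0)|-|a_2|-|a_3|)|v_0|\geq\lambda|v_0|$, so $m_{\mathcal{C},w_0}\geq\lambda$. For co-expansion, a point $w_0\in U\cap F(U)$ may be written $w_0=F(w_{-1})$ with $w_{-1}\in U\cap F^{-1}(U)$, so the entry $p'(y_0/b_1)$ in \eqref{inversederivativesys3} is evaluated at $y_0/b_1=x_{-1}$, again a surviving $x$-value at which \eqref{bound-positive} applies; the two subcases $|u_0|=|u_{01}|$ and $|u_0|=|u_{02}|$ then yield $|u_{-1}|\geq\lambda|u_0|$ exactly as before, giving $m'_{\mathcal{C},w_0}\geq\lambda$. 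Lemma \ref{unifhyp} finishes the argument.

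The main obstacle is precisely the localization/scaling of the second paragraph. When the multiplicity $m_{i_0}$ exceeds one, $p'$ vanishes at the zeros, so the uniform positive lower bound $M_0$ on $|p'|$ used in Lemma \ref{hyponeexp-singel} is unavailable, and a fixed monotone window is not enough. The remedy is to trade $|p'|$ for $|a_1p'|$: the $f_1$-gap guaranteed by \eqref{assumption-1-positive} keeps the surviving region off the zeros by a margin that shrinks with $|a_1|$ yet on which $|a_1p'|$ grows like $|a_1|^{1/m_{i_0}}$, so that \eqref{bound-positive} is eventually met. Once this quantitative gap is in hand, everything else is the routine cone computation above.
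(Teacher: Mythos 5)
Your proposal is correct and follows essentially the same route as the paper: the cone criterion of Lemma \ref{unifhyp} with the cone $K_1(\mathbb{R},\mathbb{R}^2)$, localization of the surviving $x$-values to the monotone windows of Lemma \ref{monotone}, the gap $\al_{i_0}-M_1>0$ (resp.\ $M_2-\al_{i_0+1}>0$) from \eqref{assumption-1-positive} forcing $|a_1p(x_0)|$ and $|a_1p(x_{-1})|$ to be bounded below, and then the identical expansion and co-expansion computations. The only cosmetic difference is how the lower bound on $|a_1p'|$ is extracted: the paper multiplies the large ratio $|p'(x)/p(x)|$ (which blows up near the zeros by \eqref{polyineq1}) by the bound $|a_1p(x)|\geq\al_{i_0}-M_1$, whereas you argue via the local power law $p(x)\sim c(x-\al_{i_0})^{m_{i_0}}$ to get $|a_1p'(x)|\gtrsim|a_1|^{1/m_{i_0}}$ --- the same estimate in different clothing.
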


\begin{proof}

In the following discussions, we always assume that $a_2$, $a_3$, $b_1$, and $c_2$ are fixed, and $|a_1|$ are sufficiently large.
By (i) of Lemma \ref{monotone}, assume that $|a_1|$ is sufficiently large such that for $w=(x,y,z)\in U\cap F^{-1}(U)$, one has that $x\in(\al_{i_0},\delta_1]\cup[\delta'_1,\al_{i_0+1})$.
For any point $w_0=(x_0,y_0,z_0)\in U$, set $w_1=F(w_0)=(x_1,y_1,z_1)$ and $w_{-1}=F^{-1}(w_0)=(x_{-1},y_{-1},z_{-1})$.

First, by \eqref{polyineq1} and Lemma \ref{monotone},  one has that for $w=(x,y,z)\in U\cap F^{-1}(U)$,
\beq\label{mono-ineq-1-i}
\mbox{Case (i)}:\  \frac{|a_1p'(x)|}{|a_1p(x)|}\geq
\frac{\max\bigg\{\lambda+|a_2|+|a_3|,\ \frac{\lambda|a_3b_1c_2|+|b_1c_2|+|a_2b_1|}{|c_2|}\bigg\}}{\al_{i_0}-M_1};
\eeq
\beq\label{mono-ineq-1-ii}
\mbox{Case (ii)}:\  \frac{|a_1p'(x)|}{|a_1p(x)|}\geq \frac{\max\bigg\{\lambda+|a_2|+|a_3|,
\frac{\lambda|a_3b_1c_2|+|b_1c_2|+|a_2b_1|}{|c_2|}\bigg\}}{M_2-\al_{i_0+1}}.
\eeq

For any point $w_0\in U\cap F^{-1}(U)$, one has
\beqq
\left\{
  \begin{array}{ll}
   x_1=a_1p(x_0)+a_2y_0+a_3z_0 \\
    y_1=b_1x_0 \\
    z_1=c_2x_0
  \end{array}
\right..
\eeqq
that is, $a_1p(x_0)=x_1-a_2y_0-a_3z_0$, implying that
\beq\label{bound-1-i}
\mbox{Case (i)}:\ \ a_1p(x_0)\geq\al_{i_0}-M_1;\ \mbox{Case (ii)}:\ \ -a_1p(x_0)\geq M_2-\al_{i_0+1}.
\eeq

For any point $w_0\in U\cap F(U)$, by \eqref{inversesys3}, one has,
\beqq
\left\{
  \begin{array}{ll}
    x_{-1}=\frac{y_0}{b_1} \\
    y_{-1}=\frac{z_0}{c_2} \\
    z_{-1}=\frac{1}{a_3}\bigg(x_0-a_1p(y_0/b_1)-\frac{a_2}{c_2}z_0\bigg)
  \end{array}
\right.,
\eeqq
that is,
$a_1p(x_{-1})=a_1p(y_0/b_1)=x_0-a_3z_{-1}-\frac{a_2}{c_2}z_0$,
yielding that
\beq\label{bound-1-ii}
\mbox{Case (i)}:\ \ a_1p(x_{-1})\geq\al_{i_0}-M_1;\  \mbox{Case (ii)}:\ \ -a_1p(x_{-1})\geq M_2-\al_{i_0+1}.
\eeq

Hence, it follows from \eqref{mono-ineq-1-i} and \eqref{bound-1-i} that for any point $w_0\in U\cap F^{-1}(U)$,
\beq\label{bound-deriv-1-i}
\mbox{Cases (i) and (ii)}:\ \ |a_1p'(x_0)|\geq \lambda+|a_2|+|a_3|.
\eeq

Now, it is to show that for any $w_0\in U\cap F(U)$, one has that $m_{\mathcal{C},w_0}\geq\lambda$.

Consider the unit cone
\beqq
K_1(\mathbb{R}, \mathbb{R}^2)=\bigg\{\mathbf{v}=
\left(
\begin{array}{c}
v_0\\
u_0
\end{array}
\right)
:\ v_0\in \mathbb{R},\ u_0=(u_{01},u_{02})^T\in\mathbb{R}^2,\ \mbox{and}\ |u_0|\leq|v_0|\bigg\},
\eeqq
where $|u_0|=\max\{|u_{01}|,|u_{02}|\}$ and $|\mathbf{v}|=\max\{|v_0|,|u_0|\}$.

Suppose that $
\left(
\begin{array}{c}
v_1\\
u_1
\end{array}
\right)
=DF_{w_0}
\left(
\begin{array}{c}
v_0\\
u_0
\end{array}
\right)$, where $
\left(
\begin{array}{c}
v_0\\
u_0
\end{array}
\right)\in K_1(\mathbb{R},\mathbb{R}^2)$.
By \eqref{derivativesys3},
\beqq
\left\{
  \begin{array}{ll}
v_{1}=a_1p'(x_0)v_0+a_2u_{01}+a_3u_{02}\\
u_{11}=b_1v_{0}\\
u_{12}=c_2u_{01}
  \end{array},
\right.
\eeqq
this, together with \eqref{bound-deriv-1-i}, implies that $$|v_{1}|\geq |v_{0}||a_1p'(x_0)|-|a_2||u_{01}|-|a_3||u_{02}|\geq
 |v_{0}||a_1p'(x_0)|-|a_2||v_{0}|-|a_3||v_{0}|\geq
 \lambda|v_{0}|.$$ Hence, $m_{\mathcal{C},w_0}\geq\lambda$.

By \eqref{mono-ineq-1-ii} and \eqref{bound-1-ii}, one has that for any point $w_0\in U\cap F(U)$,
\beq\label{bound-deriv-1-ii}
|a_1p'(x_{-1})|=|a_1p'(y_0/b_1)|\geq\frac{\lambda|a_3b_1c_2|+|b_1c_2|+|a_2b_1|}{|c_2|},
\eeq

Next, it is to show that for any point $w_0\in U\cap F(U)$,  $m'_{\mathcal{C},w_0}\geq\lambda$.
For any $
\left(
\begin{array}{c}
v_0\\
u_0
\end{array}
\right)\not\in K_1(\mathbb{R},\mathbb{R}^2)$, that is ,  $|u_0|>|v_0|$, suppose $
\left(
\begin{array}{c}
v_{-1}\\
u_{-1}
\end{array}
\right)
=DF^{-1}_{w_0}
\left(
\begin{array}{c}
v_0\\
u_0
\end{array}
\right)$, by \eqref{inversederivativesys3}, one has
\beqq
\left\{
  \begin{array}{ll}
 v_{-1}=\frac{1}{b_1}u_{01} \\
  u_{-11} =\frac{1}{c_2}u_{02}\\
  u_{-12}=  \frac{1}{a_3}v_0-\frac{a_1}{b_1a_3}p'(y_0/b_1)u_{01} -\frac{a_2}{a_3c_2}u_{02}\\
\end{array}.
\right.
\eeqq
If $|u_0|=|u_{02}|$, then $|u_{-1}|\geq|u_{02}/c_2|=\lambda|u_{0,2}|=\lambda|u_0|$. If $|u_0|=|u_{01}|$, then by \eqref{bound-deriv-1-ii},
\beqq
\begin{split}
|u_{-1}|\geq&|u_{-12}|\geq -\frac{1}{|a_3|}|v_0|+\frac{1}{|b_1a_3|}|a_1p'(y_0/b_1)||u_{01}| -\frac{|a_2|}{|a_3c_2|}|u_{02}|\\
\geq&\frac{1}{|b_1a_3|}|a_1p'(y_0/b_1)||u_{0}|-\frac{1}{|a_3|}|u_0|-\frac{|a_2|}{|a_3c_2|}|u_{0}|\\
\geq&\lambda|u_0|.
\end{split}
\eeqq
So, $m'_{\mathcal{C},w_0}\geq\lambda$.

Thus, it follows from Lemma \ref{unifhyp} that the invariant set $\Lambda$ is uniformly hyperbolic. This completes the proof.
\end{proof}

Therefore, by Lemmas \ref{chaoticshift}, \ref{connectedtwo}, and \ref{hyponeexp}, one has that Theorem \ref{onepositive} holds.

\begin{theorem}\label{onepositive-n-zero}
For system \eqref{sys3}, suppose that $p(x)=\prod^m_{i=1}(x-\al_i)$, where $\al_1<\al_{2}<\cdots<\al_m$ are real numbers, and $m\geq2$. For any fixed $a_3$, $b_1$, and $c_2$ with $|c_2|<1$, and sufficiently large $|a_1|$, there exist a Smale horseshoe and a hyperbolic invariant set on which $F$ is topologically conjugate to the two-sided fullshift on $m$ symbols. Consequently, $F$ is chaotic in the sense of both Li-Yorke and Devaney.
\end{theorem}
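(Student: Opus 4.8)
The plan is to follow the same two-lemma architecture used for Theorems~\ref{onepositive-zero} and \ref{onepositive}: first establish the horseshoe (symbolic) structure by a connected-components count, then verify uniform hyperbolicity with a one-dimensional unstable cone field, and finally invoke Lemma~\ref{chaoticshift}. I would take the box to be the natural forward-invariant coordinate box
\beqq
U:=[\al_1,\al_m]\times[b_1\al_1,b_1\al_m]\times[c_2b_1\al_1,c_2b_1\al_m],
\eeqq
(using the convention $[a,b]=[b,a]$ to absorb signs), with $\lambda:=1/|c_2|>1$ and $\Lambda:=\bigcap_{-\infty}^{+\infty}F^i(U)$. The $y$- and $z$-ranges are dictated by the forward dynamics $y\mapsto b_1x$, $z\mapsto c_2y$, so that on $U$ the perturbation $a_2y+a_3z$ is bounded by a constant independent of $a_1$.

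The first and central step is the components count. For $F^{-1}(U)\cap U$ the only binding constraint is $a_1p(x)+a_2y+a_3z\in[\al_1,\al_m]$, i.e. $p(x)\in W/a_1$ where $W$ is the bounded window $[\al_1-a_2y-a_3z,\al_m-a_2y-a_3z]$; for $F(U)\cap U$ the analogous constraint, read off from the inverse \eqref{inversesys3}, is $p(y/b_1)\in W'/a_1$ for a similar bounded $W'$. Since $W/a_1$ shrinks to $\{0\}$ as $|a_1|\to\infty$ and $p$ has exactly the $m$ simple zeros $\al_1<\cdots<\al_m$, the solution set localizes to $m$ thin, mutually disjoint slabs, one transverse to each root $\al_i$; near a critical point of $p$ the value $a_1p$ is of order $|a_1|$ and hence escapes $W$, so no spurious component appears. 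This yields exactly $m$ connected components for each of $F(U)\cap U$ and $F^{-1}(U)\cap U$. I would stress here the structural reason the hypotheses are so clean (no sign conditions on $a_2,a_3$ and no positivity of the roots, in contrast with the $M_1,M_2$ conditions of Theorem~\ref{onepositive}): by working on the full interval $[\al_1,\al_m]$ the strips are pinned at the zeros of $p$, where $p$ changes sign transversally, so the count is forced for large $|a_1|$ regardless of where the bounded window sits.

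For hyperbolicity I would reuse the unit cone $K_1(\mathbb{R},\mathbb{R}^2)$ with one-dimensional core along the $x$-axis, exactly as in Lemma~\ref{hyponeexp}. The key uniform estimate is that every $w=(x,y,z)\in U\cap F^{-1}(U)$ has $p(x)$ of order $1/|a_1|$, hence $x$ lies in a shrinking neighborhood of the zero set $\{\al_1,\dots,\al_m\}$ and is therefore bounded away from all critical points of $p$; setting $M_0:=\min|p'|$ over such a neighborhood gives $M_0>0$ (since the roots are simple) and $|a_1p'(x)|\geq|a_1|M_0\to\infty$. This is the $m$-root analogue of Lemma~\ref{monotone}(i). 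With this lower bound the expansion estimate $m_{\mathcal{C},w_0}\geq\lambda$ and the co-expansion estimate $m'_{\mathcal{C},w_0}\geq\lambda$ go through verbatim as in Lemma~\ref{hyponeexp} (the $|u_0|=|u_{02}|$ branch uses $1/|c_2|=\lambda$, the $|u_0|=|u_{01}|$ branch uses the now-large factor $|a_1p'|/|b_1a_3|$), once $|a_1|$ is taken large enough to satisfy the single scalar inequality analogous to \eqref{bound-deriv-1-single}. Lemma~\ref{unifhyp} then certifies that $\Lambda$ is uniformly hyperbolic, the $m$-strip horseshoe gives the topological conjugacy of $F|_\Lambda$ to the two-sided fullshift on $m$ symbols, and Lemma~\ref{chaoticshift} yields Li--Yorke and Devaney chaos.

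The main obstacle I anticipate is the bookkeeping in the components count for the two boundary roots $\al_1$ and $\al_m$: the associated slabs can abut a face of $U$ rather than sit in its interior, so some care is needed to confirm that each of the $m$ strips stretches fully across $U$ in the expanding $x$-direction (the Conley--Moser--type crossing condition implicit in the fullshift conjugacy) and that the coding remains a bijection on the maximal invariant set $\Lambda$. I expect this to be controlled, as before, by taking $|a_1|$ large so that the images genuinely cross $U$ and by passing to $\Lambda=\bigcap_i F^i(U)$, where the uniform hyperbolicity already established forces the symbolic coding to be a homeomorphism.
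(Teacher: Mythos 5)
Your overall architecture (components count plus a one-dimensional-core cone field, then Lemma~\ref{unifhyp} and Lemma~\ref{chaoticshift}) is exactly the paper's, and your hyperbolicity estimates would go through. But there is a genuine gap precisely at the point you flagged and then waved away: with your box $U=[\al_1,\al_m]\times[b_1\al_1,b_1\al_m]\times[c_2b_1\al_1,c_2b_1\al_m]$, the polynomial $p$ vanishes \emph{exactly on the two $x$-faces} of $U$. On the face $x=\al_1$ the image point is $F(\al_1,y,z)=(a_2y+a_3z,\,b_1\al_1,\,c_2y)$, which is completely independent of $a_1$; the same happens at $x=\al_m$. So ``taking $|a_1|$ large so that the images genuinely cross $U$'' cannot repair the two boundary strips: if $a_2y+a_3z$ lands inside $(\al_1,\al_m)$ for some $(y,z)$ in the relevant cross-section (which the theorem's hypotheses do not exclude, since there are no sign conditions on $a_2,a_3,b_1,c_2$), the $x$-fiber of the boundary slab maps onto a one-sided interval starting at $a_2y+a_3z$ rather than stretching across all of $[\al_1,\al_m]$, and the full-crossing (Conley--Moser) condition needed for the conjugacy to the fullshift on $m$ symbols fails for those symbols. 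Your structural claim that ``the count is forced for large $|a_1|$ regardless of where the bounded window sits'' is true only at the interior roots, where $p$ changes sign inside $U$; at $\al_1$ and $\al_m$ it is false as stated.

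The paper's proof avoids this by enlarging the box: it chooses margins $\eta_i>0$ with the intervals $V_i=[\al_i-\eta_i,\al_i+\eta_i]$ pairwise disjoint and $|p'|>0$ on them, takes
\beqq
U=[\al_1-\eta_1,\al_m+\eta_m]\times[b_1(\al_1-\eta_1),b_1(\al_m+\eta_m)]\times[c_2b_1(\al_1-\eta_1),c_2b_1(\al_m+\eta_m)],
\eeqq
and introduces $M_1=\min_{1\leq i\leq m}|p(\al_i\pm\eta_i)|>0$ together with a threshold $N_1$ proportional to $1/M_1$. Since $|p|\geq M_1$ on all the faces $x=\al_i\pm\eta_i$, once $|a_1|>N_1$ the images of these faces exit $U$ in the $x$-direction, so all $m$ strips (including the two outermost ones) stretch fully across $U$, and $F(U)\cap U$ has exactly $m$ components; the hyperbolicity part is then run exactly as in your proposal (it is the argument of Lemma~\ref{hyponeexp-singel} with $M_0=\inf_{x\in\cup_iV_i}|p'(x)|$). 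So your proof becomes correct after replacing your box by this slightly enlarged one and adding the $N_1$-type largeness condition; as written, the treatment of the roots $\al_1$ and $\al_m$ does not close.
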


\begin{proof}
Denote $\ld:=\frac{1}{|c_2|}$. By the properties of $p(x)$, there are positive constants $\eta_i$, $1\leq i\leq m$, such that the sets $V_i=[\al_i-\eta_i, \al_i+\eta_i]$, $1\leq i\leq m$, are pairwise disjoint, and $|p'(x)|>0$ for all $x\in V_i$, $1\leq i\leq m$.

Set
$$U:=[\al_1-\eta_1,\al_m+\eta_m]\times[b_1(\al_1-\eta_1),b_1(\al_m+\eta_m)]\times[c_2b_1(\al_1-\eta_1),c_2b_1(\al_m+\eta_m)],$$ and
$$\Ld:=\bigcap^{+\infty}_{-\infty}F^{j}(U).$$
Set
$$M_0:=\inf_{x\in\cup^m_{i=1}V_i}|p'(x)|,\ \ M_1:=\min_{1\leq i\leq m}|p(\al_i\pm\eta_i)|;$$
\beqq
N_0:=\max\bigg\{\frac{\lambda+|a_2|+|a_3|}{M_0},\ \frac{\lambda|a_3b_1c_2|+|b_1c_2|+|a_2b_1|}{M_0|c_2|}\bigg\}.
\eeqq
\beqq
N_1:=\frac{(1+|a_2b_1|+|a_3c_2b_1|)\max\{|\al_1-\eta_1|,|\al_n+\eta_n|\}}{M_1}.
\eeqq

If $|a_1|>N_0$, by applying similar discussions in the proof of Lemma \ref{hyponeexp-singel}, one has that
the invariant set $\Ld$ is uniformly hyperbolic, where the constant $N_0$ is similar with the constant specified in \eqref{bound-deriv-1-single}.

If $|a_1|>N_1$, by applying simple calculation and  the fact that $p(x)$ has simple real roots, one has that $F(U)\cap U$ has $m$ connected components.

Hence, if $|a_1|>\max\{N_0,N_1\}$, then there exist a Smale horseshoe and a hyperbolic invariant set on which $F$ is topologically conjugate to the two-sided fullshift on $m$ symbols. This, together with Lemma \ref{chaoticshift}, yields that $F$ is chaotic in the sense of both Li-Yorke and Devaney. This completes the proof.
\end{proof}

\medskip

\subsection{The maps with the dimension of the unstable subspace equals to two}

In this subsection, we consider a subclass of the maps \eqref{sys1} satisfying that $a_3=c_2=c_3=0$, $a_2b_3\neq0$, $c_1=1$, and $a_2b_3\neq0$, which are represented as follows: $F:(x,y,z)\in\mathbb{R}^3\to(f_1,f_2,f_3)\in\mathbb{R}^3:$
\begin{equation} \label{sys6}
\left\{
  \begin{array}{ll}
    f_1(x,y,z)=a_1p(x)+a_2y \\
    f_2(x,y,z)=b_1x+b_2q(y)+b_3z \\
    f_3(x,y,z)=x
  \end{array}
\right..
\end{equation}
The derivative of $F$ and the determinant are
\begin{equation}\label{positivederivative}
DF=\left(
    \begin{array}{ccc}
      a_1p'(x) & a_2 & 0 \\
      b_1 & b_2q'(y) & b_3 \\
      1 & 0 & 0 \\
    \end{array}
  \right)
\end{equation}
and $a_2b_3\neq0$, respectively.
The inverse of $F$ is $H:(x,y,z)\in\mathbb{R}^3\to(h_1,h_2,h_3)\in\mathbb{R}^3:$
\begin{equation}\label{inversesys6}
\left\{
  \begin{array}{ll}
    h_1(x,y,z)=z \\
    h_2(x,y,z)=\frac{1}{a_2}(x-a_1p(z)) \\
    h_3(x,y,z)=\frac{1}{b_3}(y-b_1z-b_2q(\frac{1}{a_2}(x-a_1p(z))))
  \end{array}
\right.
\end{equation}
and the derivative of $H$ is
\begin{equation}\label{inversederivative}
DF^{-1}=\left(
  \begin{array}{ccc}
    0 & 0 & 1 \\
    \frac{1}{a_2} & 0 & -\frac{a_1}{a_2}p'(z) \\
    -\frac{b_2}{a_2b_3}q'(\frac{1}{a_2}(x-a_1p(z))) & \frac{1}{b_3} &-\frac{b_1}{b_3}+\frac{a_1b_2}{a_2b_3}q'(\frac{1}{a_2}(x-a_1p(z)))p'(z) \\
  \end{array}
\right).
\end{equation}
The determinant is $1/a_2b_3$.
Hence, polynomial maps \eqref{sys6} are diffeomorphisms with constant Jacobian and polynomial inverse.

To begin our study, a toy model is introduced as follows:
\begin{equation*}
\left\{
  \begin{array}{ll}
    f_1(x,y,z)=a_1x(1-x)+a_2y \\
    f_2(x,y,z)=b_1x+b_2y(1-y)+b_3z \\
    f_3(x,y,z)=x
  \end{array}
\right..
\end{equation*}
This model can be thought of as another generalization of the H\'{e}non map in three-dimensional spaces.

Now, we show that for certain parameters, there exist uniformly hyperbolic invariant sets and chaotic dynamics.

\begin{theorem}\label{twopositive-zero}
Suppose that there are $1\leq i_0<r_1$ and $1\leq j_0<r_2$ such that $0=\al_{i_0}<\al_{i_0+1}$, $0=\xi_{j_0}<\xi_{j_0+1}$, and $m_{i_0}=m_{i_0+1}=n_{j_0}=n_{j_0+1}=1$, that is, there are two distinct non-negative zeros of $p(x)$ and $q(y)$. If
\beq\label{assumption-2-zero}
a_2<0\ \mbox{and}\ a_1p'(0)>0,\ \mbox{and}\ b_1\leq0,\ b_3<0,\ \mbox{and}\ b_2q'(0)>0,
\eeq
then, for fixed $a_2$, $b_1$, and $b_3$, and sufficiently large $|a_1|$ and $|b_2|$, there are a Smale horseshoe for the map \eqref{sys6}, and a uniformly hyperbolic invariant set $\Lambda$ on which $F$ is topologically conjugate to two-sided fullshift on four  symbols. Therefore, they are Li-Yorke chaotic as well as Devaney chaotic.
\end{theorem}

Consider the compact set $$U:=[0,\al_{i_0+1}]\times[0,\xi_{j_0+1}]\times[0,\al_{i_0+1}].$$

Take a constant $\lambda>1$. The invariant set is denoted by
$$\Lambda:=\bigcap^{+\infty}_{-\infty}F^i(U).$$

\begin{lemma}\label{connectedfour-zero}
Under the assumptions of Theorem \ref{twopositive-zero}, one has that
$F(U)\cap U$ and $F^{-1}(U)\cap U$ have four non-empty connected components, respectively.
\end{lemma}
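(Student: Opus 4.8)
The plan is to mimic the structure of Lemma \ref{connectedtwo-zero}, since the map \eqref{sys6} now has \emph{two} parabolic-like coordinates ($x$ through $a_1p(x)$ and $y$ through $b_2q(y)$) rather than one, so I expect the number of connected components to become $2\times 2=4$. First I would compute the images of the eight vertices of the box $U=[0,\al_{i_0+1}]\times[0,\xi_{j_0+1}]\times[0,\al_{i_0+1}]$ under $F$, exactly as in the proof of Lemma \ref{connectedtwo-zero}. The point is that in the first coordinate $f_1=a_1p(x)+a_2y$ is a parabola-shaped function of $x$ (with $p$ having simple zeros at $0$ and $\al_{i_0+1}$) sheared by the linear term $a_2y$, and in the second coordinate $f_2=b_1x+b_2q(y)+b_3z$ is a parabola-shaped function of $y$ sheared by $b_1x+b_3z$; the third coordinate $f_3=x$ is just a projection.

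The key geometric idea I would make explicit is that, as $|a_1|\to\infty$, the image $F(U)$ folds once in the $x$-direction so that it re-enters the $x$-slab $[0,\al_{i_0+1}]$ in two disjoint strips (this is the standard one-dimensional Smale-horseshoe fold coming from $a_1p(x)$ with $a_1p'(0)>0$); similarly, as $|b_2|\to\infty$, it folds once in the $y$-direction so that it re-enters the $y$-slab $[0,\xi_{j_0+1}]$ in two disjoint strips (the fold coming from $b_2q(y)$ with $b_2q'(0)>0$). Because the two folds act on essentially independent coordinates — $x$ is fed back via $f_3$ and $y$ via $f_2$ — their product produces a $2\times2$ grid of re-entering pieces, hence four connected components of $F(U)\cap U$. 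The sign hypotheses \eqref{assumption-2-zero}, namely $a_2<0$, $b_1\le 0$, $b_3<0$, are precisely what guarantee that the shearing linear terms push the images of the ``outer'' vertices out of $U$ in the correct direction, so that the folded image genuinely straddles $U$ rather than missing it; I would verify this by checking that the relevant vertex images have the sign pattern forcing each of the four pieces to cross $U$ completely.

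For the backward direction I would \emph{not} recompute anything: since $F$ is a diffeomorphism (established from \eqref{inversesys6}--\eqref{inversederivative}), one has $U\cap F^{-1}(U)=F^{-1}(U\cap F(U))$, and a homeomorphism carries the four connected components of $F(U)\cap U$ bijectively onto four connected components of $F^{-1}(U)\cap U$. This is exactly the closing argument used in Lemma \ref{connectedtwo-zero}, so the backward count is immediate once the forward count is settled.

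The main obstacle I expect is making the ``two independent folds give four components'' claim rigorous rather than merely reading it off the vertex table. With a genuine nonlinearity in two of the three coordinates simultaneously, the folding in $x$ and the folding in $y$ could in principle interact — for instance the shear term $a_2y$ in $f_1$ couples the $x$-fold to the $y$-coordinate, and $b_1x$ in $f_2$ couples the $y$-fold to $x$. I would handle this by taking $|a_1|$ and $|b_2|$ large \emph{independently and in the right order}: for each fixed admissible $y$-slice the first coordinate performs a clean one-dimensional horseshoe fold (by the Logistic-type argument, using $a_1p'(0)>0$ and the estimates from Lemma \ref{monotone}(i)), uniformly in that slice because the box has bounded width; symmetrically, for each fixed $x$-slice the second coordinate folds cleanly using $b_2q'(0)>0$ and Lemma \ref{monotone}(ii). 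The uniformity in the transverse variable — which is what lets the two folds be treated as a product — is the delicate point, and the sign conditions in \eqref{assumption-2-zero} together with the largeness of $|a_1|,|b_2|$ are exactly the hypotheses that make this uniformity hold.
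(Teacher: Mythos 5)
Your proposal is correct and follows essentially the same route as the paper's own proof: computing the images of the eight vertices of $U$, describing $F(U)$ as a product of two parabolic folds (one in $x$ driven by $a_1p(x)$, one in $y$ driven by $b_2q(y)$) that each re-enter $U$ in two strips, and then obtaining the count for $F^{-1}(U)\cap U$ from $F^{-1}(U)\cap U=F^{-1}(U\cap F(U))$ and the fact that $F$ is a diffeomorphism. If anything, you are more explicit than the paper about the one delicate point --- the uniformity of each fold in the transverse variables despite the shear coupling terms $a_2y$ and $b_1x+b_3z$ --- which the paper handles only implicitly through the sign conditions \eqref{assumption-2-zero} and the largeness of $|a_1|,|b_2|$.
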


\begin{proof}
It is to show that $F(U)\cap U$ has four connected components. We will describe how to think of the image of $F(U)$.

First, it is to find the position of the image of the vertices of the cubic $U$, that is, the image of $A,B,C,D,A',B',C',D'$ under $F$. We only need to know the relative position of $F(A)$, $F(B)$, $F(C)$, $F(D)$, $F(A')$, $F(B')$, $F(C')$, $F(D')$.
The vertices of the cubic $U$ are:
\beqq
A=(\al_{i_0+1},0,0),
B=(\al_{i_0+1},\xi_{j_0+1},0),
C=(0,\xi_{j_0+1},0),
D=(0,0,0),
\eeqq
\beqq
A'=(\al_{i_0+1},0,\al_{i_0+1}),
B'=(\al_{i_0+1},\xi_{j_0+1},\al_{i_0+1}),
C'=(0,\xi_{j_0+1},\al_{i_0+1}),
D'=(0,0,\al_{i_0+1});
\eeqq
the image of the vertices under $F$ are
\beqq
F(A)=(0,b_1\al_{i_0+1},\al_{i_0+1}),\ F(B)=(a_2\xi_{j_0+1},b_1\al_{i_0+1},\al_{i_0+1}),
F(C)=(a_2\xi_{j_0+1},0,0),
\eeqq
\beqq
F(D)=(0,0,0),F(A')=(0,(b_1+b_3)\al_{i_0+1},\al_{i_0+1}),\ F(B')=(a_2\xi_{j_0+1},(b_1+b_3)\al_{i_0+1},\al_{i_0+1}),
\eeqq
\beqq
F(C')=(a_2\xi_{j_0+1},b_3\al_{i_0+1},0),\
F(D')=(0,b_3\al_{i_0+1},0).
\eeqq

Set
$$AD:=\{(x,y,z):\ x\in[0,\al_{i_0+1}],y=0,z=0\},$$
$$AB:=\{(x,y,z):\ x=\al_{i_0+1},y\in[0,\xi_{j_0+1}],z=0\},$$
$$AA':=\{(x,y,z):\ x=\al_{i_0+1},y=0,z\in[0,\al_{i_0+1}]\},$$
$$ABCD:=\{(x,y,z):\ (x,y)\in[0,\al_{i_0+1}]\times[0,\xi_{j_0+1}],z=0\}.$$

Second, it is to determine the image of the plane $ABCD$ by using the expression \eqref{sys6}.
By \eqref{sys6}, $F(AD)$ is a parabola along the $x$-axis, and $F(AB)$ is a parabola along $y$-axis. The image of $ABCD$ under $F$ can be thought of as a movement of the parabola $F(AD)$ along another parabola $F(AB)$, but the direction of $F(AD)$ should not change too much.

Finally, it is to study the graph of $F(U)$. By \eqref{sys6}, $F(AA')$ is a line segment. We push forward the surface $F(ABCD)$ obtained in the previous step along $F(AA')$. The graph of $F(U)$ comes out!

Hence, there are four non-empty connected components of $F(U)\cap U$ for sufficiently large $|a_1|$ and $|b_2|$.

Now, it is to study $F^{-1}(U)\cap U$. The image of the vertices of $U$ under $F^{-1}=H$ are
\beqq
H(A)=\bigg(0,\frac{\al_{i_0+1}}{a_2},\frac{-b_2q(\al_{i_0+1}/a_2)}{b_3}\bigg),
H(B)=\bigg(0,\frac{\al_{i_0+1}}{a_2},\frac{(\xi_{j_0+1}-b_2q(\al_{i_0+1}/a_2))}{b_3}\bigg),
\eeqq
\beqq
H(C)=(0,0,\xi_{j_0+1}/b_3),
H(D)=(0,0,0),
\eeqq
\beqq
H(A')=\bigg(\al_{i_0+1},\frac{\al_{i_0+1}}{a_2},\frac{(-b_1\al_{i_0+1}-b_2q(\al_{i_0+1}/a_2))}{b_3}\bigg),
\eeqq
\beqq
H(B')=\bigg(\al_{i_0+1},\frac{\al_{i_0+1}}{a_2},\frac{(\xi_{j_0+1}-b_1\al_{i_0+1}-b_2q(\al_{i_0+1}/a_2))}{b_3}\bigg),
\eeqq
\beqq
H(C')=\bigg(\al_{i_0+1},0,\frac{(\xi_{j_0+1}-b_1\al_{i_0+1})}{b_3}\bigg),
H(D')=\bigg(\al_{i_0+1},0,\frac{-b_1\al_{i_0+1}}{b_3}\bigg).
\eeqq
By \eqref{inversesys6}, one has that for fixed $a_2$ and $b_3$, and sufficiently large $|a_1|$ and $|b_2|$, given any $(x_0,y_0,z_0)\in U$,
\beqq
\left\{
  \begin{array}{ll}
    x_0=z \\
    y_0=\frac{1}{a_2}(x-a_1p(z)) \\
    z_0=\frac{1}{b_3}(y-b_1z-b_2q(\frac{1}{a_2}(x-a_1p(z))))
  \end{array}
\right.,
\eeqq
if there is a solution of the above equations, then there should exist four different solutions by Lemma \ref{monotone}. Further, it follows from $F^{-1}(U)\cap U=F^{-1}(U\cap F(U))$, the fact that $F$ is a diffeomorphism and the above geometric description of $F(U)$ that $F^{-1}(U)\cap U$ has four non-empty connected components. The proof is completed.
\end{proof}

\begin{lemma} \label{hyptwoexp-zero}
Under the assumptions of Theorem \ref{twopositive-zero}, the invariant set $\Lambda$ is uniformly hyperbolic.
\end{lemma}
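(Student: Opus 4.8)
The plan is to apply Lemma \ref{unifhyp} with $N=1$, using a cone field whose core is the two-dimensional unstable subspace spanned by the $x$- and $y$-directions. At every $w\in\Lambda$ I would use the cone
\beqq
K_1(\mathbb{R}^2,\mathbb{R})=\bigg\{\mathbf{v}=\left(\begin{array}{c}v_0\\ u_0\end{array}\right):\ v_0=(v_{01},v_{02})^T\in\mathbb{R}^2,\ u_0\in\mathbb{R},\ |u_0|\leq|v_0|\bigg\},
\eeqq
with $|v_0|=\max\{|v_{01}|,|v_{02}|\}$ and $|\mathbf{v}|=\max\{|v_0|,|u_0|\}$; its core has the constant dimension $2$, so the orbit core dimension hypothesis of Lemma \ref{unifhyp} holds automatically. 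First I would freeze $\delta_1,\delta'_1,\delta_2,\delta'_2$ coming from parts (i)--(ii) of Lemma \ref{monotone} (they depend only on $p,q$) and set $M_p:=\inf|p'|$ on $[0,\delta_1]\cup[\delta'_1,\al_{i_0+1}]$ and $M_q:=\inf|q'|$ on $[0,\delta_2]\cup[\delta'_2,\xi_{j_0+1}]$, both strictly positive since the zeros are simple. Because $w=(x,y,z)\in\Lambda$ forces $F(w)\in U$, boundedness of the first two coordinates of $F(w)$ makes $|a_1p(x)|$ and $|b_2q(y)|$ bounded, hence $|p(x)|=O(1/|a_1|)$ and $|q(y)|=O(1/|b_2|)$; for $|a_1|,|b_2|$ large every such $x,y$ then lies in the fixed good intervals, where $|a_1p'(x)|\geq|a_1|M_p$ and $|b_2q'(y)|\geq|b_2|M_q$.

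For the minimal expansion I would take $\mathbf{v}\in K_1(\mathbb{R}^2,\mathbb{R})$, write $DF_{w_0}\mathbf{v}=(v_{11},v_{12},u_1)^T$ from \eqref{positivederivative}, so that $v_{11}=a_1p'(x_0)v_{01}+a_2v_{02}$ and $v_{12}=b_1v_{01}+b_2q'(y_0)v_{02}+b_3u_0$, and split on which coordinate realizes $|v_0|$. If $|v_0|=|v_{01}|$ then $|v_{11}|\geq(|a_1p'(x_0)|-|a_2|)|v_{01}|\geq(|a_1|M_p-|a_2|)|\mathbf{v}|$; if $|v_0|=|v_{02}|$ then $|v_{12}|\geq(|b_2q'(y_0)|-|b_1|-|b_3|)|v_{02}|\geq(|b_2|M_q-|b_1|-|b_3|)|\mathbf{v}|$. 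Hence $m_{\mathcal{C},w_0}\geq\lambda$ once $|a_1|M_p-|a_2|\geq\lambda$ and $|b_2|M_q-|b_1|-|b_3|\geq\lambda$, which hold for large $|a_1|,|b_2|$.

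For the minimal co-expansion I would evaluate $DF^{-1}$ at $F(w_0)$, observing that there the entries $p'(z)$ and $q'(\tfrac{1}{a_2}(x-a_1p(z)))$ of \eqref{inversederivative} equal $p'(x_0)$ and $q'(y_0)$. Taking $\mathbf{v}\notin K_1(\mathbb{R}^2,\mathbb{R})$, i.e.\ $|u_0|>|v_0|$ so that $|\mathbf{v}|=|u_0|$, the decisive coordinate of $DF^{-1}_{F(w_0)}\mathbf{v}$ is the complement component
\beqq
u_{-1}=-\frac{b_2}{a_2b_3}q'(y_0)v_{01}+\frac{1}{b_3}v_{02}+\bigg(-\frac{b_1}{b_3}+\frac{a_1b_2}{a_2b_3}q'(y_0)p'(x_0)\bigg)u_0,
\eeqq
whose last term grows like $|a_1b_2|$. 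Using $|v_{01}|,|v_{02}|\leq|u_0|=|\mathbf{v}|$ gives
\beqq
|u_{-1}|\geq\bigg(\frac{|a_1b_2|M_pM_q}{|a_2b_3|}-\frac{|b_2|}{|a_2b_3|}\sup|q'|-\frac{|b_1|+1}{|b_3|}\bigg)|\mathbf{v}|,
\eeqq
so $m'_{\mathcal{C},w_0}\geq\lambda$ for large $|a_1|,|b_2|$. With these two bounds Lemma \ref{unifhyp} gives the uniform hyperbolicity of $\Lambda$.

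The main obstacle is the co-expansion rather than the expansion: the competing term $\tfrac{|b_2|}{|a_2b_3|}\sup|q'|$ is linear in $|b_2|$, so it is absorbed by the leading $|a_1b_2|$ term only if one first enlarges $|a_1|$ (say until $|a_1|M_pM_q>2\sup|q'|$) and only afterwards sends $|b_2|\to\infty$. Controlling the order of these two ``sufficiently large'' choices, while also checking that $M_p,M_q$ remain bounded below uniformly once the $\delta$'s are frozen, is the delicate bookkeeping; the geometric input that $F(w)\in U$ confines $x,y$ to neighborhoods of the simple zeros (so that $p',q'$ stay away from $0$) is exactly what makes all these constants positive.
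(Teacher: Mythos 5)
Your proposal is correct and takes essentially the same route as the paper: the same cone field $K_1(\mathbb{R}^2,\mathbb{R})$ with two-dimensional core, the same localization of $x,y$ near the simple zeros via Lemma \ref{monotone} so that $|p'|,|q'|$ are bounded below, and the same expansion/co-expansion estimates fed into Lemma \ref{unifhyp}. The only cosmetic difference is in the co-expansion bound, where the paper factors $|b_2q'|$ out of both the leading and the competing term (so the two largeness conditions on $|a_1|$ and $|b_2|$ are independent), while you bound the competing term by $\tfrac{|b_2|}{|a_2b_3|}\sup|q'|$ and therefore need the mild ordering of the two limits that you correctly flag.
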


\begin{proof}
Fix $a_2$, $b_1$, and $b_3$, it follows from (i) and (ii) of Lemma \ref{monotone} that we could assume $|a_1|$ and $|b_2|$ are sufficiently large such that for $w=(x,y,z)\in U\cap F^{-1}(U)$, one has that $x\in[\al_{i_0},\delta_1]\cup[\delta'_1,\al_{i_0+1}]$
and $y\in[\xi_{j_0},\delta_2]\cup[\delta'_2,\xi_{j_0+1}]$.
For any point $w_0=(x_0,y_0,z_0)\in U$, set $w_1=F(w_0)=(x_1,y_1,z_1)$ and $w_{-1}=F^{-1}(w_0)=(x_{-1},y_{-1},z_{-1})$.

Set $$M_0:=\min\bigg\{\min_{x\in[0,\delta_1]\cup[\delta'_1,\al_{i_0+1}]}|p'(x)|,\min_{y\in[0,\delta_2]\cup[\delta'_2,\xi_{j_0+1}]}|q'(y)|\bigg\}.$$
Suppose that
\beq\label{bound-deriv-2-single-zero}
|a_1|\geq\frac{\max\{\lambda+|a_2|,1+|a_2b_3|\lambda+|a_2|(1+|b_1|)\}}{M_0},\ |b_2|\geq\frac{\lambda+|b_1|+|b_3|}{M_0}.
\eeq

Now, it is to show that for any point $w_0\in U\cap F^{-1}(U)$, $m_{\mathcal{C},w_0}\geq\lambda$.

Introduce the unit cone
\beqq
K_1(\mathbb{R}^2, \mathbb{R})=\bigg\{\mathbf{v}=
\left(
\begin{array}{c}
v_0\\
u_0
\end{array}
\right)
:\ v_0=(v_{01},v_{02})^T\in \mathbb{R}^2,\ u_0\in\mathbb{R},\ \mbox{and}\ |u_0|\leq|v_0|\bigg\},
\eeqq
where $|v_0|=\max\{|v_{01}|,|v_{02}|\}$ and $|\mathbf{v}|=\max\{|v_0|,|u_0|\}$.

For any $
\left(
\begin{array}{c}
v_0\\
u_0
\end{array}
\right)\in K_1(\mathbb{R}^2,\mathbb{R})$, $
\left(
\begin{array}{c}
v_1\\
u_1
\end{array}
\right)
=DF_{w_0}
\left(
\begin{array}{c}
v_0\\
u_0
\end{array}
\right)$.
By \eqref{positivederivative},
\beqq
\left\{
  \begin{array}{ll}
v_{11}=a_1p'(x_0)v_{01}+a_2v_{02}\\
v_{12}=b_1v_{01}+b_2q'(y_0)v_{02}+b_3u_0\\
u_1=v_{01}
  \end{array}.
\right.
\eeqq
By \eqref{bound-deriv-2-single-zero},
if $|v_0|=|v_{01}|$, then $$|v_{11}|\geq |v_{01}||a_1p'(x_0)|-|a_2||v_{01}|\geq\lambda|v_{01}|;$$ if $|v_0|=|v_{02}|$, then $$|v_{12}|\geq|v_{02}||b_2q'(y_0)|-|b_1||v_{02}|-|b_3||u_0|\geq\lambda|v_{02}|,$$ which implies that $|(v_1,u_1)|\geq\lambda|(v_0,u_0)|$.  Hence, $m_{\mathcal{C},w_0}\geq\lambda$.

Next, it is to show that $m'_{\mathcal{C},w_0}\geq\lambda$. for any point $w_0\in U\cap F(U)$, take any
$\left(
\begin{array}{c}
v_0\\
u_0
\end{array}
\right)
\not\in K_1(\mathbb{R}^2,\mathbb{R})$, that is ,  $|u_0|>|v_0|$, suppose $
\left(
\begin{array}{c}
v_{-1}\\
u_{-1}
\end{array}
\right)=DF^{-1}_{w_0}
\left(
\begin{array}{c}
v_0\\
u_0
\end{array}
\right)$, by \eqref{inversederivative}, one has
\beqq
\left\{
  \begin{array}{ll}
 v_{-11}=u_0 \\
  v_{-12} =\frac{1}{a_2} v_{01}-\frac{a_1}{a_2}p'(z_0) u_0\\
  u_{-1}=  -\frac{b_2}{a_2b_3}q'(\frac{1}{a_2}(x_0-a_1p(z_0)))v_{01}+\frac{1}{b_3}v_{02} -\frac{b_1}{b_3}u_0+\frac{a_1b_2}{a_2b_3}q'(\frac{1}{a_2}(x_0-a_1p(z_0)))p'(z_0) u_0\\
\end{array}.
\right.
\eeqq
So, by \eqref{bound-deriv-2-single-zero},
\beqq
\begin{split}
|u_{-1}|&\geq \bigg|\frac{a_1b_2}{a_2b_3}q'\bigg(\frac{1}{a_2}(x_0-a_1p(z_0))\bigg)p'(z_0)\bigg| |u_0|-\bigg|\frac{b_2}{a_2b_3}q'\bigg(\frac{1}{a_2}(x_0-a_1p(z_0))\bigg)\bigg||v_{01}|-\frac{|v_{02}|}{|b_3|}-\frac{|b_1|}{|b_3|}|u_0|\\
\geq& \bigg|\frac{a_1b_2}{a_2b_3}q'\bigg(\frac{1}{a_2}(x_0-a_1p(z_1))\bigg)p'(z_0)\bigg| |u_0|-\bigg|\frac{b_2}{a_2b_3}q'\bigg(\frac{1}{a_2}(x_0-a_1p(z_0))\bigg)\bigg||v_{0}|-\frac{|v_{02}|}{|b_3|}-\frac{|b_1|}{|b_3|}|u_0|\\
\geq&\bigg|\frac{a_1b_2}{a_2b_3}q'\bigg(\frac{1}{a_2}(x_0-a_1p(z_0))\bigg)p'(z_0)\bigg| |u_0|-\bigg|\frac{b_2}{a_2b_3}q'\bigg(\frac{1}{a_2}(x_0-a_1p(z_0))\bigg)\bigg||u_{0}|-\frac{|u_0|}{|b_3|}-\frac{|b_1|}{|b_3|}|u_0|\\
\geq&\lambda|u_0|,
\end{split}
\eeqq
which yields that $m'_{\mathcal{C},w_0}\geq\lambda$.

It follows from Lemma \ref{unifhyp} that the invariant set $\Lambda$ is uniformly hyperbolic. This completes the proof.
\end{proof}

Therefore, it follows from Lemmas \ref{chaoticshift}, \ref{connectedfour-zero}, and \ref{hyptwoexp-zero} that Theorem \ref{twopositive-zero} holds.

\begin{theorem}\label{twopositive}
Suppose that there are $1\leq i_0<r_1$ and $1\leq j_0<r_2$ such that $0<\al_{i_0}<\al_{i_0+1}$ and $0<\xi_{j_0}<\xi_{j_0+1}$, that is, there are two distinct positive zeros of $p(x)$ and $q(y)$. Set
\beqq
M_1:=\max\{a_2\xi_{j_0},a_2\xi_{j_0+1}\},\ M_2:=\min\{a_2\xi_{j_0},a_2\xi_{j_0+1}\},
\eeqq
\beqq
N_1:=\max\{b_1\al_{i_0}+b_3\al_{i_0},
b_1\al_{i_0}+b_3\al_{i_0+1},
b_1\al_{i_0+1}+b_3\al_{i_0},
b_1\al_{i_0+1}+b_3\al_{i_0+1}\},
\eeqq
\beqq
N_2:=\min\{b_1\al_{i_0}+b_3\al_{i_0},
b_1\al_{i_0}+b_3\al_{i_0+1},
b_1\al_{i_0+1}+b_3\al_{i_0},
b_1\al_{i_0+1}+b_3\al_{i_0+1}\}.
\eeqq
Consider the following different situations:
\begin{itemize}
\item [(i).] $M_1<\al_{i_0}$ and $a_1p^{(m_{i_0})}(\al_{i_0})>0$, and $N_1<\xi_{j_0}$ and $b_2q^{(n_{j_0})}(\xi_{j_0})>0$;
\item [(ii).]  $M_1<\al_{i_0}$ and $a_1p^{(m_{i_0})}(\al_{i_0})>0$, and $N_2>\xi_{j_0+1}$ and $b_2q^{(n_{j_0})}(\xi_{j_0})<0$;
\item [(iii).]  $M_2>\al_{i_0+1}$ and $a_1p^{(m_{i_0})}(\al_{i_0})<0$, and $N_1<\xi_{j_0}$ and $b_2q^{(n_{j_0})}(\xi_{j_0})>0$;
\item [(iv).]  $M_2>\al_{i_0+1}$ and $a_1p^{(m_{i_0})}(\al_{i_0})<0$, and $N_2>\xi_{j_0+1}$ and $b_2q^{(n_{j_0})}(\xi_{j_0})<0$.
\end{itemize}
Then, for fixed $a_2$, $b_1$, and $b_3$, and sufficiently large $|a_1|$ and $|b_2|$, there exists a Smale horseshoe for the map \eqref{sys6}, especially, there is a uniformly hyperbolic invariant set $\Lambda$ on which $F$ is topologically conjugate to two-sided fullshift on four  symbols. Therefore, they are Li-Yorke chaotic as well as Devaney chaotic.
\end{theorem}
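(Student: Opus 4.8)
\medskip

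The plan is to reuse the two-lemma template that produced Theorems \ref{onepositive} and \ref{twopositive-zero}. First I would prove a connectedness lemma asserting that $F(U)\cap U$ and $F^{-1}(U)\cap U$ each have four non-empty connected components, which yields the Smale horseshoe and the topological conjugacy of $F|_\Lambda$ with the two-sided fullshift on four symbols; then I would prove a hyperbolicity lemma verifying the Newhouse cone criterion of Lemma \ref{unifhyp}. Combining these with Lemma \ref{chaoticshift} gives the theorem at once, exactly as Theorem \ref{twopositive-zero} followed from Lemmas \ref{connectedfour-zero} and \ref{hyptwoexp-zero}. The appropriate box here is the product of the two consecutive-root windows,
$$U:=[\al_{i_0},\al_{i_0+1}]\times[\xi_{j_0},\xi_{j_0+1}]\times[\al_{i_0},\al_{i_0+1}],$$
with $\Lambda:=\bigcap^{+\infty}_{-\infty}F^i(U)$ and a fixed $\lambda>1$.

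For the connectedness lemma I would compute the images of the eight vertices of $U$ under $F$ and under $H=F^{-1}$, as in Lemma \ref{connectedfour-zero}. The geometry is driven by two independent folds. Since $p(\al_{i_0})=p(\al_{i_0+1})=0$, the first coordinate $f_1=a_1p(x)+a_2y$ takes the common base value $a_2y$ at both $x$-endpoints and an extreme value of order $|a_1|$ in between, so for large $|a_1|$ it sweeps across the target $x$-slab $[\al_{i_0},\al_{i_0+1}]$ twice; symmetrically, because $q(\xi_{j_0})=q(\xi_{j_0+1})=0$, the second coordinate $f_2=b_1x+b_2q(y)+b_3z$ sweeps across the $y$-slab twice for large $|b_2|$. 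The quantities $M_1,M_2$ record the extreme base values $a_2\xi$ of the first fold and $N_1,N_2$ the extreme base values $b_1\al+b_3\al$ of the second; the four hypotheses (i)--(iv) are precisely the four sign patterns forcing each base strictly outside its slab on the correct side ($M_1<\al_{i_0}$ versus $M_2>\al_{i_0+1}$, and $N_1<\xi_{j_0}$ versus $N_2>\xi_{j_0+1}$), while the signs of $a_1p^{(m_{i_0})}(\al_{i_0})$ and $b_2q^{(n_{j_0})}(\xi_{j_0})$ fix the two fold directions. Because the two folds live in different image coordinates ($f_1$ for $x$, $f_2$ for $y$) and, once $|a_1|,|b_2|$ dominate the linear cross-terms, act essentially independently, they combine to give $2\times2=4$ connected components of $F(U)\cap U$; since $F$ is a diffeomorphism, $F^{-1}(U)\cap U=F^{-1}(U\cap F(U))$ then also has four. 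I would carry out the vertex computation and the ``parabola swept along a parabola, then extruded along $F(AA')$'' description once for case (i), the other three following by symmetry.

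For the hyperbolicity lemma I would reuse the cone $K_1(\mathbb{R}^2,\mathbb{R})$ of Lemma \ref{hyptwoexp-zero}, with two-dimensional expanding core and one-dimensional contracting direction, and verify $m_{\mathcal{C},w_0}\geq\lambda$ on $U\cap F^{-1}(U)$ using $DF$ from \eqref{positivederivative}, and $m'_{\mathcal{C},w_0}\geq\lambda$ on $U\cap F(U)$ using $DF^{-1}$ from \eqref{inversederivative}. The one new ingredient relative to the zeros case is the derivative lower bound: since the roots are now strictly positive and of possibly higher multiplicity, $p'$ and $q'$ may vanish there, so instead of a direct minimum I would use the logarithmic-derivative device of Lemma \ref{hyponeexp}. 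By \eqref{polyineq1}--\eqref{polyineq2} and Lemma \ref{monotone}, the ratios $|p'(x)|/|p(x)|$ and $|q'(y)|/|q(y)|$ can be made as large as desired on $(\al_{i_0},\delta_1]\cup[\delta'_1,\al_{i_0+1})$ and $(\xi_{j_0},\delta_2]\cup[\delta'_2,\xi_{j_0+1})$, into which large $|a_1|,|b_2|$ confine the relevant coordinates; and the coordinate identities $a_1p(x_0)=x_1-a_2y_0$, $b_2q(y_0)=y_1-b_1x_0-b_3z_0$ together with their backward analogues via \eqref{inversesys6} force, under (i)--(iv), lower bounds $|a_1p(x_0)|\geq\al_{i_0}-M_1$ (or $\geq M_2-\al_{i_0+1}$) and likewise for $q$. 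Multiplying the ratio bound by this value gives $|a_1p'(x_0)|\geq\lambda+|a_2|$ and $|b_2q'(y_0)|\geq\lambda+|b_1|+|b_3|$ for $|a_1|,|b_2|$ large, after which the cone estimates reproduce those of Lemma \ref{hyptwoexp-zero} line by line.

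I expect the main obstacle to be the four-component count, not the hyperbolicity, since the latter is a mechanical adaptation of two already-completed estimates. The delicate point is confirming that the two folds stay effectively decoupled across all four sign regimes (i)--(iv), so that the informal ``two transverse folds give four strips'' picture is genuinely realized and is not destroyed by the cross-terms $a_2y$, $b_1x$, $b_3z$. I would control this through the strict inequalities $M_1<\al_{i_0}$, $N_1<\xi_{j_0}$ (and their (ii)--(iv) counterparts), which quantify the margin by which each fold overshoots its slab and thereby guarantee four clean components for all sufficiently large $|a_1|,|b_2|$.
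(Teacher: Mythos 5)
Your proposal follows the paper's own proof essentially verbatim: the same box $U$, the same two-lemma decomposition (a four-component count for $F(U)\cap U$ and $F^{-1}(U)\cap U$ via vertex images and the ``parabola swept along a parabola, extruded along $F(AA')$'' picture, then a cone estimate with $K_1(\mathbb{R}^2,\mathbb{R})$), and the same logarithmic-derivative device combined with the coordinate identities $a_1p(x_0)=x_1-a_2y_0$, $b_2q(y_0)=y_1-b_1x_0-b_3z_0$ to obtain $|a_1p'(x_0)|\geq\lambda+|a_2|$ and $|b_2q'(y_0)|\geq\lambda+|b_1|+|b_3|$. This is correct and matches the paper's argument in both structure and detail.
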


\begin{remark}
The four different situations in the assumptions of Theorem \ref{twopositive} is the generalization of the assumptions \eqref{assumption-1-positive} in Theorem \ref{onepositive}.
\end{remark}

Consider the compact set $$U:=[\al_{i_0},\al_{i_0+1}]\times[\xi_{j_0},\xi_{j_0+1}]\times[\al_{i_0},\al_{i_0+1}].$$

Fix a constant $\lambda>1$. The invariant set is denoted by
$$\Lambda:=\bigcap^{+\infty}_{-\infty}F^i(U).$$

\begin{lemma} \label{connectedfour}
In Cases (i)-(iv) of Theorem \ref{twopositive}, for fixed $a_2$, $b_1$, and $b_3$, and sufficiently large
$|a_1|$ and $|b_2|$, one has that
$F(U)\cap U$ and $F^{-1}(U)\cap U$ have four non-empty connected components, respectively.
\end{lemma}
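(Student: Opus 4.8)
\textbf{Proof proposal for Lemma \ref{connectedfour}.}

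The plan is to follow the exact template established in Lemma \ref{connectedfour-zero} and Lemma \ref{connectedtwo}, adapting the vertex computations to the shifted cube $U=[\al_{i_0},\al_{i_0+1}]\times[\xi_{j_0},\xi_{j_0+1}]\times[\al_{i_0},\al_{i_0+1}]$ and accounting for the four parameter cases (i)--(iv). First I would label the eight vertices of $U$, for instance $A=(\al_{i_0+1},\xi_{j_0},\al_{i_0}),\ B=(\al_{i_0+1},\xi_{j_0+1},\al_{i_0}),\dots,\ D'=(\al_{i_0},\xi_{j_0},\al_{i_0+1})$, and then compute $F(A),\dots,F(D')$ using the defining equations \eqref{sys6}: the first coordinate is $a_1p(x)+a_2y$, the second is $b_1x+b_2q(y)+b_3z$, and the third is simply $x$. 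Because $p$ and $q$ vanish at the endpoints only up to the behavior controlled by the multiplicities $m_{i_0},m_{i_0+1},n_{j_0},n_{j_0+1}$ and the sign conditions encoded in $M_1,M_2,N_1,N_2$, the constants $M_1,M_2$ bound the $a_2q(y)$-contribution to the first coordinate and $N_1,N_2$ bound the $b_1x+b_3z$-contribution to the second coordinate. The hypothesis $M_1<\al_{i_0}$ (or $M_2>\al_{i_0+1}$) together with $a_1p^{(m_{i_0})}(\al_{i_0})>0$ (or $<0$) is precisely what forces the parabolic fold in the $x$-direction to re-enter the $x$-range $[\al_{i_0},\al_{i_0+1}]$; symmetrically the conditions on $N_1,N_2$ and $b_2q^{(n_{j_0})}(\xi_{j_0})$ force the fold in the $y$-direction to re-enter $[\xi_{j_0},\xi_{j_0+1}]$.

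Next I would give the same geometric description used in Lemma \ref{connectedfour-zero}: the image $F(ABCD)$ of the bottom face is obtained by sweeping the parabola $F(AD)$ (folded along the $x$-axis by $a_1p$) along the parabola $F(AB)$ (folded along the $y$-axis by $b_2q$), and then $F(U)$ is obtained by pushing this folded surface along the line segment $F(AA')$ determined by the $b_3z$ term. Since each of the two parabolic folds produces two crossings of $U$ for sufficiently large $|a_1|$ and $|b_2|$ respectively, the two folds combine to produce $2\times2=4$ connected components of $F(U)\cap U$. Each of the four cases (i)--(iv) corresponds to a choice of the direction of each fold (whether the parabola opens so as to land near $\al_{i_0}$ or near $\al_{i_0+1}$, and similarly for $\xi$), but in every case the counting argument is identical once the sign conventions are matched; I would therefore verify case (i) in full and remark that the remaining cases follow by the same reasoning with the inequalities reversed.

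For the preimage side, I would invoke the diffeomorphism property exactly as before: since $F$ is a diffeomorphism with polynomial inverse $H$ given by \eqref{inversesys6}, one has $F^{-1}(U)\cap U=F^{-1}(U\cap F(U))$, so $F^{-1}(U)\cap U$ automatically inherits four connected components from $U\cap F(U)$. To make this concrete and consistent with the earlier lemmas, I would also record the images $H(A),\dots,H(D')$ of the vertices under $H$, confirming that the preimage cube is stretched in the $z$-direction (the expanding direction for $H$) so that the four sheets are genuinely disjoint.

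I expect the main obstacle to be purely bookkeeping rather than conceptual: keeping the four cases (i)--(iv) straight while ensuring that the stated bounds $M_1<\al_{i_0}$, $N_1<\xi_{j_0}$, etc., really do guarantee that \emph{both} endpoints of each folded parabola exit the relevant face of $U$, so that the fold produces exactly two transverse crossings and not one or zero. The genuinely delicate point is the independence of the two folds: I must confirm that the $x$-fold (governed by $a_1,p$) and the $y$-fold (governed by $b_2,q$) can be made simultaneously sharp by taking $|a_1|$ and $|b_2|$ large independently, and that the shear introduced by the cross-terms $a_2y$ in $f_1$ and $b_1x,b_3z$ in $f_2$ (bounded respectively by $M_1,M_2$ and $N_1,N_2$) does not destroy the transversality of the crossings. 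This is exactly the role of the max/min constants, and once they are in place the four-component count follows as in the non-negative-zero case.
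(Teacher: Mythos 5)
Your proposal follows essentially the same route as the paper's proof: compute the images of the eight vertices of $U$ under $F$, describe $F(U)$ as the parabola $F(AD)$ swept along the parabola $F(AB)$ and then pushed along the segment $F(AA')$, count $2\times2=4$ components from the two folds, and obtain the preimage count from $F^{-1}(U)\cap U=F^{-1}(U\cap F(U))$ together with the vertex images under $H$. One small slip: the contribution of the second variable to $f_1$ is $a_2y$ (which is what $M_1,M_2$ bound), not $a_2q(y)$, but this does not affect the argument.
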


\begin{proof}
It is to show that $F(U)\cap U$ has four connected components under the assumptions of Theorem \ref{twopositive}. We will give the geometric description of $F(U)$ in the case: $a_2>0$, $b_3>0$, $b_1=0$, $a_2\xi_{j_0+1}<\al_{i_0}$, $b_3\al_{i_0+1}<\xi_{j_0}$, $a_1p^{(m_{i_0})}(\al_{i_0})>0$, and $b_2q^{(n_{j_0})}(\xi_{j_0})>0$. The graph for other parameters could be obtained similarly.

First, Since $U$ is a cubic, the first step is to find the position of the vertices, that is, the image of $A,B,C,D,A',B',C',D'$. We should know the relative position of $F(A)$, $F(B)$, $F(C)$, $F(D)$, $F(A')$, $F(B')$, $F(C')$, $F(D')$. The vertices of the cubic $U$ are listed as follows:
\beqq
A=(\al_{i_0+1},\xi_{j_0},\al_{i_0}),
B=(\al_{i_0+1},\xi_{j_0+1},\al_{i_0}),
C=(\al_{i_0},\xi_{j_0+1},\al_{i_0}),
D=(\al_{i_0},\xi_{j_0},\al_{i_0}),
\eeqq
\beqq
A'=(\al_{i_0+1},\xi_{j_0},\al_{i_0+1}),
B'=(\al_{i_0+1},\xi_{j_0+1},\al_{i_0+1}),
C'=(\al_{i_0},\xi_{j_0+1},\al_{i_0+1}),
D'=(\al_{i_0},\xi_{j_0},\al_{i_0+1});
\eeqq
the image of these points under $F$ are
\beqq
F(A)=(a_2\xi_{j_0},b_1\al_{i_0+1}+b_3\al_{i_0},\al_{i_0+1}),\ F(B)=(a_2\xi_{j_0+1},b_1\al_{i_0+1}+b_3\al_{i_0},\al_{i_0+1}),
\eeqq
\beqq
F(C)=(a_2\xi_{j_0+1},b_1\al_{i_0}+b_3\al_{i_0},\al_{i_0}),\ F(D)=(a_2\xi_{j_0},b_1\al_{i_0}+b_3\al_{i_0},\al_{i_0}),
\eeqq
\beqq
F(A')=(a_2\xi_{j_0},b_1\al_{i_0+1}+b_3\al_{i_0+1},\al_{i_0+1}),\ F(B')=(a_2\xi_{j_0+1},b_1\al_{i_0+1}+b_3\al_{i_0+1},\al_{i_0+1}),
\eeqq
\beqq
F(C')=(a_2\xi_{j_0+1},b_1\al_{i_0}+b_3\al_{i_0+1},\al_{i_0}),\
F(D')=(a_2\xi_{j_0},b_1\al_{i_0}+b_3\al_{i_0+1},\al_{i_0}).
\eeqq
It is evident that the image of the vertices are not contained in $U$.

Set
$$AD:=\{(x,y,z):\ x\in[\al_{i_0},\al_{i_0+1}],y=\xi_{j_0},z=\al_{i_0}\},$$
$$AB:=\{(x,y,z):\ x=\al_{i_0+1},y\in[\xi_{j_0},\xi_{j_0+1}],z=\al_{i_0}\},$$
$$AA':=\{(x,y,z):\ x=\al_{i_0+1},y=\xi_{j_0},z\in[\al_{i_0},\al_{i_0+1}]\},$$
$$ABCD:=\{(x,y,z):\ (x,y)\in[\al_{i_0},\al_{i_0+1}]\times[\xi_{j_0},\xi_{j_0+1}],z=\al_{i_0}\}.$$

Second, it is to determine the image of the plane $ABCD$ under the map $F$. By \eqref{sys6}, $F(AD)$ is a parabola along the $x$-axis, and $F(AB)$ is a parabola along $y$-axis. The image $F(ABCD)$ can be regarded as a movement of the parabola $F(AD)$ along another parabola $F(AB)$, but the direction of $F(AD)$ should not vary too much.

Finally, it is to investigate the graph of $F(U)$. It follows from \eqref{sys6} that $F(AA')$ is a line segment. The graph of $F(U)$ can be obtained by moving the surface $F(ABCD)$ obtained in the previous step along $F(AA')$.

Hence, there are four non-empty connected components of $F(U)\cap U$ for sufficiently large $|a_1|$ and $|b_2|$.

Now, it is to study $F^{-1}(U)\cap U$. The image of the vertices of $U$ under $F^{-1}=H$ are
\beqq
H(A)=\bigg(\al_{i_0},\frac{\al_{i_0+1}}{a_2},\frac{(\xi_{j_0}-b_1\al_{i_0}-b_2q(\al_{i_0+1}/a_2))}{b_3}\bigg),
\eeqq
\beqq
H(B)=\bigg(\al_{i_0},\frac{\al_{i_0+1}}{a_2},\frac{(\xi_{j_0+1}-b_1\al_{i_0}-b_2q(\al_{i_0+1}/a_2))}{b_3}\bigg),
\eeqq
\beqq
H(C)=\bigg(\al_{i_0},\frac{\al_{i_0}}{a_2},\frac{(\xi_{j_0+1}-b_1\al_{i_0}-b_2q(\al_{i_0}/a_2))}{b_3}\bigg),
\eeqq
\beqq
H(D)=\bigg(\al_{i_0},\frac{\al_{i_0}}{a_2},\frac{(\xi_{j_0}-b_1\al_{i_0}-b_2q(\al_{i_0}/a_2))}{b_3}\bigg),
\eeqq
\beqq
H(A')=\bigg(\al_{i_0+1},\frac{\al_{i_0+1}}{a_2},\frac{(\xi_{j_0}-b_1\al_{i_0+1}-b_2q(\al_{i_0+1}/a_2))}{b_3}\bigg),
\eeqq
\beqq
H(B')=\bigg(\al_{i_0+1},\frac{\al_{i_0+1}}{a_2},\frac{(\xi_{j_0+1}-b_1\al_{i_0+1}-b_2q(\al_{i_0+1}/a_2))}{b_3}\bigg),
\eeqq
\beqq
H(C')=\bigg(\al_{i_0+1},\frac{\al_{i_0}}{a_2},\frac{(\xi_{j_0+1}-b_1\al_{i_0+1}-b_2q(\al_{i_0}/a_2))}{b_3}\bigg),
\eeqq
\beqq
H(D')=\bigg(\al_{i_0+1},\frac{\al_{i_0}}{a_2},\frac{(\xi_{j_0}-b_1\al_{i_0+1}-b_2q(\al_{i_0}/a_2))}{b_3}\bigg).
\eeqq

By \eqref{inversesys6}, given any $(x_0,y_0,z_0)\in U$,
\beqq
\left\{
  \begin{array}{ll}
    x_0=z \\
    y_0=\frac{1}{a_2}(x-a_1p(z)) \\
    z_0=\frac{1}{b_3}(y-b_1z-b_2q(\frac{1}{a_2}(x-a_1p(z))))
  \end{array}
\right.,
\eeqq
there should exist four different solutions by Lemma \ref{monotone} if the solution set is non-empty. Further, since $F$ is a diffeomorphism, $F^{-1}(U)\cap U=F^{-1}(U\cap F(U))$. This, together with the above geometric description of $F(U)$, yields that $F^{-1}(U)\cap U$ has four non-empty connected components.

This completes the proof.
\end{proof}

\begin{remark}
In the last section, we provide an example with the help of Mathematica software to draw the graph of $F(U)\cap U$ and $F^{-1}(U)\cap U$ to illustrate the results of Theorem \ref{twopositive}.
\end{remark}

\begin{lemma} \label{hyptwoexp}
In Cases (i)-(iv) of Theorem \ref{twopositive}, for fixed $a_2$, $b_1$, and $b_3$, and sufficiently large
$|a_1|$ and $|b_2|$, one has that the invariant set is uniformly hyperbolic.
\end{lemma}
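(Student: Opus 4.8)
The plan is to establish uniform hyperbolicity of $\Lambda$ by verifying the cone-field criterion of Lemma \ref{unifhyp}, following the same architecture as the proof of Lemma \ref{hyptwoexp-zero}. Since the unstable dimension here equals two, I would work with the standard unit cone $K_1(\mathbb{R}^2,\mathbb{R})$ with core $\mathbb{R}^2$ (the unstable directions corresponding to the $x$ and $y$ coordinates) and complement $\mathbb{R}$ (the stable direction), exactly as introduced in Lemma \ref{hyptwoexp-zero}. The goal is to show that for sufficiently large $|a_1|$ and $|b_2|$ one has $m_{\mathcal{C},w_0}\geq\lambda$ on $U\cap F^{-1}(U)$ and $m'_{\mathcal{C},w_0}\geq\lambda$ on $U\cap F(U)$, for the fixed $\lambda>1$ chosen before the lemma.

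First I would invoke parts (i) and (ii) of Lemma \ref{monotone} to localize the relevant coordinates: because $0<\al_{i_0}<\al_{i_0+1}$ and $0<\xi_{j_0}<\xi_{j_0+1}$, for $|a_1|,|b_2|$ large enough any point $w=(x,y,z)\in U\cap F^{-1}(U)$ has $x$ confined to $(\al_{i_0},\delta_1]\cup[\delta_1',\al_{i_0+1})$ and $y$ to $(\xi_{j_0},\delta_2]\cup[\delta_2',\xi_{j_0+1})$, where $p'$ and $q'$ are bounded away from zero. Setting $M_0:=\min\{\inf|p'(x)|,\inf|q'(y)|\}>0$ over these sets, I would then impose lower bounds on $|a_1|$ and $|b_2|$ analogous to \eqref{bound-deriv-2-single-zero}, namely $|a_1|\geq(\max\{\lambda+|a_2|,\,1+|a_2b_3|\lambda+|a_2|(1+|b_1|)\})/M_0$ and $|b_2|\geq(\lambda+|b_1|+|b_3|)/M_0$. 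The key difference from Lemma \ref{hyptwoexp-zero} is that the bounds $M_1,M_2,N_1,N_2$ from the four cases of Theorem \ref{twopositive} must be used (in place of the simpler sign conditions \eqref{assumption-2-zero}) to control the magnitude of $a_1p(x_0)$ and $b_2q(y_0)$ along orbits; this mirrors precisely how \eqref{mono-ineq-1-i}--\eqref{bound-deriv-1-ii} upgraded the zero-case argument in the passage from Lemma \ref{hyponeexp-singel} to Lemma \ref{hyponeexp}.

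The expansion estimate is then the routine cone computation: writing $(v_1,u_1)^T=DF_{w_0}(v_0,u_0)^T$ via \eqref{positivederivative}, for a vector in $K_1(\mathbb{R}^2,\mathbb{R})$ I split on whether $|v_0|=|v_{01}|$ or $|v_0|=|v_{02}|$ and use $|v_{11}|\geq|a_1p'(x_0)|\,|v_{01}|-|a_2|\,|v_{01}|\geq\lambda|v_{01}|$ in the first subcase and the analogous bound through $b_2q'(y_0)$ in the second, yielding $m_{\mathcal{C},w_0}\geq\lambda$. For co-expansion I write $(v_{-1},u_{-1})^T=DF^{-1}_{w_0}(v_0,u_0)^T$ via \eqref{inversederivative} for a vector with $|u_0|>|v_0|$ and bound $|u_{-1}|$ from below by the dominant term $|\frac{a_1b_2}{a_2b_3}q'(\tfrac{1}{a_2}(x_0-a_1p(z_0)))p'(z_0)|\,|u_0|$, absorbing the three lower-order terms using the derivative lower bounds, to obtain $m'_{\mathcal{C},w_0}\geq\lambda$. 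Lemma \ref{unifhyp} then gives uniform hyperbolicity (with $N=1$).

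The main obstacle I anticipate is not the cone arithmetic, which transcribes almost verbatim from Lemma \ref{hyptwoexp-zero}, but verifying that the derivative lower bounds actually hold \emph{along orbits} in all four cases (i)--(iv) simultaneously. In the zero-case the localization of $x$ and $y$ followed from the pull-back identities $a_1p(x_0)=x_1-a_2y_0$ and $a_1p(x_{-1})=\cdots$ together with the explicit sign hypotheses \eqref{assumption-2-zero}; here one must instead show that the quantities $a_1p(x_0)$ and $b_2q(y_0)$ are bounded away from the critical window using the gap conditions $M_1<\al_{i_0}$, $N_1<\xi_{j_0}$ (and their Case (ii)--(iv) analogues with $M_2,N_2$), so that \eqref{polyineq1}--\eqref{polyineq2} and Lemma \ref{monotone} force $|a_1p'(x_0)|$ and $|b_2q'(y_0)|$ to exceed the required thresholds. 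I would therefore carry out the four-case bookkeeping once to produce a single pair of inequalities of the form $|a_1p'(x_0)|\geq\lambda+|a_2|$ and $|b_2q'(y_0)|\geq\lambda+|b_1|+|b_3|$ valid uniformly on $\Lambda$, after which the remainder of the proof is identical to the zero-case and can be dispatched by reference to Lemma \ref{hyptwoexp-zero}.
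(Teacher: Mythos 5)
Your proposal follows essentially the same route as the paper: the same cone $K_1(\mathbb{R}^2,\mathbb{R})$, the same localization via Lemma \ref{monotone}, the orbit identities $a_1p(x_0)=x_1-a_2y_0$ and $b_2q(y_0)=y_1-b_1x_0-b_3z_0$ combined with the gap conditions of Cases (i)--(iv) to force $|a_1p'(x_0)|\geq\lambda+|a_2|$ and $|b_2q'(y_0)|\geq\lambda+|b_1|+|b_3|$ (the paper's \eqref{bound-deriv-2-i-iv}), and the identical cone arithmetic for $m_{\mathcal{C},w_0}$ and $m'_{\mathcal{C},w_0}$. The one caution is that your intermediate step setting $M_0:=\min\{\inf|p'|,\inf|q'|\}>0$ and bounding $|a_1|,|b_2|$ as in \eqref{bound-deriv-2-single-zero} is not available here, since Theorem \ref{twopositive} does not assume the roots are simple (so $p'$ may vanish at $\al_{i_0}$ and $M_0$ could be zero); but your final paragraph already supplies the correct replacement --- the ratio bounds from \eqref{polyineq1}--\eqref{polyineq2} and Lemma \ref{monotone} together with the orbitwise lower bounds $a_1p(x_0)\geq\al_{i_0}-M_1$ (and its analogues) --- which is exactly what the paper does in \eqref{mono-ineq-2-i}--\eqref{bound-deriv-2-i-iv}.
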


\begin{proof}

Without loss of generality, assume that $a_2$, $b_1$, and $b_3$ are fixed, $|a_1|$ and $|b_2|$ are large enough. By (i) and (ii) of Lemma \ref{monotone}, we could assume that $|a_1|$ and $|b_2|$ are sufficiently large such that for $w=(x,y,z)\in U\cap F^{-1}(U)$, one has that $x\in(\al_{i_0},\delta_1]\cup[\delta'_1,\al_{i_0+1})$
and $y\in(\xi_{j_0},\delta_2]\cup[\delta'_2,\xi_{j_0+1})$.
For any point $w_0=(x_0,y_0,z_0)\in U$, denote $w_1=F(w_0)=(x_1,y_1,z_1)$ and $w_{-1}=F^{-1}(w_0)=(x_{-1},y_{-1},z_{-1})$.

From \eqref{polyineq1}, \eqref{polyineq2}, and Lemma \ref{monotone}, it follows that for any $w=(x,y,z)\in U\cap F^{-1}(U)$,
\beq\label{mono-ineq-2-i}
\mbox{Case (i)}:  \frac{|a_1p'(x)|}{|a_1p(x)|}\geq\frac{\max\{\lambda+|a_2|,1+|a_2b_3|\lambda+|a_2|(1+|b_1|)\}}{\al_{i_0}-M_1},
\frac{|b_2q'(y)|}{|b_2q(y)|}\geq\frac{\lambda+|b_1|+|b_3|}{\xi_{j_0}-N_1};
\eeq
\beq\label{mono-ineq-2-ii}
\mbox{Case (ii)}:  \frac{|a_1p'(x)|}{|a_1p(x)|}\geq\frac{\max\{\lambda+|a_2|,1+|a_2b_3|\lambda+|a_2|(1+|b_1|)\}}{\al_{i_0}-M_1},
\frac{|b_2q'(y)|}{|b_2q(y)|}\geq\frac{\lambda+|b_1|+|b_3|}{N_2-\xi_{j_0+1}};
\eeq
\beq\label{mono-ineq-2-iii}
\mbox{Case (iii)}: \frac{|a_1p'(x)|}{|a_1p(x)|}\geq\frac{\max\{\lambda+|a_2|,1+|a_2b_3|\lambda+|a_2|(1+|b_1|)\}}{M_2-\al_{i_0+1}},
\frac{|b_2q'(y)|}{|b_2q(y)|}\geq\frac{\lambda+|b_1|+|b_3|}{\xi_{j_0}-N_1};
\eeq
\beq\label{mono-ineq-2-iv}
\mbox{Case (iv)}: \frac{|a_1p'(x)|}{|a_1p(x)|}\geq\frac{\max\{\lambda+|a_2|,1+|a_2b_3|\lambda+|a_2|(1+|b_1|)\}}{M_2-\al_{i_0+1}},
\frac{|b_2q'(y)|}{|b_2q(y)|}\geq\frac{\lambda+|b_1|+|b_3|}{N_2-\xi_{j_0+1}}.
\eeq

Consider the point $w_0\in U\cap F^{-1}(U)$, by \eqref{sys6},
\beqq
\left\{
  \begin{array}{ll}
   x_1=a_1p(x_0)+a_2y_0 \\
    y_1=b_1x_0+b_2q(y_0)+b_3z_0 \\
    z_1=x_0
  \end{array}
\right..
\eeqq
So, $a_1p(x_0)=x_1-a_2y_0$ and $b_2q(y_0)=y_1-b_1x_0-b_3z_0$.
Hence, one has
\beq\label{bound-2-i}
\mbox{Case (i)}:\ \ a_1p(x_0)\geq\al_{i_0}-M_1,\ b_2q(y_0)\geq\xi_{j_0}-N_1;
\eeq
\beq\label{bound-2-ii}
\mbox{Case (ii)}:\ \ a_1p(x_0)\geq \al_{i_0}-M_1,\ -b_2q(y_0)\geq N_2-\xi_{j_0+1};
\eeq
\beq\label{bound-2-iii}
\mbox{Case (iii)}:\ \ -a_1p(x_0)\geq M_2-\al_{i_0+1},\ b_2q(y_0)\geq \xi_{j_0}-N_1;
\eeq
\beq\label{bound-2-iv}
\mbox{Case (iv)}:\ \ -a_1p(x_0)\geq M_2-\al_{i_0+1},\ -b_2q(y_0)\geq N_2-\xi_{j_0+1}.
\eeq

For any point $w_0\in U\cap F(U)$, it follows from \eqref{inversesys6} that
\beqq
\left\{
  \begin{array}{ll}
    x_{-1}=z_0 \\
    y_{-1}=\frac{1}{a_2}(x_0-a_1p(z_0)) \\
    z_{-1}=\frac{1}{b_3}(y_0-b_1z_0-b_2q(\frac{1}{a_2}(x_0-a_1p(z_0))))
  \end{array}
\right..
\eeqq
So,
$a_1p(x_{-1})=a_1p(z_0)=x_0-a_2y_{-1}$ and
$b_2q(y_{-1})=b_2q(\frac{1}{a_2}(x_0-a_1p(z_0)))=y_0-b_1z_0-b_3z_{-1}$.
Thus, one has
\beq\label{inv-bound-2-i}
\mbox{Case (i)}:\ \ a_1p(x_{-1})\geq\al_{i_0}-M_1,\ b_2q(y_{-1})\geq\xi_{j_0}-N_1;
\eeq
\beq\label{inv-bound-2-ii}
\mbox{Case (ii)}:\ \ a_1p(x_{-1})\geq \al_{i_0}-M_1,\ -b_2q(y_{-1})\geq N_2-\xi_{j_0+1};
\eeq
\beq\label{inv-bound-2-iii}
\mbox{Case (iii)}:\ \ -a_1p(x_{-1})\geq M_2-\al_{i_0+1},\ b_2q(y_{-1})\geq \xi_{j_0}-N_1;
\eeq
\beq\label{inv-bound-2-iv}
\mbox{Case (iv)}:\ \ -a_1p(x_{-1})\geq M_2-\al_{i_0+1},\ -b_2q(y_{-1})\geq N_2-\xi_{j_0+1}.
\eeq

By \eqref{mono-ineq-2-i}--\eqref{mono-ineq-2-iv} and \eqref{bound-2-i}--\eqref{bound-2-iv}, one has that for any point $w_0\in U\cap F^{-1}(U)$,
\beq\label{bound-deriv-2-i-iv}
\mbox{Cases (i)--(iv)}:\ |a_1p'(x_0)|\geq\lambda+|a_2|,\ |b_2q'(y_0)|\geq\lambda+|b_1|+|b_3|.
\eeq

Now, it is to show that for any point $w_0\in U\cap F^{-1}(U)$,
 $m_{\mathcal{C},w_0}\geq\lambda>1$.

Consider the unit cone
\beqq
K_1(\mathbb{R}^2, \mathbb{R})=\bigg\{\mathbf{v}=
\left(
\begin{array}{c}
v_0\\
u_0
\end{array}
\right)
:\ v_0=(v_{01},v_{02})^T\in \mathbb{R}^2,\ u_0\in\mathbb{R},\ \mbox{and}\ |u_0|\leq|v_0|\bigg\},
\eeqq
where $|v_0|=\max\{|v_{01}|,|v_{02}|\}$ and $|\mathbf{v}|=\max\{|v_0|,|u_0|\}$.

Suppose that $
\left(
\begin{array}{c}
v_1\\
u_1
\end{array}
\right)
=DF_{w_0}
\left(
\begin{array}{c}
v_0\\
u_0
\end{array}
\right)$, where $
\left(
\begin{array}{c}
v_0\\
u_0
\end{array}
\right)\in K_1(\mathbb{R}^2,\mathbb{R})$.
By \eqref{positivederivative},
\beqq
\left\{
  \begin{array}{ll}
v_{11}=a_1p'(x_0)v_{01}+a_2v_{02}\\
v_{12}=b_1v_{01}+b_2q'(y_0)v_{02}+b_3u_0\\
u_1=v_{01}
  \end{array}.
\right.
\eeqq
By \eqref{bound-deriv-2-i-iv},
if $|v_0|=|v_{01}|$, then $$|v_{11}|\geq |v_{01}||a_1p'(x_0)|-|a_2||v_{01}|\geq\lambda|v_{01}|;$$ if $|v_0|=|v_{02}|$, then $$|v_{12}|\geq|v_{02}||b_2q'(y_0)|-|b_1||v_{02}|-|b_3||u_0|\geq\lambda|v_{02}|,$$ which implies that $|(v_1,u_1)|\geq\lambda|(v_0,u_0)|$.  Hence, $m_{\mathcal{C},w_0}\geq\lambda>1$.

By \eqref{mono-ineq-2-i}--\eqref{mono-ineq-2-iv} and \eqref{inv-bound-2-i}--\eqref{inv-bound-2-iv}, one has that for any point $w_0\in U\cap F(U)$,
\beq\label{bound-deriv-inverse-2}
|a_1p'(x_{-1})|=|a_1p'(z_0)|\geq1+|a_2b_3|\lambda+|a_2|(1+|b_1|),\ |b_2q'(y_{-1})|=\bigg|b_2q'\bigg(\frac{1}{a_2}(x_0-a_1p(z_0))\bigg)\bigg|\geq1.
\eeq

Next, it is to show that
then $m'_{\mathcal{C},w_0}\geq\lambda>1$. For any
$\left(
\begin{array}{c}
v_0\\
u_0
\end{array}
\right)
\not\in K_1(\mathbb{R}^2,\mathbb{R})$, that is ,  $|u_0|>|v_0|$, suppose $
\left(
\begin{array}{c}
v_{-1}\\
u_{-1}
\end{array}
\right)=DF^{-1}_{w_0}
\left(
\begin{array}{c}
v_0\\
u_0
\end{array}
\right)$, by \eqref{inversederivative}, one has
\beqq
\left\{
  \begin{array}{ll}
 v_{-11}=u_0 \\
  v_{-12} =\frac{1}{a_2} v_{01}-\frac{a_1}{a_2}p'(z_0) u_0\\
  u_{-1}=  -\frac{b_2}{a_2b_3}q'(\frac{1}{a_2}(x_0-a_1p(z_0)))v_{01}+\frac{1}{b_3}v_{02} -\frac{b_1}{b_3}u_0+\frac{a_1b_2}{a_2b_3}q'(\frac{1}{a_2}(x_0-a_1p(z_0)))p'(z_0) u_0\\
\end{array}.
\right.
\eeqq
So, by \eqref{bound-deriv-inverse-2},
\beqq
\begin{split}
|u_{-1}|&\geq \bigg|\frac{a_1b_2}{a_2b_3}q'\bigg(\frac{1}{a_2}(x_0-a_1p(z_0))\bigg)p'(z_0)\bigg| |u_0|-\bigg|\frac{b_2}{a_2b_3}q'\bigg(\frac{1}{a_2}(x_0-a_1p(z_0))\bigg)\bigg||v_{01}|-\frac{|v_{02}|}{|b_3|}-\frac{|b_1|}{|b_3|}|u_0|\\
\geq& \bigg|\frac{a_1b_2}{a_2b_3}q'\bigg(\frac{1}{a_2}(x_0-a_1p(z_1))\bigg)p'(z_0)\bigg| |u_0|-\bigg|\frac{b_2}{a_2b_3}q'\bigg(\frac{1}{a_2}(x_0-a_1p(z_0))\bigg)\bigg||v_{0}|-\frac{|v_{02}|}{|b_3|}-\frac{|b_1|}{|b_3|}|u_0|\\
\geq&\bigg|\frac{a_1b_2}{a_2b_3}q'\bigg(\frac{1}{a_2}(x_0-a_1p(z_0))\bigg)p'(z_0)\bigg| |u_0|-\bigg|\frac{b_2}{a_2b_3}q'\bigg(\frac{1}{a_2}(x_0-a_1p(z_0))\bigg)\bigg||u_{0}|-\frac{|u_0|}{|b_3|}-\frac{|b_1|}{|b_3|}|u_0|\\
\geq&\lambda|u_0|,
\end{split}
\eeqq
which yields that $m'_{\mathcal{C},w_0}\geq\lambda>1$.

By Lemma \ref{unifhyp}, one has that the invariant set $\Lambda$ is uniformly hyperbolic. The proof is completed.
\end{proof}

Therefore, it follows from Lemmas \ref{chaoticshift}, \ref{connectedfour}, and \ref{hyptwoexp} that Theorem \ref{twopositive} holds.
\medskip

Based on the proof of Theorems \ref{twopositive-zero} and \ref{twopositive}, one has the following result:
\begin{theorem}\label{twopositive-1zero-1positive}
Suppose that there are $1\leq i_0<r_1$ and $1\leq j_0<r_2$ such that $0=\al_{i_0}<\al_{i_0+1}$ and $0<\xi_{j_0}<\xi_{j_0+1}$, and $m_{i_0}=m_{i_0+1}=1$.
Suppose that
\begin{itemize}
\item [(1).] $a_2<0$ and $a_1p'(0)>0$;
\item [(2).] $\max\{0,
b_3\al_{i_0+1},
b_1\al_{i_0+1},
b_1\al_{i_0+1}+b_3\al_{i_0+1}\}<\xi_{j_0}$ and $b_2q^{(n_{j_0})}(\xi_{j_0})>0$.
\end{itemize}
Then, for fixed $a_2$, $b_1$, and $b_3$, and sufficiently large $|a_1|$ and $|b_2|$, there exists a Smale horseshoe for the map \eqref{sys6}, especially, there is a uniformly hyperbolic invariant set $\Lambda$ on which $F$ is topologically conjugate to the two-sided fullshift on four  symbols. Therefore, they are Li-Yorke chaotic as well as Devaney chaotic.
\end{theorem}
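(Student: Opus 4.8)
The plan is to treat this as a hybrid of the proofs of Theorems \ref{twopositive-zero} and \ref{twopositive}: the first coordinate (driven by $p$) is handled exactly as in the non-negative-zero case, while the second coordinate (driven by $q$) is handled as in Case (i) of the two-positive-zero case. First I would fix an arbitrary constant $\lambda>1$ and set
$$U:=[0,\al_{i_0+1}]\times[\xi_{j_0},\xi_{j_0+1}]\times[0,\al_{i_0+1}],\qquad \Lambda:=\bigcap^{+\infty}_{-\infty}F^i(U),$$
so that the $x$- and $z$-ranges come from the two zeros $0=\al_{i_0}<\al_{i_0+1}$ of $p$ and the $y$-range from the two positive zeros $\xi_{j_0}<\xi_{j_0+1}$ of $q$. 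The observation that makes the two proofs glue together is that assumption (2) is precisely the specialization at $\al_{i_0}=0$ of the quantity $N_1$ of Theorem \ref{twopositive}, namely $\max\{0,b_3\al_{i_0+1},b_1\al_{i_0+1},b_1\al_{i_0+1}+b_3\al_{i_0+1}\}<\xi_{j_0}$.

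The first main step is the connectedness lemma: I claim $F(U)\cap U$ and $F^{-1}(U)\cap U$ each have four non-empty connected components. I would verify this by computing the images of the eight vertices of $U$ under $F$ and under $H=F^{-1}$, as in Lemmas \ref{connectedfour-zero} and \ref{connectedfour}, and then invoking the parabolic-fold geometry: by \eqref{sys6} the first coordinate is a parabola-like function of $x$ and the second coordinate is a parabola-like function of $y$. For $|a_1|$ large, condition (1) (that is, $a_2<0$ and $a_1p'(0)>0$) forces the $x$-fold to sweep $[0,\al_{i_0+1}]$ twice; for $|b_2|$ large, condition (2) forces the $y$-fold to sweep $[\xi_{j_0},\xi_{j_0+1}]$ twice, the max-condition guaranteeing that the shift terms $b_1x+b_3z$ keep the image inside the correct slab. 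Since these two folds act on independent variables, $F(U)\cap U$ has $2\times 2=4$ components, and since $F$ is a diffeomorphism with $F^{-1}(U)\cap U=F^{-1}(U\cap F(U))$, the same count transfers to $F^{-1}(U)\cap U$; the existence of the four inverse-image solutions is supplied by Lemma \ref{monotone}.

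The second main step is hyperbolicity, and here I expect essentially no new work, because the derivative \eqref{positivederivative} and its inverse \eqref{inversederivative} are unchanged. I would reuse the unit cone $K_1(\mathbb{R}^2,\mathbb{R})$ (two expanding directions in $x,y$, one contracting direction in $z$). From \eqref{polyineq1}--\eqref{polyineq2} and Lemma \ref{monotone}, for $|a_1|,|b_2|$ large enough the orbit stays in the region where $|p'|$ and $|q'|$ are bounded below; the identities $a_1p(x_0)=x_1-a_2y_0$ and $b_2q(y_0)=y_1-b_1x_0-b_3z_0$, together with their inverse-orbit analogs, then give the lower bounds $|a_1p'(x_0)|\geq\lambda+|a_2|$ and $|b_2q'(y_0)|\geq\lambda+|b_1|+|b_3|$ exactly as in \eqref{bound-deriv-2-i-iv}. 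The cone estimates $m_{\mathcal{C},w_0}\geq\lambda$ and $m'_{\mathcal{C},w_0}\geq\lambda$ then follow verbatim from the computation in Lemma \ref{hyptwoexp}, and Lemma \ref{unifhyp} yields uniform hyperbolicity of $\Lambda$. Combining the two steps, $F|_\Lambda$ is topologically conjugate to the two-sided fullshift on four symbols, and Lemma \ref{chaoticshift} gives Li-Yorke and Devaney chaos.

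The genuinely delicate point is the first step: verifying the fold geometry in the mixed regime, where the $x$-fold touches the boundary $x=0$ of $U$ (because $\al_{i_0}=0$) while the $y$-fold stays in the interior, with $\xi_{j_0}>0$. One must check that these two asymmetric folds still produce four clean rectangular components rather than having components merge or degenerate at the boundary $x=0$; this is where the precise signs in conditions (1) and (2) are used, and it is the reason the theorem is stated as a consequence of the two earlier proofs rather than a direct special case of either.
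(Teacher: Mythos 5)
Your proposal is correct and follows essentially the same route as the paper: the paper gives no separate argument for this theorem, stating only that it follows ``based on the proof of Theorems \ref{twopositive-zero} and \ref{twopositive},'' and your hybrid construction --- the box $U=[0,\al_{i_0+1}]\times[\xi_{j_0},\xi_{j_0+1}]\times[0,\al_{i_0+1}]$, the vertex/fold analysis for the four components, and the verbatim reuse of the cone estimates from Lemma \ref{hyptwoexp} --- is precisely that intended combination. If anything, your write-up supplies more detail than the paper does, in particular the observation that assumption (2) is the specialization of $N_1<\xi_{j_0}$ at $\al_{i_0}=0$.
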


\begin{theorem} \label{twopositive-nzero}
For system \eqref{sys6}, suppose that $p(x)=\prod^m_{i=1}(x-\al_i)$ and
$q(y)=\prod^n_{j=1}(y-\xi_j)$,
where $\al_1<\al_{2}<\cdots<\al_m$ and $\xi_1<\xi_2<\cdots<\xi_n$ are real numbers, $m\geq2$, and $n\geq2$.
Then, for fixed $a_2$, $b_1$, and $b_3$, and sufficiently large $|a_1|$ and $|b_2|$, there exist a Smale horseshoe for the map \eqref{sys6} and a uniformly hyperbolic invariant set $\Lambda$ on which $F$ is topologically conjugate to the two-sided fullshift on $mn$  symbols. Therefore, they are chaotic in the sense of both Li-Yorke and Devaney.
\end{theorem}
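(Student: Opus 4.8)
The plan is to adapt the argument of Theorem~\ref{onepositive-n-zero} to the two--expansion setting of map~\eqref{sys6}, replacing the single fold in the $x$--direction by two independent folds, one in $x$ governed by $p$ and one in $y$ governed by $q$, whose product yields $mn$ symbols. Note that, just as Theorem~\ref{onepositive-n-zero} dispenses with the sign conditions of Theorems~\ref{onepositive-zero}--\ref{onepositive}, here no sign restrictions on the parameters are needed: simplicity of the roots alone produces clean folds once $|a_1|,|b_2|$ are large. First I would isolate the roots. Since $p$ and $q$ have only simple real zeros, I can choose positive constants $\eta_i$ and $\theta_j$ so that the intervals $V_i=[\al_i-\eta_i,\al_i+\eta_i]$ ($1\le i\le m$) are pairwise disjoint with $|p'|>0$ on $\bigcup_i V_i$, and $W_j=[\xi_j-\theta_j,\xi_j+\theta_j]$ ($1\le j\le n$) are pairwise disjoint with $|q'|>0$ on $\bigcup_j W_j$. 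I then set
\[
U:=[\al_1-\eta_1,\al_m+\eta_m]\times[\xi_1-\theta_1,\xi_n+\theta_n]\times[\al_1-\eta_1,\al_m+\eta_m],\quad \Lambda:=\bigcap^{+\infty}_{-\infty}F^i(U),
\]
fix $\lambda>1$, and define $M_0:=\min\{\inf_{x\in\cup V_i}|p'(x)|,\ \inf_{y\in\cup W_j}|q'(y)|\}>0$ together with thresholds $N_0,N_1$ built from $M_0$ and from $M_1:=\min\{\min_i|p(\al_i\pm\eta_i)|,\ \min_j|q(\xi_j\pm\theta_j)|\}$, exactly as the constants in \eqref{bound-deriv-2-single-zero} and in the proof of Theorem~\ref{onepositive-n-zero}.

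The hyperbolicity step is then a verbatim transcription of Lemma~\ref{hyptwoexp-zero}. For $|a_1|,|b_2|>N_0$, the confinement argument shows that every $w=(x,y,z)\in U\cap F^{-1}(U)$ has $x\in\bigcup_i V_i$ and $y\in\bigcup_j W_j$: indeed $a_1p(x)=x_1-a_2y$ and $b_2q(y)=y_1-b_1x-b_3z$ are bounded by the fixed diameter of $U$, so $|p(x)|$ and $|q(y)|$ are of order $1/|a_1|$ and $1/|b_2|$, which for large parameters forces $x,y$ into the isolating intervals, where $|p'|,|q'|\ge M_0$; the same holds for $x_{-1},y_{-1}$ on $U\cap F(U)$ through \eqref{inversesys6}. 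Consequently $|a_1p'(x_0)|$ and $|b_2q'(y_0)|$ (and their backward counterparts) are as large as required, so the cone--field estimates of Lemma~\ref{hyptwoexp-zero} for the unit cone $K_1(\mathbb{R}^2,\mathbb{R})$ carry over unchanged, giving $m_{\mathcal{C},w_0}\ge\lambda$ and $m'_{\mathcal{C},w_0}\ge\lambda$. Lemma~\ref{unifhyp} with $N=1$ then yields that $\Lambda$ is uniformly hyperbolic.

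The remaining step is the count of components, which I expect to be the main obstacle. For $|a_1|,|b_2|>N_1$ the set $F(U)$ is built, as in the geometric description of Lemma~\ref{connectedfour}, by sweeping the image of a horizontal face along the segment $F(AA')$; but now, because $p$ has $m$ simple roots, the nonlinear map $x\mapsto a_1p(x)$ crosses the $x$--band of $U$ transversally once in a neighborhood of each $\al_i$ and is pushed out of the band elsewhere (where $|a_1p|\ge|a_1|M_1$ is large), producing $m$ full folds; likewise $y\mapsto b_2q(y)$ produces $n$ full folds. Since these act on the two distinct image coordinates $f_1$ and $f_2$ while the contraction $f_3=x$ fixes the thin direction, the folds are independent, so $F(U)\cap U$ has exactly $mn$ connected components, each a full sub--box of $U$ indexed by a pair $(i,j)$; as $F$ is a diffeomorphism, $F^{-1}(U)\cap U=F^{-1}(F(U)\cap U)$ likewise has $mn$ components. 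The delicate point is to verify that each of the $mn$ crossings is genuinely full and transversal (a Markov property), so that the standard coding yields a homeomorphism from $\Lambda$ onto the two--sided fullshift on $mn$ symbols conjugating $F$ to the shift; here the simplicity of the roots is essential, as it excludes tangential crossings. Combining the uniform hyperbolicity, this $mn$--component Markov structure, and Lemma~\ref{chaoticshift} then gives the topological conjugacy and hence Li--Yorke and Devaney chaos, for all $|a_1|,|b_2|>\max\{N_0,N_1\}$.
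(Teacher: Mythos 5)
Your proposal is correct and follows essentially the same route as the paper's proof: the same isolating intervals $V_i$, $W_j$ around the simple roots, the same box $U$ and constants $M_0,M_1,N_0,N_1$ (the paper merely keeps separate constants $M_0,M_0',N_0,N_0'$ etc.\ for $p$ and $q$), the same reduction of uniform hyperbolicity to the cone estimates of Lemma~\ref{hyptwoexp-zero}, and the same $mn$-fold component count from the simplicity of the roots, concluded via Lemma~\ref{chaoticshift}. If anything, you supply more detail on the confinement and Markov-crossing steps than the paper, which disposes of them with a brief appeal to ``similar approaches'' and ``simple computation.''
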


\begin{proof}
Fix a constant $\ld>1$. It follows from the properties of $p(x)$ and $q(y)$ that there exist positive constants $\eta_i$, $1\leq i\leq m$, and $\tau_j$, $1\leq j\leq n$, such that, the sets $V_i=[\al_i-\eta_i, \al_i+\eta_i]$, $1\leq i\leq m$, are pairwise disjoint, and $|p'(x)|>0$ for all $x\in V_i$, $1\leq i\leq m$; the sets $W_j=[\xi_j-\tau_j,\xi_j+\tau_j]$, $1\leq j\leq n$, are pairwise disjoint, and $|q'(y)|>0$ for all $y\in W_j$, $1\leq j\leq n$.

Denote
$$U:=[\al_1-\eta_1,\al_m+\eta_m]\times[\xi_1-\tau_1,\xi_n+\tau_n]\times[\al_1-\eta_1,\al_m+\eta_m],$$ and
$$\Ld:=\bigcap^{+\infty}_{-\infty}F^{j}(U).$$
Set
\beqq
M_0:=\inf_{x\in\cup^m_{i=1}V_i}|p'(x)|,\ M'_0:=\inf_{y\in\cup^n_{j=1}W_j}|q'(y)|;
\eeqq
\beqq
M_1:=\min_{1\leq i\leq m}|p(\al_i\pm\eta_i)|,\ M'_1:=\min_{1\leq j\leq n}|q(\xi_j\pm\tau_j)|;
\eeqq
\beqq
N_0:=\frac{\max\{\lambda+|a_2|,1+|a_2b_3|\lambda+|a_2|(1+|b_1|)\}}{M_0},\
N'_0:=\frac{\lambda+|b_1|+|b_3|}{M'_0};
\eeqq
\beqq
N_1:=\frac{|a_2|\max\{|\xi_1-\tau_1|,|\xi_n-\tau_n|\}+\max\{|\al_1-\eta_1|,|\al_m-\eta_m|\}}{M_1},
\eeqq
\beqq
N'_1:=\frac{\max\{|\xi_1-\tau_1|,|\xi_n-\tau_n|\}+(|b_1|+|b_3|)\max\{|\al_1-\eta_1|,|\al_m-\eta_m|\}}{M'_1}.
\eeqq
If $|a_1|>N_0$ and $|b_2|>N'_0$, by applying similar approaches in the proof of Lemma \ref{hyptwoexp-zero}, one has that
the invariant set $\Ld$ is uniformly hyperbolic, where the constants $N_0$ and $N'_0$ have similar effects with the constants specified in \eqref{bound-deriv-2-single-zero}.

If $|a_1|>N_1$ and $|b_2|>N'_1$, it follows from simple computation and  the fact that $p(x)$ and $q(y)$ have simple real roots, one has that $F(U)\cap U$ has $mn$ connected components.

Hence, if $|a_1|>\max\{N_0,N_1\}$ and $|b_2|>\max\{N'_0,N'_1\}$, then there exist a Smale horseshoe and a hyperbolic invariant set on which $F$ is topologically conjugate to the two-sided fullshift on $mn$ symbols. This, together with Lemma \ref{chaoticshift}, yields that $F$ is Li-Yorke chaotic as well as Devaney chaotic. This completes the proof.

\end{proof}

\begin{remark}
Similar results could be obtained for different parameters. For example, we could assume that $b_3=c_1=c_3=0$ and $a_3b_1\neq0$.
\end{remark}
\medskip

\subsection{The expanding maps}

In this subsection, we consider the following type of maps: $F:(x,y,z)\in\mathbb{R}^3\to(f_1,f_2,f_3)\in\mathbb{R}^3:$
\begin{equation}\label{sysexp1}
\left\{
  \begin{array}{ll}
    f_1(x,y,z)=a_1p(x)+a_2y+a_3z \\
    f_2(x,y,z)=b_1x+b_2q(y)+b_3z \\
    f_3(x,y,z)=c_1x+c_2y+c_3r(z)
  \end{array}
\right.,
\end{equation}
where $a_1b_2c_3\neq0$.

The derivative of \eqref{sysexp1} is
\begin{equation}\label{derivativeexp1}
DF=\left(
  \begin{array}{ccc}
    a_1p'(x) & a_2 & a_3 \\
    b_1 & b_2q'(y) & b_3 \\
    c_1 & c_2 & c_3r'(z)\\
  \end{array}
\right).
\end{equation}
Generally, the map \eqref{sysexp1} is not invertible.

An example is given as follows:
\begin{equation*}
\left\{
  \begin{array}{ll}
    f_1(x,y,z)=a_1x(1-x)+a_2y+a_3z \\
    f_2(x,y,z)=b_1x+b_2y(1-y)+b_3z \\
    f_3(x,y,z)=c_1x+c_2y+c_3z(1-z)
  \end{array}
\right..
\end{equation*}
This example can also be thought of as the generalization of the H\'{e}non map in three-dimensional spaces.

\begin{theorem} \label{threepositive-zero}
Suppose that there are $1\leq i_0<r_1$, $1\leq j_0<r_2$, and
$1\leq k_0<r_3$ such that $0=\al_{i_0}<\al_{i_0+1}$, $0=\xi_{j_0}<\xi_{j_0+1}$,
and $0=\kappa_{k_0}<\kappa_{k_0+1}$, and $m_{i_0}=m_{i_0+1}=n_{j_0}=n_{j_0+1}=l_{k_0}=l_{k_0+1}=1$, that is, there are two distinct non-negative zeros of $p(x)$, $q(y)$, and $r(z)$. Suppose that
\begin{itemize}
\item[(1).]
$a_2\leq0,\ a_3\leq0,\ \mbox{and}\ a_1p'(0)>0;$
\item[(2).] $b_1\leq0,\ b_3\leq0,\ \mbox{and}\ b_2q'(0)>0;$
\item[(3).] $c_1\leq0,\ c_2\leq0,\ \mbox{and}\ c_3r'(0)>0.$
\end{itemize}
For fixed $a_2$, $a_3$, $b_1$, $b_3$, $c_1$, and $c_2$, if $|a_1|$, $|b_2|$, and $|c_3|$ are sufficiently large, then there exists a forward invariant set of the map \eqref{sysexp1} on which the map is topologically semi-conjugate to the one-sided fullshift on eight symbols and it is chaotic in the sense of Li-Yorke.
\end{theorem}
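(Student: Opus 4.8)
The plan is to treat this non-invertible case by the method of \emph{coupled expansion} together with one-sided symbolic dynamics, rather than by the cone-field / uniform-hyperbolicity argument used in the invertible Theorems \ref{onepositive-zero}--\ref{twopositive-nzero}; since $F$ need not be a diffeomorphism here, one can only expect a semi-conjugacy \emph{onto} a one-sided fullshift, which is exactly what the statement asserts. First I would fix the compact box
\beqq
U:=[0,\al_{i_0+1}]\times[0,\xi_{j_0+1}]\times[0,\kappa_{k_0+1}],
\eeqq
and observe, exactly as in the geometric discussion of Lemma \ref{connectedfour-zero}, that each of the three coordinate functions is parabola-like between its two non-negative zeros: $f_1$ folds the $x$-interval $[0,\al_{i_0+1}]$ onto itself twice, $f_2$ folds the $y$-interval twice, and $f_3$ folds the $z$-interval twice. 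Taking the product of these three two-to-one folds produces eight pairwise disjoint closed subsets $U_1,\dots,U_8\subset U$, one for each choice of branch in each coordinate.

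The central step is to verify the \textbf{strict covering relation} $F(U_\ell)\supset U$ for every $\ell\in\{1,\dots,8\}$. Here the sign hypotheses (1)--(3) are what make the branches line up: on $U$ the coupling terms $a_2y+a_3z$, $b_1x+b_3z$, and $c_1x+c_2y$ are all $\le 0$, because $y,z\ge0$ (resp. $x,z\ge0$, $x,y\ge0$) while the off-diagonal coefficients are non-positive, and the conditions $a_1p'(0)>0$, $b_2q'(0)>0$, $c_3r'(0)>0$ fix the orientation of each fold. Consequently, once $|a_1|,|b_2|,|c_3|$ are large, the dominant terms $a_1p(x)$, $b_2q(y)$, $c_3r(z)$ sweep the full ranges $[0,\al_{i_0+1}]$, $[0,\xi_{j_0+1}]$, $[0,\kappa_{k_0+1}]$ respectively, and the non-positive coupling terms keep each image from spilling below $0$, forcing every branch to cover all of $U$. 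Since every $U_\ell$ then covers every $U_{\ell'}$, the associated transition matrix is the full $8\times8$ matrix of ones.

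Next I would establish \emph{expansion}. Using Lemma \ref{monotone}(i)--(iii) I may discard the narrow neighborhoods of the turning points, so that on $\bigcup_\ell U_\ell$ the derivatives $p'(x)$, $q'(y)$, $r'(z)$ are bounded away from $0$; then for $|a_1|,|b_2|,|c_3|$ large the matrix $DF$ in \eqref{derivativeexp1} is strictly diagonally dominant, which yields a constant $\mu>1$ with $|DF_w(v)|\ge\mu|v|$ for all $w\in\bigcup_\ell U_\ell$, i.e. $F$ is distance-expanding on the branches. Finally I would invoke the coupled-expansion framework of \cite{ShiJuChen}: the forward-invariant set $\Lambda:=\bigcap_{n\ge0}F^{-n}\big(\bigcup_\ell U_\ell\big)$ carries a continuous surjective itinerary map $\pi:\Lambda\to\sum^{+}_{8}$ that conjugates $F|_\Lambda$ to the one-sided shift $\sigma$, and the expansion constant $\mu>1$ supplies the separation needed to lift the chaoticity of $\sigma$ (Lemma \ref{chaoticshift}) to Li-Yorke chaos of $F$ on $\Lambda$.

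The main obstacle is precisely the covering step $F(U_\ell)\supset U$: unlike the invertible theorems, where it sufficed to examine $F(U)\cap U$ and $F^{-1}(U)\cap U$ and count connected components, here I must prove that each of the eight folded branches maps \emph{onto} the whole box, and these three one-dimensional folds are entangled through the linear coupling, so the estimate must control all three coordinates simultaneously. The purpose of hypotheses (1)--(3) is exactly to fix the worst-case signs, guaranteeing that the coupling never defeats the covering; making this quantitative, and simultaneously compatible with the diagonal-dominance bound needed for expansion, is the technical heart of the argument, after which the passage from semi-conjugacy to Li-Yorke chaos is routine given $\mu>1$.
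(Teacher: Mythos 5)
Your proposal is correct and follows essentially the same route as the paper: take the box $U=[0,\al_{i_0+1}]\times[0,\xi_{j_0+1}]\times[0,\kappa_{k_0+1}]$, use Lemma \ref{monotone} to stay away from the turning points so that a sup-norm diagonal-dominance estimate on $DF$ gives expansion, exhibit eight disjoint closed subsets each mapping onto $U$ (the coupled-expanding structure), and invoke the coupled-expansion theorem to obtain the semi-conjugacy to the one-sided fullshift on eight symbols and Li-Yorke chaos. Your write-up actually spells out the sign analysis behind the covering relation $F(U_\ell)\supset U$ more explicitly than the paper, which simply asserts the existence of the sets $V_1,\dots,V_8$; the only slip is the attribution of the coupled-expansion theorem, which the paper takes from \cite{ZhangShiChen2013} rather than \cite{ShiJuChen}.
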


\begin{proof}
Denote $$U:=[0,\al_{i_0+1}]\times[0,\xi_{j_0+1}]\times[0,\kappa_{k_0+1}].$$

Fix $\lambda>1$, $a_2$, $a_3$, $b_1$, $b_3$, $c_1$, and $c_2$. It follows from Lemma \ref{monotone} that one could assume that $|a_1|$, $|b_2|$, and $|c_3|$ are sufficiently large such that for $w=(x,y,z)\in U\cap F^{-1}(U)$, one has that $x\in[0,\delta_1]\cup[\delta'_1,\al_{i_0+1}]$, $y\in[0,\delta_2]\cup[\delta'_2,\xi_{j_0+1}]$,
and $z\in[0,\delta_3]\cup[\delta'_3,\kappa_{k_0+1}]$.

Set $$M_0:=\min\bigg\{\min_{x\in[0,\delta_1]\cup[\delta'_1,\al_{i_0+1}]}|p'(x)|,\min_{y\in[0,\delta_2]\cup[\delta'_2,\xi_{j_0+1}]}|q'(y)|,
\min_{z\in[0,\delta_3]\cup[\delta'_3,\kappa_{k_0+1}]}|r'(z)|\bigg\}.$$
Suppose that
\beq\label{bound-deriv-3-single-zero}
|a_1|\geq\frac{\lambda+|a_2|+|a_3|}{M_0},\ |b_2|\geq\frac{\lambda+|b_1|+|b_3|}{M_0},\ |c_3|\geq\frac{\lambda+|c_1|+|c_2|}{M_0}.
\eeq

For $v=(v_{01},v_{02},v_{03})^T\in\mathbb{R}^3$, set $|v|=\max_{1\leq i\leq3}|v_{0i}|$.
By \eqref{derivativeexp1},
\beqq
\left\{
  \begin{array}{ll}
v_{11}=a_1p'(x_0)v_{01}+a_2v_{02}+a_3v_{03}\\
v_{12}=b_1v_{01}+b_2q'(y_0)v_{02}+b_3v_{03}\\
v_{13}=c_1v_{01}+c_2v_{02}+c_3r'(z_0)v_{03}
\end{array}
\right..
\eeqq
Without loss of generality, suppose that $|v|=|v_{01}|$. Hence, by \eqref{bound-deriv-3-single-zero}, one has
\beqq
|v_1|=\max_{1\leq i\leq 3}|v_{1i}|\geq|v_{11}|\geq|a_1p'(x_0)||v|-|a_2||v|-|a_3||v|\geq\lambda|v|,
\eeqq
which implies that $\|DF\|\geq\lambda$. So, one has that
the map is expansion on $U\cap F^{-1}(U)$.

For fixed $a_2,a_3,b_1,b_3,c_1,c_2$, if $|a_1|$, $|b_2|$, and $|c_3|$ are sufficiently large, then there exist eight different closed subsets $V_1,...,V_8$ of $U$ such that $V_i\cap V_j=\emptyset$, $1\leq i\neq j\leq 8$, and $F(V_i)=U$.  This is a coupled-expanding map \cite{ZhangShi,ZhangShiChen2013}. By Theorem 3.1 in \cite{ZhangShiChen2013}, there exists a forward invariant set on which the map is topologically semi-conjugate to the one-sided fullshift on eight symbols, and it is chaotic in the sense of Li-Yorke. This completes the proof.
\end{proof}

\begin{theorem} \label{threepositive}
Suppose that there are $1\leq i_0<r_1$, $1\leq j_0<r_2$, and
$1\leq k_0<r_3$ such that $0<\al_{i_0}<\al_{i_0+1}$, $0<\xi_{j_0}<\xi_{j_0+1}$,
and $0<\kappa_{k_0}<\kappa_{k_0+1}$, that is, there are two distinct positive zeros of $p(x)$, $q(y)$, and $r(z)$. Set
\beqq
M_1:=\max\{a_2\xi_{j_0}+a_3\kappa_{k_0},a_2\xi_{j_0+1}+a_3\kappa_{k_0},a_2\xi_{j_0}+a_3\kappa_{k_0+1},a_2\xi_{j_0+1}+a_3\kappa_{k_0+1}\},\eeqq
\beqq
M_2:=\min\{a_2\xi_{j_0}+a_3\kappa_{k_0},a_2\xi_{j_0+1}+a_3\kappa_{k_0},a_2\xi_{j_0}+a_3\kappa_{k_0+1},a_2\xi_{j_0+1}+a_3\kappa_{k_0+1}\},\eeqq
\beqq
N_1:=\max\{b_1\al_{i_0}+b_3\kappa_{k_0},b_1\al_{i_0+1}+b_3\kappa_{k_0},b_1\al_{i_0}+b_3\kappa_{k_0+1},b_1\al_{i_0+1}+b_3\kappa_{k_0+1}\},\eeqq
\beqq N_2:=\min\{b_1\al_{i_0}+b_3\kappa_{k_0},b_1\al_{i_0+1}+b_3\kappa_{k_0},b_1\al_{i_0}+b_3\kappa_{k_0+1},b_1\al_{i_0+1}+b_3\kappa_{k_0+1}\},\eeqq
\beqq
Q_1:=\max\{c_1\al_{i_0}+c_2\xi_{j_0},c_1\al_{i_0+1}+c_2\xi_{j_0},c_1\al_{i_0}+c_2\xi_{j_0+1},c_1\al_{i_0+1}+c_2\xi_{j_0+1}\},\eeqq
\beqq Q_2:=\min\{c_1\al_{i_0}+c_2\xi_{j_0},c_1\al_{i_0+1}+c_2\xi_{j_0},c_1\al_{i_0}+c_2\xi_{j_0+1},c_1\al_{i_0+1}+c_2\xi_{j_0+1}\}.\eeqq
Consider the following different situations:
\begin{itemize}
\item [(i).] $M_1<\al_{i_0}$ and $a_1p^{(m_{i_0})}(\al_{i_0})>0$, $N_1<\xi_{j_0}$ and $b_2q^{(n_{j_0})}(\xi_{j_0})>0$, and $Q_1<\kappa_{k_0}$ and $c_3r^{(l_{k_0})}(\kappa_{k_0})>0$;
\item [(ii).] $M_1<\al_{i_0}$ and $a_1p^{(m_{i_0})}(\al_{i_0})>0$, $N_1<\xi_{j_0}$ and $b_2q^{(n_{j_0})}(\xi_{j_0})>0$, and $Q_2>\kappa_{k_0+1}$ and $c_3r^{(l_{k_0})}(\kappa_{k_0})<0$;
\item [(iii).] $M_1<\al_{i_0}$ and $a_1p^{(m_{i_0})}(\al_{i_0})>0$, $N_2>\xi_{j_0+1}$ and $b_2q^{(n_{j_0})}(\xi_{j_0})<0$, and $Q_1<\kappa_{k_0}$ and $c_3r^{(l_{k_0})}(\kappa_{k_0})>0$;
\item [(iv).]$M_1<\al_{i_0}$ and $a_1p^{(m_{i_0})}(\al_{i_0})>0$, $N_2>\xi_{j_0+1}$ and $b_2q^{(n_{j_0})}(\xi_{j_0})<0$, and $Q_2>\kappa_{k_0+1}$ and $c_3r^{(l_{k_0})}(\kappa_{k_0})<0$;
\item [(v).]$M_2>\al_{i_0+1}$ and $a_1p^{(m_{i_0})}(\al_{i_0})<0$, $N_1<\xi_{j_0}$ and $b_2q^{(n_{j_0})}(\xi_{j_0})>0$, and $Q_1<\kappa_{k_0}$ and $c_3r^{(l_{k_0})}(\kappa_{k_0})>0$;
\item [(vi).]$M_2>\al_{i_0+1}$ and $a_1p^{(m_{i_0})}(\al_{i_0})<0$, $N_1<\xi_{j_0}$ and $b_2q^{(n_{j_0})}(\xi_{j_0})>0$, and $Q_2>\kappa_{k_0+1}$ and $c_3r^{(l_{k_0})}(\kappa_{k_0})<0$;
\item [(vii).]$M_2>\al_{i_0+1}$ and $a_1p^{(m_{i_0})}(\al_{i_0})<0$,  $N_2>\xi_{j_0+1}$ and $b_2q^{(n_{j_0})}(\xi_{j_0})<0$, and $Q_1<\kappa_{k_0}$ and $c_3r^{(l_{k_0})}(\kappa_{k_0})>0$;
\item [(viii).]$M_2>\al_{i_0+1}$ and $a_1p^{(m_{i_0})}(\al_{i_0})<0$, $N_2>\xi_{j_0+1}$ and $b_2q^{(n_{j_0})}(\xi_{j_0})<0$, and $Q_2>\kappa_{k_0+1}$ and $c_3r^{(l_{k_0})}(\kappa_{k_0})<0$.
\end{itemize}
For fixed $a_2$, $a_3$, $b_1$, $b_3$, $c_1$, $c_2$, if $|a_1|$, $|b_2|$, and $|c_3|$ are sufficiently large, then there exists a forward invariant set of the map \eqref{sysexp1} on which the map is topologically semi-conjugate to the one-sided fullshift on eight symbols and it is chaotic in the sense of Li-Yorke.
\end{theorem}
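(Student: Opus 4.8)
The plan is to follow the template of the proof of Theorem \ref{threepositive-zero}, replacing the box anchored at the origin by the box anchored at the positive zeros and replacing the simple sign hypotheses used there by the case-dependent bounds $M_1,M_2,N_1,N_2,Q_1,Q_2$, exactly in the way that the proof of Theorem \ref{onepositive} generalized that of Theorem \ref{onepositive-zero}. Concretely, I would set
\beqq
U:=[\al_{i_0},\al_{i_0+1}]\times[\xi_{j_0},\xi_{j_0+1}]\times[\kappa_{k_0},\kappa_{k_0+1}],
\eeqq
fix $\ld>1$ together with $a_2,a_3,b_1,b_3,c_1,c_2$, and let $|a_1|,|b_2|,|c_3|$ be large. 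By Lemma \ref{monotone}(i)--(iii) I may assume that every $w=(x,y,z)\in U\cap F^{-1}(U)$ has $x\in(\al_{i_0},\delta_1]\cup[\delta'_1,\al_{i_0+1})$, and similarly for $y$ and $z$, i.e.\ its coordinates lie where $|p'|,|q'|,|r'|$ are bounded away from zero. Since the map is generally non-invertible, I aim only at expansion and a coupled-expanding structure, not at a cone field via Lemma \ref{unifhyp}; the output will accordingly be a forward invariant set, a semi-conjugacy to the one-sided fullshift, and Li--Yorke chaos.

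The first substantive step is the expansion estimate. For $w_0=(x_0,y_0,z_0)\in U\cap F^{-1}(U)$ with $w_1=F(w_0)$, the coordinate equations of \eqref{sysexp1} give $a_1p(x_0)=x_1-a_2y_0-a_3z_0$, $b_2q(y_0)=y_1-b_1x_0-b_3z_0$, and $c_3r(z_0)=z_1-c_1x_0-c_2y_0$. The case hypotheses are designed exactly so that these are bounded below: in Case (i), since $a_2y_0+a_3z_0\le M_1$, $b_1x_0+b_3z_0\le N_1$, $c_1x_0+c_2y_0\le Q_1$ and $x_1\ge\al_{i_0}$, $y_1\ge\xi_{j_0}$, $z_1\ge\kappa_{k_0}$, one gets $a_1p(x_0)\ge\al_{i_0}-M_1>0$, $b_2q(y_0)\ge\xi_{j_0}-N_1>0$, $c_3r(z_0)\ge\kappa_{k_0}-Q_1>0$; the remaining seven cases give the analogous bounds with signs and endpoints adjusted (using $M_2,N_2,Q_2$ against $\al_{i_0+1},\xi_{j_0+1},\kappa_{k_0+1}$). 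Combining these with the logarithmic-derivative inequalities from \eqref{polyineq1}--\eqref{polyineq3} and Lemma \ref{monotone} yields $|a_1p'(x_0)|\ge\ld+|a_2|+|a_3|$, $|b_2q'(y_0)|\ge\ld+|b_1|+|b_3|$, and $|c_3r'(z_0)|\ge\ld+|c_1|+|c_2|$. Then, with the max-norm on $\mathbb{R}^3$, for any $v$ let the coordinate realizing $|v|$ be the first one (the other two are symmetric); the first row of $DF$ in \eqref{derivativeexp1} gives $|DF_{w_0}v|\ge|a_1p'(x_0)||v|-|a_2||v|-|a_3||v|\ge\ld|v|$, so $F$ is expanding on $U\cap F^{-1}(U)$. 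This runs in parallel for all eight cases, just as Lemma \ref{hyptwoexp} treated its four cases uniformly.

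The second step is the covering structure. Each of $p,q,r$ is parabola-like on the interval between its two zeros, so $[\al_{i_0},\al_{i_0+1}]$ splits at its interior critical point into two monotone subintervals, and likewise for $y$ and $z$; the $2\times2\times2$ product gives eight closed boxes $V_1,\dots,V_8\subset U$ with pairwise disjoint interiors, on each of which every coordinate map is monotone in its own variable. Because $|a_1|,|b_2|,|c_3|$ dominate all cross terms, an intermediate-value argument (fixing the two transverse variables and sweeping the remaining one) shows the range of each coordinate of $F$ over $V_i$ contains the corresponding side of $U$, whence $F(V_i)=U$ for $1\le i\le 8$. Thus $F$ is a strictly coupled-expanding map for $(V_1,\dots,V_8)$ in the sense of \cite{ZhangShi,ZhangShiChen2013}, and Theorem 3.1 of \cite{ZhangShiChen2013}, together with the expansion above, furnishes a forward invariant set on which $F$ is topologically semi-conjugate to the one-sided fullshift on eight symbols and is Li--Yorke chaotic, which is the assertion of Theorem \ref{threepositive}.

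I expect the main obstacle to be the covering step $F(V_i)=U$. Because the three coordinate equations are coupled, monotonicity of each $f_m$ in a single variable is not by itself enough: one must verify that, after fixing the two transverse variables in their ranges, the dominant term $a_1p(x_0)$ (respectively $b_2q(y_0)$, $c_3r(z_0)$) forces the corresponding coordinate of $F$ to sweep across the entire side $[\al_{i_0},\al_{i_0+1}]$ (respectively $[\xi_{j_0},\xi_{j_0+1}]$, $[\kappa_{k_0},\kappa_{k_0+1}]$), uniformly over the transverse box. This is precisely where the bounds $M_1,M_2,N_1,N_2,Q_1,Q_2$ and the largeness of $|a_1|,|b_2|,|c_3|$ enter quantitatively, and where the eight cases are genuinely needed to pin down the orientation of each monotone branch; the remaining items (disjointness of the $V_i$ and the routine expansion inequalities) are straightforward.
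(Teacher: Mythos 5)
Your proposal follows the paper's own proof essentially step for step: the same box $U$, the same use of Lemma \ref{monotone} and the logarithmic-derivative inequalities \eqref{polyineq1}--\eqref{polyineq3} combined with the case hypotheses to obtain $|a_1p'(x_0)|\ge\ld+|a_2|+|a_3|$ (and its analogues), the same max-norm expansion estimate, and the same conclusion via eight subsets $V_1,\dots,V_8$ with $F(V_i)=U$ and Theorem 3.1 of \cite{ZhangShiChen2013}. If anything, you supply slightly more detail on the covering step $F(V_i)=U$, which the paper asserts without elaboration.
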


\begin{remark}
The eight different cases in Theorem \ref{threepositive} are the generalization of the assumptions \eqref{assumption-1-positive} in Theorem \ref{onepositive} and the four situations in the assumptions of Theorem \ref{twopositive}.
\end{remark}

Set $$U:=[\al_{i_0},\al_{i_0+1}]\times[\xi_{j_0},\xi_{j_0+1}]\times[\kappa_{k_0},\kappa_{k_0+1}].$$

Fix a positive constant $\lambda>1$.

\begin{proof}
Fix $a_2$, $a_3$, $b_1$, $b_3$, $c_1$, and $c_2$. By Lemma \ref{monotone}, one could assume that $|a_1|$, $|b_2|$, and $|c_3|$ are sufficiently large such that for $w=(x,y,z)\in U\cap F^{-1}(U)$, one has that $x\in(\al_{i_0},\delta_1]\cup[\delta'_1,\al_{i_0+1})$, $y\in(\xi_{j_0},\delta_2]\cup[\delta'_2,\xi_{j_0+1})$,
and $z\in(\kappa_{k_0},\delta_3]\cup[\delta'_3,\kappa_{k_0+1})$.

In the following discussions, we only study Case (i), other cases could be treated by applying similar methods.

By \eqref{polyineq1}--\eqref{polyineq3}, if $|a_1|$, $|b_2|$, and $|c_3|$ are sufficiently large, then for any $w=(x,y,z)\in U\cap F^{-1}(U)$,
\beq\label{bound-deriv-3-positive}
\frac{|a_1p'(x)|}{|a_1p(x)|}\geq \frac{\lambda+|a_2|+|a_3|}{a_{i_0}-M_1},
\frac{|b_2q'(y)|}{|b_2q(y)|}\geq\frac{\lambda+|b_1|+|b_3|}{\xi_{j_0}-N_1},
\frac{|c_3r'(z)|}{|c_3r(z)|}\geq\frac{\lambda+|c_1|+|c_2|}{\kappa_{k_0}-Q_1}.
\eeq

For any point $w_0=(x_0,y_0,z_0)\in U\cap F^{-1}(U)$, set $w_1=F(w_0)=(x_1,y_1,z_1)$. By \eqref{sysexp1},
\beqq
\left\{
  \begin{array}{ll}
   x_1=a_1p(x_0)+a_2y_0+a_3z_0 \\
    y_1=b_1x_0+b_2q(y_0)+b_3z_0 \\
    z_1=c_1x_0+c_2y_0+c_3r(z_0)
  \end{array}
\right..
\eeqq
So, one has that $a_1p(x_0)=x_1-a_2y_0-a_3z_0$, $b_2q(y_0)=y_1-b_1x_0-b_3z_0$, and $c_3r(z_0)=z_1-c_1x_0-c_2y_0$.
This yields that for any $w=(x,y,z)\in U\cap F^{-1}(U)$,
\beq\label{bound-3-positive}
a_1p(x)\geq\al_{i_0}-M_1,\ b_2q(y)\geq\xi_{j_0}-N_1,\ c_3r(z)\geq\kappa_{k_0}-Q_1.
\eeq
By \eqref{bound-deriv-3-positive} and \eqref{bound-3-positive}, one has that for any $w=(x,y,z)\in U\cap F^{-1}(U)$,
\beq\label{bound-deriv-3-positive-dis}
|a_1p'(x)|\geq\lambda+|a_2|+|a_3|,\ |b_2q'(y)|\geq\lambda+|b_1|+|b_3|,\
|c_3r'(z)|\geq\lambda+|c_1|+|c_2|.
\eeq

Next, it is to show that the map in Case (i) is expansion in distance.

For $v=(v_{01},v_{02},v_{03})^T\in\mathbb{R}^3$, set $|v|=\max_{1\leq i\leq3}|v_{0i}|$.
By \eqref{derivativeexp1},
\beqq
\left\{
  \begin{array}{ll}
v_{11}=a_1p'(x_0)v_{01}+a_2v_{02}+a_3v_{03}\\
v_{12}=b_1v_{01}+b_2q'(y_0)v_{02}+b_3v_{03}\\
v_{13}=c_1v_{01}+c_2v_{02}+c_3r'(z_0)v_{03}
\end{array}
\right..
\eeqq
Without loss of generality, suppose that $|v|=|v_{01}|$. Hence, From \eqref{bound-deriv-3-positive-dis}, it follows that
\beqq
|v_1|=\max_{1\leq i\leq 3}|v_{1i}|\geq|v_{11}|\geq|a_1p'(x_0)||v|-|a_2||v|-|a_3||v|\geq\lambda|v|,
\eeqq
which implies that $\|DF\|\geq\lambda$. So, one has that
the map is expansion on $U\cap F^{-1}(U)$.

Hence, there exist eight disjoint closed subsets $V_1,...,V_8$ of $U$ such that $F(V_i)=U$, $1\leq i\leq8$. The map $F$ on $\cup_{1\leq i\leq8}V_i$ is a coupled-expanding map \cite{ZhangShi,ZhangShiChen2013}. It follows from Theorem 3.1 in \cite{ZhangShiChen2013} that there is a forward invariant set on which the map is topologically semi-conjugate to the one-sided fullshift on eight symbols, and it is chaotic in the sense of Li-Yorke. The proof is completed.
\end{proof}

By the proof of Theorems \ref{threepositive-zero} and \ref{threepositive}, one has the following results:
\begin{theorem} \label{threepositive-1zero-2positive}
Suppose that there are $1\leq i_0<r_1$, $1\leq j_0<r_2$, and
$1\leq k_0<r_3$ such that $0=\al_{i_0}<\al_{i_0+1}$, $0<\xi_{j_0}<\xi_{j_0+1}$,
and $0<\kappa_{k_0}<\kappa_{k_0+1}$, and $m_{i_0}=m_{i_0+1}=1$.
Set
\beqq
N_1:=\max\{b_3\kappa_{k_0},b_1\al_{i_0+1}+b_3\kappa_{k_0},b_3\kappa_{k_0+1},b_1\al_{i_0+1}+b_3\kappa_{k_0+1}\},
\eeqq
\beqq N_2:=\min\{b_3\kappa_{k_0},b_1\al_{i_0+1}+b_3\kappa_{k_0},b_3\kappa_{k_0+1},b_1\al_{i_0+1}+b_3\kappa_{k_0+1}\},
\eeqq
\beqq
Q_1:=\max\{c_2\xi_{j_0},c_1\al_{i_0+1}+c_2\xi_{j_0},c_2\xi_{j_0+1},c_1\al_{i_0+1}+c_2\xi_{j_0+1}\},
\eeqq
\beqq Q_2:=\min\{c_2\xi_{j_0},c_1\al_{i_0+1}+c_2\xi_{j_0},c_2\xi_{j_0+1},c_1\al_{i_0+1}+c_2\xi_{j_0+1}\}.
\eeqq
Consider the following four different cases:

\begin{itemize}
\item [(i).] $a_2\leq0$, $a_3\leq0$, and $a_1p'(0)>0$, $N_1<\xi_{j_0}$ and $b_2q^{(n_{j_0})}(\xi_{j_0})>0$, and $Q_1<\kappa_{k_0}$ and $c_3r^{(l_{k_0})}(\kappa_{k_0})>0$;
\item [(ii).] $a_2\leq0$, $a_3\leq0$, and $a_1p'(0)>0$, $N_1<\xi_{j_0}$ and $b_2q^{(n_{j_0})}(\xi_{j_0})>0$, and $Q_2>\kappa_{k_0+1}$ and $c_3r^{(l_{k_0})}(\kappa_{k_0})<0$;
\item [(iii).] $a_2\leq0$, $a_3\leq0$, and $a_1p'(0)>0$, $N_2>\xi_{j_0+1}$ and $b_2q^{(n_{j_0})}(\xi_{j_0})<0$, and $Q_1<\kappa_{k_0}$ and $c_3r^{(l_{k_0})}(\kappa_{k_0})>0$;
\item [(iv).] $a_2\leq0$, $a_3\leq0$, and $a_1p'(0)>0$, $N_2>\xi_{j_0+1}$ and $b_2q^{(n_{j_0})}(\xi_{j_0})<0$, and $Q_2>\kappa_{k_0+1}$ and $c_3r^{(l_{k_0})}(\kappa_{k_0})<0$.
\end{itemize}
For fixed $a_2$, $a_3$, $b_1$, $b_3$, $c_1$, and $c_2$, if $|a_1|$, $|b_2|$, and $|c_3|$ are sufficiently large, then there exists a forward invariant set of the map \eqref{sysexp1} on which the map is topologically semi-conjugate to the one-sided fullshift on eight symbols and it is chaotic in the sense of Li-Yorke.
\end{theorem}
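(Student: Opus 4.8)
The plan is to run the argument of Theorem \ref{threepositive-zero} in the $x$-direction (the coordinate with the simple double zero at $0=\al_{i_0}$) and the argument of Theorem \ref{threepositive} in the $y$- and $z$-directions (the coordinates with the two positive zeros), reading off the combined thresholds on $|a_1|$, $|b_2|$, $|c_3|$. First I would fix $\lambda>1$ and the linear parameters $a_2,a_3,b_1,b_3,c_1,c_2$, and set
$$U:=[0,\al_{i_0+1}]\times[\xi_{j_0},\xi_{j_0+1}]\times[\kappa_{k_0},\kappa_{k_0+1}],$$
so that $U$ starts at the zero in the $x$-slot but brackets the two positive zeros in the $y$- and $z$-slots. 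The whole proof then reduces to two ingredients: an expansion estimate $\|DF\|\geq\lambda$ on $U\cap F^{-1}(U)$, and the existence of eight pairwise disjoint closed pieces $V_1,\dots,V_8\subset U$ each satisfying $F(V_i)=U$.

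For the expansion estimate I would first invoke Lemma \ref{monotone} to confine, for $w=(x,y,z)\in U\cap F^{-1}(U)$ and $|a_1|,|b_2|,|c_3|$ large, the variables to $x\in[0,\delta_1]\cup[\delta_1',\al_{i_0+1}]$, $y\in(\xi_{j_0},\delta_2]\cup[\delta_2',\xi_{j_0+1})$, and $z\in(\kappa_{k_0},\delta_3]\cup[\delta_3',\kappa_{k_0+1})$, regions on which $|p'|,|q'|,|r'|$ stay bounded below by some $M_0>0$. Since the $x$-zeros are simple ($m_{i_0}=m_{i_0+1}=1$), the $x$-direction admits the direct lower bound $|a_1p'(x)|\geq\lambda+|a_2|+|a_3|$ exactly as in \eqref{bound-deriv-3-single-zero}. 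For the $y$- and $z$-directions, where no multiplicity is assumed, I would instead use the ratio blow-up \eqref{polyineq2}--\eqref{polyineq3} together with the forward-orbit identities $b_2q(y_0)=y_1-b_1x_0-b_3z_0$ and $c_3r(z_0)=z_1-c_1x_0-c_2y_0$: these yield $b_2q(y)\geq\xi_{j_0}-N_1$ (respectively $-b_2q(y)\geq N_2-\xi_{j_0+1}$) and the analogue for $r$ with $Q_1,Q_2$, according to which of cases (i)--(iv) holds, and hence $|b_2q'(y)|\geq\lambda+|b_1|+|b_3|$ and $|c_3r'(z)|\geq\lambda+|c_1|+|c_2|$. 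With these three lower bounds in hand, the diagonal-dominance computation in the max norm (select the coordinate realizing $|v|$ and bound the corresponding output row from below) gives $\|DF\|\geq\lambda$, so $F$ is expanding on $U\cap F^{-1}(U)$.

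For the covering structure I would argue that each of $p,q,r$ is parabola-like on the relevant interval, so that as the corresponding variable sweeps its range the nonlinear coordinate of $F$ traverses the target interval twice; the sign hypotheses ($a_2,a_3\leq0$ in the zero slot, and $N_1<\xi_{j_0}$, $Q_1<\kappa_{k_0}$, etc., in the positive slots) are precisely what prevents the linear cross-terms from destroying this double coverage, so that $F$ folds $U$ over itself twice in each of the three directions. This produces the eight disjoint closed sets $V_1,\dots,V_8\subset U$ with $F(V_i)=U$, exhibiting $F$ as a coupled-expanding map. Theorem 3.1 of \cite{ZhangShiChen2013} then delivers a forward-invariant set on which $F$ is topologically semi-conjugate to the one-sided fullshift on eight symbols, whence Li-Yorke chaos.

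The main obstacle is the covering step rather than the expansion estimate. While the confinement and the diagonal dominance are essentially mechanical once $|a_1|,|b_2|,|c_3|$ are large, verifying $F(V_i)=U$ requires checking that the images of the appropriate faces of $U$, after the linear cross-terms are added, still stretch completely across the target intervals in all three directions simultaneously. The quantities $N_1,N_2,Q_1,Q_2$ are defined exactly to certify this in the two positive directions (the $x$-direction being handled by the sign conditions $a_2,a_3\leq0$ as in Theorem \ref{threepositive-zero}); making the three foldings mutually compatible, so that the eight pieces are genuinely disjoint and each maps onto all of $U$, is the delicate part and is where I would concentrate the care.
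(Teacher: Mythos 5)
Your proposal is correct and follows essentially the same route as the paper, which states this theorem without a separate proof, noting only that it follows by combining the proofs of Theorems \ref{threepositive-zero} and \ref{threepositive}; your hybrid box $U=[0,\al_{i_0+1}]\times[\xi_{j_0},\xi_{j_0+1}]\times[\kappa_{k_0},\kappa_{k_0+1}]$, the direct simple-root bound in the $x$-slot, and the ratio-plus-orbit-identity bounds with $N_1,N_2,Q_1,Q_2$ in the $y$- and $z$-slots are exactly the intended combination, capped off by the same appeal to the coupled-expansion result of \cite{ZhangShiChen2013}. Your closing observation that the eightfold covering $F(V_i)=U$ is the least mechanical step is also consistent with the paper, which treats that step equally briefly in both parent proofs.
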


\begin{theorem} \label{threepositive-2zero-1positive}
Suppose that there are $1\leq i_0<r_1$, $1\leq j_0<r_2$, and
$1\leq k_0<r_3$ such that $0=\al_{i_0}<\al_{i_0+1}$, $0=\xi_{j_0}<\xi_{j_0+1}$,
and $0<\kappa_{k_0}<\kappa_{k_0+1}$, and $m_{i_0}=m_{i_0+1}=n_{j_0}=n_{j_0+1}=1$.  Suppose that
\begin{itemize}
\item [(1).] $a_2\leq0$, $a_3\leq0$, and $a_1p'(0)>0$;
\item [(2).] $b_1\leq0$, $b_3\leq0$, and $b_2q'(0)>0$;
\item [(3).] $\max\{0,c_1\al_{i_0+1},c_2\xi_{j_0+1},c_1\al_{i_0+1}+c_2\xi_{j_0+1}\}<\kappa_{k_0}$ and  $c_3r^{(l_{k_0})}(\kappa_{k_0})>0$.
\end{itemize}
For fixed $a_2$, $a_3$, $b_1$, $b_3$, $c_1$, $c_2$, if $|a_1|$, $|b_2|$, and $|c_3|$ are sufficiently large, then there exists a forward invariant set of the map \eqref{sysexp1} on which the map is topologically semi-conjugate to the one-sided fullshift on eight symbols and it is chaotic in the sense of Li-Yorke.
\end{theorem}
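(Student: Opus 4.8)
The plan is to adapt the two proofs of Theorems \ref{threepositive-zero} and \ref{threepositive}, treating the two coordinate directions whose polynomials vanish at the origin ($x$ and $y$) as in the former, and the remaining coordinate ($z$), whose polynomial has two positive zeros, as in the latter. First I would fix the compact box
$$U:=[0,\al_{i_0+1}]\times[0,\xi_{j_0+1}]\times[\kappa_{k_0},\kappa_{k_0+1}],$$
fix $\lambda>1$ together with $a_2,a_3,b_1,b_3,c_1,c_2$, and invoke Lemma \ref{monotone}(i)--(iii) to guarantee that, once $|a_1|$, $|b_2|$, $|c_3|$ are large enough, every $w=(x,y,z)\in U\cap F^{-1}(U)$ satisfies $x\in[0,\delta_1]\cup[\delta'_1,\al_{i_0+1}]$, $y\in[0,\delta_2]\cup[\delta'_2,\xi_{j_0+1}]$, and $z\in(\kappa_{k_0},\delta_3]\cup[\delta'_3,\kappa_{k_0+1})$, on which $|p'|$, $|q'|$, $|r'|$ admit a common positive lower bound $M_0$, exactly as in the earlier boxes.

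Next I would establish the three derivative lower bounds $|a_1p'(x)|\geq\lambda+|a_2|+|a_3|$, $|b_2q'(y)|\geq\lambda+|b_1|+|b_3|$, and $|c_3r'(z)|\geq\lambda+|c_1|+|c_2|$ for $w\in U\cap F^{-1}(U)$. For $x$ and $y$ these follow directly from $|a_1|M_0\geq\lambda+|a_2|+|a_3|$ and $|b_2|M_0\geq\lambda+|b_1|+|b_3|$, precisely as in the proof of Theorem \ref{threepositive-zero}. For $z$ I would run the logarithmic-derivative argument of Theorem \ref{threepositive}: writing $c_3r(z_0)=z_1-c_1x_0-c_2y_0$ and using assumption (3), that is $Q_1:=\max\{0,c_1\al_{i_0+1},c_2\xi_{j_0+1},c_1\al_{i_0+1}+c_2\xi_{j_0+1}\}<\kappa_{k_0}$, gives $c_3r(z)\geq\kappa_{k_0}-Q_1>0$, and then \eqref{polyineq3} combined with Lemma \ref{monotone}(iii) upgrades this value bound to the desired bound on $|c_3r'(z)|$.

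With the three derivative bounds in hand, the expansion estimate $\|DF\|\geq\lambda$ on $U\cap F^{-1}(U)$ is immediate: for $v=(v_{01},v_{02},v_{03})^T$ with $|v|=|v_{0i}|$, the $i$-th row of \eqref{derivativeexp1} dominates and yields $|DF_w(v)|\geq\lambda|v|$, which is the computation used in both earlier theorems. The remaining point is the combinatorial covering: I would argue that the sign conditions (1)--(3), together with the parabolic-like (two-to-one) shape of $p$ on $[0,\al_{i_0+1}]$, of $q$ on $[0,\xi_{j_0+1}]$, and of $r$ on $[\kappa_{k_0},\kappa_{k_0+1}]$, force the existence of eight pairwise disjoint closed subsets $V_1,\dots,V_8\subset U$ with $F(V_i)=U$; the factor $2^3=8$ is exactly one binary choice of monotone branch per coordinate. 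This makes $F$ a coupled-expanding map on $\bigcup_{i=1}^{8}V_i$, and Theorem 3.1 of \cite{ZhangShiChen2013} then delivers the forward invariant set, the topological semi-conjugacy to the one-sided fullshift on eight symbols, and Li-Yorke chaos.

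I expect the covering step to be the main obstacle, because the $x$ and $y$ directions are anchored at the origin while the $z$ direction is anchored away from it, so one must check that the bounded cross terms $c_1x+c_2y$ (controlled by $Q_1<\kappa_{k_0}$) do not destroy the cover of $[\kappa_{k_0},\kappa_{k_0+1}]$ in the $z$-coordinate, while simultaneously the cross terms $a_2y+a_3z$ and $b_1x+b_3z$ (controlled by the sign conditions (1)--(2)) keep the $x$- and $y$-covers intact. Reconciling these mixed anchorings is the crux of verifying $F(V_i)=U$ in all eight branches.
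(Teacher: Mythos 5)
Your proposal is correct and matches the paper's intent exactly: the paper gives no separate proof of this theorem, stating only that it follows ``by the proof of Theorems \ref{threepositive-zero} and \ref{threepositive},'' and your hybrid argument --- the direct bound $|a_1|M_0,|b_2|M_0\geq\lambda+\cdots$ for the coordinates anchored at the origin, the logarithmic-derivative estimate $c_3r(z)\geq\kappa_{k_0}-Q_1>0$ for the $z$-coordinate anchored at positive roots (which is in fact necessary there since $l_{k_0}=1$ is not assumed), and the coupled-expansion theorem of \cite{ZhangShiChen2013} --- is precisely that intended combination. The covering step you flag as the crux is asserted without further detail in the paper's own proofs as well, so you are not missing anything the paper supplies.
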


\begin{remark}
Similar results could be obtained if the polynomials $p(x)$, $q(y)$, and $r(z)$ have two different non-positive roots, respectively.
\end{remark}
\bigskip

\section{Smale horseshoe in high-dimensional polynomial maps}

In this section, we study the existence of Smale horseshoe and uniformly hyperbolic invariant sets of the maps \eqref{generalmap}, we generalize the well-known results obtained in \cite{devaneynitecki}.

The derivative of the map \eqref{generalmap} is
\begin{equation}
DF=\mbox{diag}[-2x_{1},-2x_{2},...,-2x_{d},\underbrace{0,...,0}_{n-d}]+\left(
    \begin{array}{cccc}
      0 & a_{12} & \cdots & a_{1n}\\
      a_{21} & 0 & \cdots & a_{2n} \\
      \vdots & \vdots & \vdots & \vdots \\
 a_{n1} & a_{n2} & \cdots & 0
    \end{array}
  \right).
\end{equation}

Now, it is to study the inverse of the map $F$. We consider the following three types of maps with the inverse expressions.

Case {\rm{(1)}}. Suppose that $n\geq2d$, and \eqref{generalmap} can be written as follows:
\begin{align}\label{inversemap1}
&\left(
\begin{array}{c}
f_1(x_1,...,x_n)\\
\vdots\\
f_d(x_1,...,x_n)\\
f_{d+1}(x_1,...,x_n)\\
\vdots\\
f_n(x_1,...,x_n)
\end{array}
\right)
=
\left(
\begin{array}{c}
a_{11}-x_1^2\\
\vdots\\
a_{dd}-x_d^2\\
0\\
\vdots\\
0
\end{array}
\right)+\bordermatrix{
~& d & n-2d & d \cr
d & A&B&C\cr
n-d & D&E&0\cr}
\left(
\begin{array}{c}
x_1\\
\vdots\\
x_d\\
x_{d+1}\\
\vdots\\
x_n
\end{array}
\right),
\end{align}
where $C$ and $G=(D,E)$ are invertible matrices.

By direct calculation, one has that the inverse map $H:\mathbb{R}^n\to\mathbb{R}^n$ is
\beq\label{inversemap11}
\left(
\begin{array}{c}
h_1(x_1,...,x_n)\\
\vdots\\
h_{n-d}(x_1,...,x_n)
\end{array}
\right)=G^{-1}
\left(
\begin{array}{c}
x_{d+1}\\
\vdots\\
x_{n}
\end{array}
\right),
\eeq
\beq\label{inversemap12}
\left(
\begin{array}{c}
h_{n-d+1}(x_1,...,x_n)\\
\vdots\\
h_n(x_1,...,x_n)
\end{array}
\right)=
C^{-1} \left(
\begin{array}{c}
x_1\\
\vdots\\
x_d\\
\end{array}
\right)
-
C^{-1}L
\left(
\begin{array}{c}
h_1\\
\vdots\\
h_d\\
\end{array}
\right)
-
C^{-1}
\left(
\begin{array}{c}
a_{11}-h_1^2\\
a_{22}-h_2^2\\
\vdots\\
a_{dd}-h_d^2
\end{array}
\right),
\eeq
where $L=(A,B)$ is a $d\times (n-d)$ matrix. In the above expressions, we should use \eqref{inversemap11} to substitute $h_1,...,h_d$ in the right hand side of \eqref{inversemap12}.

Suppose that
\beq\label{matrix1}
G^{-1}=(D,E)^{-1}=
\bordermatrix{~& n-d\cr
              d & P \cr
              n-2d & Q\cr}.
\eeq
The Jacobian of the inverse map is
\beq\label{inverse-high-i}
DH=\bordermatrix{~ & d & n-d\cr
n-d & 0 & G^{-1}\cr
d &C^{-1} & -C^{-1}LP+S\cr},
\eeq
where
\beqq
S=
C^{-1}
(\mbox{diag}[2h_1,...,2h_d]) P.
\eeqq
\medskip

Case {\rm{(2)}}. Suppose that $\frac{3}{2}d\leq n<2d$, and \eqref{generalmap} can be written as follows:
\begin{align}\label{inversemap2}
&\left(
\begin{array}{c}
f_1(x_1,...,x_n)\\
\vdots\\
f_d(x_1,...,x_n)\\
f_{d+1}(x_1,...,x_n)\\
\vdots\\
f_n(x_1,...,x_n)
\end{array}
\right)
=
\left(
\begin{array}{c}
a_{11}-x_1^2\\
\vdots\\
a_{dd}-x_d^2\\
0\\
\vdots\\
0
\end{array}
\right)+
\bordermatrix{
~& n-d & 2d-n & n-d\cr
2d-n & A& B & 0\cr
n-d & C & D & E\cr
n-d & G & 0&  0\cr
}
\left(
\begin{array}{c}
x_1\\
\vdots\\
x_d\\
x_{d+1}\\
\vdots\\
x_n
\end{array}
\right),
\end{align}
where $B$, $E$, and $G$ are invertible matrices.

The inverse map $H:\mathbb{R}^n\to\mathbb{R}^n$ is as follows:
\beqq
\left(
\begin{array}{c}
h_1\\
\vdots\\
h_{n-d}
\end{array}
\right)=G^{-1}
\left(
\begin{array}{c}
x_{d+1}\\
\vdots\\
x_{n}
\end{array}
\right),
\eeqq
\beqq
\left(
\begin{array}{c}
h_{n-d+1}\\
\vdots\\
h_{d}
\end{array}
\right)=
B^{-1}\left(
\left(
\begin{array}{c}
x_{1}\\
\vdots\\
x_{2d-n}
\end{array}
\right)
-
\left(
\begin{array}{c}
a_{11}-h_1^2\\
\vdots\\
a_{2d-n,2d-n}-h_{2d-n}^2
\end{array}
\right)
-AG^{-1}
\left(
\begin{array}{c}
x_{d+1}\\
\vdots\\
x_n
\end{array}
\right)\right),
\eeqq
\begin{align*}
\left(
\begin{array}{c}
h_{d+1}\\
\vdots\\
h_{n}
\end{array}
\right)
=&E^{-1}\left(
\left(
\begin{array}{c}
x_{2d-n+1}\\
\vdots\\
x_d
\end{array}
\right)
-
\left(
\begin{array}{c}
a_{2d-n+1,2d-n+1}-h_{2d-n+1}^2\\
\vdots\\
a_{dd}-h_d^2
\end{array}
\right)
-C
\left(
\begin{array}{c}
h_{1}\\
\vdots\\
h_{2d-n}
\end{array}
\right)\right)\\
&-E^{-1}
D
\left(
\begin{array}{c}
h_{2d-n+1}\\
\vdots\\
h_d
\end{array}
\right).
\end{align*}
It is evident that the degree of the inverse map is four.

Suppose that
\beqq
G^{-1}=
\bordermatrix{
~ & n-d\cr
2d-n & K  \cr
2n-3d & L   \cr},\ \
E^{-1}=\bordermatrix{
~ & 2n-3d & 2d-n \cr
n-d & M & N \cr
},
\eeqq
and
\beqq
E^{-1}D=\bordermatrix{
~ & 2n-3d & 2d-n \cr
n-d & P & Q \cr
}.
\eeqq

The derivative of the map is
\beq\label{inverse-high-ii}
\bordermatrix{
~ & 2d-n& n-d&n-d\cr
n-d & 0 & 0 & G^{-1}\cr
2d-n & B^{-1} & 0 & S\cr
n-d & V & E^{-1} & W \cr
},
\eeq
where
\beqq
S=B^{-1}(\mbox{diag}[2h_1,...,2h_{2d-n}])K-B^{-1}AG^{-1},
\eeqq
\beqq
V=
-Q
B^{-1}+N (\mbox{diag}[2h_{n-d+1},...,2h_d])B^{-1},
\eeqq
\beqq
W=E^{-1}(\mbox{diag}[2h_{2d-n+1},...,2h_d])\left(\begin{array}{c}
L\\
S
\end{array}
\right)-E^{-1}CK-
E^{-1}D
\left(\begin{array}{c}
L\\
S
\end{array}
\right).
\eeqq
\medskip

Case {\rm{(3)}}. It is to investigate the following type of map, where $n<2d$ and the degree of the inverse map is equal to $2^{2d-n}$.
\begin{align} \label{inversemap3}
&\left(
\begin{array}{c}
f_1(x_1,...,x_n)\\
\vdots\\
f_d(x_1,...,x_n)\\
f_{d+1}(x_1,...,x_n)\\
\vdots\\
f_n(x_1,...,x_n)
\end{array}
\right)
=
\left(
\begin{array}{c}
a_{11}-x_1^2\\
\vdots\\
a_{dd}-x_d^2\\
0\\
\vdots\\
0
\end{array}
\right)+
\bordermatrix{
~ & n-d & 2d-n+1 & n-d-1 \cr
n-d-1 & A &   B     &   C \cr
2d-n+1 & D &  E &     0   \cr
n-d  & G  &  0 &   0  \cr
}
\left(
\begin{array}{c}
x_1\\
\vdots\\
x_d\\
x_{d+1}\\
\vdots\\
x_n
\end{array}
\right),
\end{align}
where $C$ and $G$ are invertible.
In the matrix
$\left(
\begin{array}{c}
B\\
E
\end{array}
\right)$, $\prod\limits^{2d-n}_{i=0} a_{n-d+i,n-d+i+1}\neq0$, all the other coefficients are zero.
Suppose that
\beq\label{matrix2}
D=\left(\begin{array}{c} s_1\\ \vdots\\ s_{2d-n+1}\end{array}\right),
\eeq
where $s_i$ is a $1\times (n-d)$ matrix, $1\leq i\leq 2d-n+1$.

The inverse map $H:\mathbb{R}^n\to\mathbb{R}^n$ is
\beq\label{inverseexpression3}
\left(
\begin{array}{c}
h_1\\
\vdots\\
h_{n-d}
\end{array}
\right)=G^{-1}
\left(
\begin{array}{c}
x_{d+1}\\
\vdots\\
x_{n}
\end{array}
\right),
\eeq
\beqq
\left(
\begin{array}{c}
h_{d+2}\\
\vdots\\
h_{n}
\end{array}
\right)=
C^{-1}\left(
\left(
\begin{array}{c}
x_{1}\\
\vdots\\
x_{n-d-1}
\end{array}
\right)
-
\left(
\begin{array}{c}
a_{11}-h_1^2\\
\vdots\\
a_{n-d-1,n-d-1}-h_{n-d-1}^2
\end{array}
\right)
-A G^{-1}
\left(
\begin{array}{c}
x_{d+1}\\
\vdots\\
x_n
\end{array}
\right)\right);
\eeqq
the map $h_{n-d+i}$ is given recurrently, $1\leq i\leq 2d-n+1$,
\beqq
h_{n-d+1}=\frac{1}{a_{n-d,n-d+1}}\left(x_{n-d}-
(a_{n-d,n-d}-h_{n-d}^2)-s_1\left(  G^{-1}
\left(
\begin{array}{c}
x_{d+1}\\
\vdots\\
x_n
\end{array}
\right)\right)\right),
\eeqq
\begin{align*}
h_{n-d+i}&=\frac{1}{a_{n-d+i-1,n-d+i}}\left(x_{n-d+i-1}-(a_{n-d+i-1,n-d+i-1}- h_{n-d+i-1}^2)
-s_i G^{-1}
\left(
\begin{array}{c}
x_{d+1}\\
\vdots\\
x_n
\end{array}
\right)\right),\\
& 2\leq i\leq 2d-n+1,
\end{align*}
where $s_i$ is specified in \eqref{matrix2}.

The degree of the inverse map is equal to $2^{2d-n}$.

Suppose that
\beqq
G^{-1}=
\bordermatrix{
~ & n-d\cr
n-d-1 & K  \cr
1 & L   \cr}.
\eeqq

The derivative of the inverse map $H$ is
\beq\label{inverse-high-iii}
\bordermatrix{
~ & n-d-1 & 2d-n+1 & n-d\cr
n-d & 0 & 0 &  G^{-1} \cr            2d-n+1 & 0 & S & -DG^{-1} \cr
n-d-1& C^{-1} & 0 & V \cr}
+
\bordermatrix{
~ & n \cr
n-d & 0 \cr
2d-n+1 & SW\left(\begin{array}{ccc}
\frac{\partial h_{n-d}}{\partial x_1} & \cdots &  \frac{\partial h_{n-d}}{\partial x_n}\\
\vdots & \vdots & \vdots\\
\frac{\partial h_{d}}{\partial x_1} & \cdots &  \frac{\partial h_{d}}{\partial x_n}
\end{array}\right)\cr
n-d-1 & 0\cr},
\eeq
where
\beqq
S=\mbox{diag}\bigg[\frac{1}{a_{n-d,n-d+1}},...,\frac{1}{a_{d,d+1}}\bigg],
\eeqq
\beqq
V=C^{-1}( \mbox{diag}[2h_1,...,2h_{n-d-1}]) K-C^{-1}AG^{-1},
\eeqq
\beqq
W=\mbox{diag}[2h_{n-d},...,2h_d],
\eeqq
the derivative of the inverse map is defined recurrently.
\medskip

Set
\beq\label{regionU}
R:=\min_{1\leq i\leq d}\sqrt{a_{ii}},\ U:=[-R,R]^n.
\eeq

In the following discussions, fix a constant $\lambda>1$.
\begin{theorem}\label{hyp-highd-dim}
For the maps in Cases (1)--(3), for fixed $a_{ij}$, $1\leq i\neq j\leq n$, and sufficiently large $a_{11}$,...,$a_{dd}$, there exists a Smale horseshoe and a uniformly hyperbolic invariant set on which the map is topologically conjugate to the two-sided fullshift on $2^d$ symbols. Consequently, the map is chaotic in the sense of both Li-Yorke and Devaney.
\end{theorem}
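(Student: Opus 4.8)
The plan is to verify the two hypotheses of Lemma~\ref{unifhyp} with $N=1$ on the invariant set $\Ld:=\bigcap_{-\infty}^{+\infty}F^i(U)$, where $U=[-R,R]^n$ as in \eqref{regionU}, and simultaneously to show that $F(U)\cap U$ has exactly $2^d$ connected components; uniform hyperbolicity together with this $2^d$-fold horseshoe then yields the topological conjugacy to the two-sided fullshift on $2^d$ symbols, and Lemma~\ref{chaoticshift} delivers Li--Yorke and Devaney chaos. Throughout I take the cone field $\mathcal{C}$ whose core $E$ is the $d$-dimensional subspace associated with the quadratic coordinates $x_1,\dots,x_d$. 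Since the quadratic part of \eqref{generalmap} contributes exactly $d$ expanding directions at every point, $\dim E_w=d$ along the orbit, so $\mathcal{C}$ has constant orbit core dimension, as required by Lemma~\ref{unifhyp}.

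Both the component count and the cone-expansion estimate rest on one elementary observation. If $w=(x_1,\dots,x_n)\in U$ and $F(w)\in U$, then for each $1\leq i\leq d$ the relation $f_i(w)=a_{ii}-x_i^2+\sum_{j\neq i}a_{ij}x_j\in[-R,R]$ forces $x_i^2=a_{ii}+\sum_{j\neq i}a_{ij}x_j-f_i(w)$; since $a_{ii}$ is of order $R^2$ while the remaining terms are $O(R)$, this pins $x_i$ near $+R$ or near $-R$ once the $a_{ii}$ are large. Consequently each quadratic coordinate splits into two monotone branches, and $F^{-1}(U)\cap U$ (equivalently $F(U)\cap U$, since $F$ is a diffeomorphism by \eqref{inversemap11}--\eqref{inverseexpression3}) has exactly $2^d$ disjoint connected components, indexed by the sign vectors in $\{+,-\}^d$. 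The same bound gives $|2x_i|\geq 2R-O(1)$ on $U\cap F^{-1}(U)$, so for a cone vector $\mathbf{v}=(v_0,u_0)$ with $|u_0|\leq|v_0|$, choosing the index $i\leq d$ with $|v_{0i}|=|v_0|$ yields $|(DF_w\mathbf{v})_i|\geq(|2x_i|-\sum_{j\neq i}|a_{ij}|)|v_0|\geq\lambda|v_0|=\lambda|\mathbf{v}|$ for $a_{ii}$ large, whence $m_{\mathcal{C},w}\geq\lambda$, exactly as in Lemma~\ref{hyponeexp-singel}.

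The main obstacle is the co-expansion estimate $m'_{\mathcal{C},w}\geq\lambda$, which must be carried out separately for the three block forms \eqref{inverse-high-i}, \eqref{inverse-high-ii}, and \eqref{inverse-high-iii} of $DF^{-1}$. The guiding principle, as in Lemma~\ref{hyponeexp}, is that the large entries of $DF^{-1}$---the shear blocks $S$, $V$, $W$ carrying the factors $2h_i\sim 2R$---are precisely what produces the co-expansion: for a vector outside $\mathcal{C}_{F(w)}$ the dominant component lies in the image of the quadratic coordinates, and these shear terms amplify it into the complementary directions by a factor of order $R$, while the fixed invertible blocks $G^{-1},C^{-1},B^{-1},E^{-1}$ contribute a bounded-below expansion independent of the $a_{ii}$. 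I would therefore fix the off-diagonal data, choose the defining automorphism of the cone to align with the block decomposition in each case, and bound $|DF^{-1}_{F(w)}(\mathbf{v})|$ from below by isolating the $2h_i$-term, mirroring the displayed chains of inequalities in Lemmas~\ref{hyptwoexp-zero} and \ref{hyptwoexp}; the bookkeeping of the auxiliary matrices $P,Q,K,L,M,N,S,V,W$ is the only genuinely new difficulty. With both $m_{\mathcal{C},w}\geq\lambda$ and $m'_{\mathcal{C},w}\geq\lambda$ established, Lemma~\ref{unifhyp} gives that $\Ld$ is uniformly hyperbolic, and the proof concludes as above.
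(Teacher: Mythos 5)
Your proposal follows essentially the same route as the paper: the same box $U=[-R,R]^n$ from \eqref{regionU}, the same cone with $d$-dimensional core, the bound $x_i^2\asymp a_{ii}$ on $U\cap F^{-1}(U)$ (the paper's \eqref{boundfor}) driving both the $2^d$-component count and the forward expansion estimate, and the co-expansion extracted from the $2h_i$ shear blocks of $DF^{-1}$ in each of the three cases. The matrix bookkeeping you defer is precisely the content of the paper's Lemmas \ref{Case-i-hyp}--\ref{Case-iii-hyp}, which rest on the corresponding bounds $h_i^2\asymp a_{ii}$.
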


First, it is to study the point $w_0\in U\cap F(U)$, that is, $w_{-1}=F^{-1}(w_0)\in U$.

\begin{lemma}\label{Case-i-hyp}
In Case (1), for any fixed $a_{ij}$ in \eqref{inversemap1}, $1\leq i\neq j\leq n$, there is a positive constant $N_1$, such that if $\min_{1\leq i\leq d}\{a_{ii}\}\geq N_1$, then for the map \eqref{inversemap1} and any $w_0\in U\cap F(U)$, one has that $m'_{\mathcal{C},w_0}\geq\lambda>1$.
\end{lemma}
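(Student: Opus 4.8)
The plan is to establish the co-expansion bound directly from the block form \eqref{inverse-high-i} of $DF^{-1}=DH$, using that the only entries of $DH$ that grow with the parameters $a_{11},\dots,a_{dd}$ are those of $S=C^{-1}\,\mbox{diag}[2h_1,\dots,2h_d]\,P$. First I would bound the preimage coordinates from below. For $w_0\in U\cap F(U)$ write $w_{-1}=H(w_0)\in U$, so that $h_i=(w_{-1})_i$ for $1\le i\le d$; from $(w_0)_i=a_{ii}-h_i^2+\sum_{j\ne i}a_{ij}(w_{-1})_j$ with $|(w_0)_i|\le R$ and $|(w_{-1})_j|\le R$ one gets $h_i^2\ge a_{ii}-\big(\sum_{j\ne i}|a_{ij}|+1\big)R$. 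Since $a_{ii}\ge R^2$ by \eqref{regionU}, this forces $h_{\ast}:=\min_{1\le i\le d}|h_i|\to\infty$ as $\min_i a_{ii}\to\infty$, so the factor $2h_{\ast}$ inside $S$ can be made arbitrarily large; this is the source of the co-expansion in the main case.

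Next I would use the standard cone $K_1(\mathbb{R}^d,\mathbb{R}^{n-d})$ with core the unstable block (the first $d$ coordinates) and the sup-norm, exactly as in Lemmas \ref{hyponeexp-singel} and \ref{hyptwoexp-zero}. Take $\mathbf{v}=(v_0,u_0)\notin\mathcal{C}_{w_0}$, i.e. $|u_0|>|v_0|$ and $|\mathbf{v}|=|u_0|$, and put $w:=Pu_0$, where $P$ and $Q$ denote the top-$d$ and bottom-$(n-2d)$ row blocks of $G^{-1}$ as in \eqref{matrix1}. By \eqref{inverse-high-i} the three coordinate blocks of $DH\,\mathbf{v}$ are $Pu_0$, $Qu_0$, and $C^{-1}\big(v_0-Lw+\mbox{diag}[2h_i]\,w\big)$. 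If $|w|\ge\delta|u_0|$ for a threshold $\delta$ depending only on the fixed matrices, the third block has norm at least $c_C\big(2h_{\ast}\delta-\|L\|\,\|P\|-1\big)|u_0|$, where $c_C>0$ depends only on $C$; this exceeds $\lambda|u_0|$ once $h_{\ast}$ is large, i.e. once $\min_i a_{ii}\ge N_1$.

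The delicate case, and the step I expect to be the main obstacle, is $|w|=|Pu_0|<\delta|u_0|$: on $\ker P$ the amplifying factor $2h_{\ast}$ multiplies an almost-vanishing vector and contributes nothing, so the enlargement mechanism is inactive precisely on an $(n-2d)$-dimensional family of stable directions. There I would fall back on the invertibility of $G$: since the two blocks of $G^{-1}u_0$ are $Pu_0$ and $Qu_0$ and $|G^{-1}u_0|\ge c_G|u_0|$ for a fixed $c_G>0$, choosing $\delta<c_G$ forces $|Qu_0|\ge c_G|u_0|$, so the middle block alone gives co-expansion. The catch is that this bound is the constant $c_G$, fixed by the data $a_{ij}$ and not inflatable by enlarging the $a_{ii}$; reaching the prescribed $\lambda$ on $\ker P$ thus rests on $\lambda$ being compatible with $G$ and $C$ — to be arranged by selecting $\lambda$ once the $a_{ij}$ are fixed, or by passing to the adapted metric hidden in the cone construction $\mathcal{C}(E)=T(K_1(E,E^c))$. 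Combining the two cases gives $m'_{\mathcal{C},w_0}\ge\lambda$ for every $\mathbf{v}\notin\mathcal{C}_{w_0}$; together with the companion expansion estimate on $U\cap F^{-1}(U)$ and Lemma \ref{unifhyp}, this yields the uniform hyperbolicity that Theorem \ref{hyp-highd-dim} requires.
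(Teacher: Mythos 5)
Your proposal follows the same route as the paper's own proof: both use the inverse expression together with $w_{-1}=H(w_0)\in U$ to get $h_i^2\asymp a_{ii}$ (the paper's \eqref{bound1}), both read the block form \eqref{inverse-high-i} of $DH$, and both locate the source of co-expansion in the term $Su_0=2C^{-1}\mbox{diag}[h_1,\dots,h_d]Pu_0$, whose size is driven by $|h_i|\sim\sqrt{a_{ii}}$. Your case $|Pu_0|\ge\delta|u_0|$ is precisely the estimate the paper carries out around \eqref{bound0}, and that part of your argument is correct.

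The case you flag as the main obstacle is a genuine gap --- and it is a gap in the paper's proof as well, which passes from \eqref{bound0}, \eqref{matrix1} and \eqref{bound1} directly to the conclusion without addressing it. Concretely, when $n>2d$ the block $P$ (the top $d$ rows of $G^{-1}$ in \eqref{matrix1}) is a $d\times(n-d)$ matrix with a kernel of dimension $n-2d$. For $u_0\in\ker P\setminus\{0\}$ and $v_0=0$, the vector $\mathbf{v}=(0,u_0)$ lies outside the standard cone, yet by \eqref{inverse-high-i} one has $DH\mathbf{v}=(G^{-1}u_0,\,C^{-1}v_0+(-C^{-1}LP+S)u_0)=(G^{-1}u_0,0)$, so $|DH\mathbf{v}|=|Qu_0|\le\|G^{-1}\|\,|u_0|$. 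This bound is determined by the fixed data $a_{ij}$, $i\neq j$, and does not improve as $a_{11},\dots,a_{dd}$ grow; hence $m'_{\mathcal{C},w_0}\le\|G^{-1}\|$ for the standard unit cone, and the asserted inequality $m'_{\mathcal{C},w_0}\ge\lambda$ cannot hold for an arbitrary prefixed $\lambda>1$. Your two suggested repairs are the natural ones: restrict to $n=2d$, where $P=G^{-1}$ is invertible so $|Pu_0|\ge c|u_0|$ always and your first case covers everything; or work with an adapted cone $T(K_1)$ (equivalently an adapted norm built from $G$ and $C$), or apply Lemma \ref{unifhyp} with an iterate $F^N$, noting that after one further application of $DH$ the image $(Qu_0,0)$ generically leaves $\ker P$ and the $2h_i$ amplification re-engages. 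As written, neither your argument nor the paper's closes the case $u_0\in\ker P$ when $n>2d$, and an honest proof of the lemma must supply one of these fixes.
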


\begin{proof}
For the point $w_0\in U\cap F(U)$, set $w_{-1}=F^{-1}(w_0)=H(w_0)\in U$.
By the invertibility of $C$, the expression of $F^{-1}$, and the definition of the region $U$, one has that
\begin{equation}\label{bound1}
C_1 a_{ii}\leq h_i^2\leq C'_1 a_{ii},\ 1\leq i\leq d,
\end{equation}
where $C_1$ and $C'_1$ are positive constants dependent on $a_{ij}$, $1\leq i\neq j\leq n$, and $a_{11},...,a_{dd}$ are sufficiently large.

Consider the following unit cone:
\begin{align*}
K_1(\mathbb{R}^d,\mathbb{R}^{n-d})&=\bigg\{\mathbf{v}=
\left(
\begin{array}{c}
v_0\\
u_0
\end{array}
\right):\
v_0=(v_{0,1},...,v_{0,d})^T\in\mathbb{R}^d,\\
&u_0=(u_{0,1},...,u_{0,n-d})^T\in\mathbb{R}^{n-d},\ \mbox{and}\ |u_0|>|v_0|\bigg\},
\end{align*}
where $|v_0|=\max\{|v_{0,1}|,...,|v_{0,d}|\}$, $|u_0|=\max\{|u_{0,1}|,...,|u_{0,d}|\}$,
and $|\mathbf{v}|=\max\{|v_0|,|u_0|\}$.

Suppose that $DH(\mathbf{v})=
\left(
\begin{array}{c}
v_{-1}\\
u_{-1}
\end{array}
\right)$, where $v_{-1}=(v_{-1,1},...,v_{-1,d})^T\in\mathbb{R}^d$ and $u_{-1}=(u_{-1,1},...,u_{-1,n-d})^T\in\mathbb{R}^{n-d}$.
It follows from \eqref{inverse-high-i} that
\beq\label{inverseexpression1}
v_{-1}=G^{-1}u_{0},\
u_{-1}=C^{-1}v_{0}  -C^{-1}LPu_0+Su_0.
\eeq

To show that $m'_{\mathcal{C},w_0}\geq\lambda$, it is sufficient to show that $|v_{-1}|<|u_{-1}|$ and $|u_{-1}|\geq\lambda|u_0|$. By \eqref{inverseexpression1}, we only need to consider the following part:
\beqq
Su_0=S
\left(
\begin{array}{c}
u_{0,1}\\
u_{0,2}\\
\vdots\\
u_{0,n-d}
\end{array}
\right).
\eeqq
Suppose that $$P=\left(\begin{array}{c}s_1\\
s_2\\ \vdots\\ s_d
\end{array}\right),$$
where $s_i$ is a $1\times (n-d)$ matrix, $1\leq i\leq d$. So, by direct calculations, one has that
\beq\label{bound0}
S
\left(
\begin{array}{c}
u_{0,1}\\
u_{0,2}\\
\vdots\\
u_{0,n-d}
\end{array}
\right)=2C^{-1}
\left(
\begin{array}{c}
s_1
\left(
\begin{array}{c}
x_{d+1}\\
x_{d+2}\\
\vdots\\
x_{n}
\end{array}
\right)
s_1
\left(
\begin{array}{c}
u_{0,1}\\
u_{0,2}\\
\vdots\\
u_{0,n-d}
\end{array}
\right)\\
\vdots\\
s_d
\left(
\begin{array}{c}
x_{d+1}\\
x_{d+2}\\
\vdots\\
x_{n}
\end{array}
\right)
s_d
\left(
\begin{array}{c}
u_{0,1}\\
u_{0,2}\\
\vdots\\
u_{0,n-d}
\end{array}
\right)
\end{array}
\right)=
2C^{-1}
\left(
\begin{array}{c}
h_1s_1
\left(
\begin{array}{c}
u_{0,1}\\
u_{0,2}\\
\vdots\\
u_{0,n-d}
\end{array}
\right)\\
\vdots\\
h_d
s_d
\left(
\begin{array}{c}
u_{0,1}\\
u_{0,2}\\
\vdots\\
u_{0,n-d}
\end{array}
\right)
\end{array}
\right)
.
\eeq
This, together with \eqref{matrix1} and \eqref{bound1}, implies that if $a_{11},...,a_{dd}$ are sufficiently large, then $|u_{-1}|>|v_{-1}|$ and $|u_{-1}|\geq\lambda|u_0|$, yielding that $m'_{\mathcal{C},w_0}\geq\lambda>1$. The proof is completed.
\end{proof}

\begin{lemma}\label{Case-ii-hyp}
In Case (2), for any fixed $a_{ij}$ in \eqref{inversemap2}, $1\leq i\neq j\leq n$, there is a positive constant $N_1$, such that if $\min_{1\leq i\leq d}\{a_{ii}\}\geq N_1$, then for the map \eqref{inversemap2} and any $w_0\in U\cap F(U)$, one has that $m'_{\mathcal{C},w_0}\geq\lambda>1$
\end{lemma}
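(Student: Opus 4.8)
The plan is to run the same argument as in Lemma~\ref{Case-i-hyp}, with the two-block Jacobian \eqref{inverse-high-i} replaced by the three-block Jacobian \eqref{inverse-high-ii} of the inverse of the Case~(2) map. The first step is to record the size of the preimage coordinates, exactly as in \eqref{bound1}. For $w_0\in U\cap F(U)$ put $w_{-1}=H(w_0)=(h_1,\dots,h_n)\in U$. Since $w_{-1}\in U=[-R,R]^n$ from \eqref{regionU} and $w_0=F(w_{-1})\in U$, the $i$-th defining equation of \eqref{inversemap2} gives $a_{ii}-h_i^2=(w_0)_i-(\mbox{linear terms in the }h_j)$, whose right-hand side is $O(\sqrt{a_{ii}})$ on $U$ and hence negligible against $a_{ii}$; thus there are positive constants $C_1,C_1'$ (depending only on the fixed off-diagonal $a_{ij}$) with $C_1a_{ii}\le h_i^2\le C_1'a_{ii}$ for $1\le i\le d$ once $\min_{1\le i\le d}a_{ii}$ is large. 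Thus each $|h_i|$, $1\le i\le d$, is of order $\sqrt{a_{ii}}$ and is forced to be large.

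Next I would reuse the region $K_1(\mathbb{R}^d,\mathbb{R}^{n-d})$ of Lemma~\ref{Case-i-hyp}, namely the vectors with $|u_0|>|v_0|$, which is precisely the complement of the cone $\mathcal{C}$ entering the definition of $m'_{\mathcal{C},w_0}$. I decompose the input according to the column partition $2d-n,\,n-d,\,n-d$ of \eqref{inverse-high-ii}, writing $\mathbf{v}=(\mathbf{v}^{(1)},\mathbf{v}^{(2)},\mathbf{v}^{(3)})$, so that $v_0=(\mathbf{v}^{(1)},\mathbf{v}^{(2)})$ are the first $d$ coordinates and $u_0=\mathbf{v}^{(3)}$ the last $n-d$. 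Applying \eqref{inverse-high-ii}, the three output blocks of $DH(\mathbf{v})$ are $G^{-1}\mathbf{v}^{(3)}$, then $B^{-1}\mathbf{v}^{(1)}+S\mathbf{v}^{(3)}$, and finally $V\mathbf{v}^{(1)}+E^{-1}\mathbf{v}^{(2)}+W\mathbf{v}^{(3)}$. The matrices $S$ and $V$ each carry a single factor $\mbox{diag}[2h_i]$ and so are of order $\sqrt{a_{ii}}$, whereas $W$ contains a product of $\mbox{diag}[2h_{2d-n+1},\dots,2h_d]$ with a matrix whose lower block is $S$, so $W$ is of order $a_{ii}$. Because $|u_0|=|\mathbf{v}^{(3)}|$ dominates $|v_0|$, the bounded pieces $B^{-1}\mathbf{v}^{(1)}$ and $E^{-1}\mathbf{v}^{(2)}$ are controlled by a fixed multiple of $|u_0|$, while the third block is dominated by $W\mathbf{v}^{(3)}$, of order $a_{ii}\,|u_0|$. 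Consequently, for $\min_{1\le i\le d}a_{ii}$ large enough the norm of $DH(\mathbf{v})$ exceeds $\lambda|u_0|=\lambda|\mathbf{v}|$, which is exactly $m'_{\mathcal{C},w_0}\ge\lambda>1$; this produces the constant $N_1$ of the statement.

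The step I expect to be the main obstacle is converting these order-of-magnitude growth statements into a \emph{uniform} lower bound, valid for every admissible direction, i.e.\ ruling out hidden cancellation. As already in Case~(1), the danger is a direction with $|u_0|>|v_0|$ on which the dominant matrix annihilates $u_0=\mathbf{v}^{(3)}$; I would exclude this by exploiting the invertibility of $G$, $B$, and $E$ assumed in \eqref{inversemap2}, together with the structure of the blocks $K,L$ of $G^{-1}$, so as to bound the norm of the leading contribution in the third block below by $c\,a_{ii}\,|\mathbf{v}^{(3)}|$ (and the subleading one through $S$ by $c\sqrt{a_{ii}}\,|\mathbf{v}^{(3)}|$). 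Since Case~(2) has three coupled output blocks and $W$ is quadratic in the $h_i$ rather than linear as in \eqref{bound0}, the bookkeeping of which block realizes the lower bound, and the check that $DH(\mathbf{v})$ still lands outside the cone, is the genuinely delicate part; the remaining estimates are routine once this is settled.
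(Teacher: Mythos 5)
Your proposal follows essentially the same route as the paper: the same bound $C_2a_{ii}\le h_i^2\le C_2'a_{ii}$ on the preimage coordinates, the same cone $K_1(\mathbb{R}^d,\mathbb{R}^{n-d})$, the same block decomposition of $DH$ from \eqref{inverse-high-ii} into $G^{-1}u_0$, $B^{-1}v_0^1+Su_0$, and $Vv_0^1+E^{-1}v_0^2+Wu_0$, with $Wu_0$ identified as the dominant term. The cancellation issue you flag is handled in the paper exactly as you suggest, by factoring $\bigl(\begin{smallmatrix}L\\ B^{-1}\mathrm{diag}[2h_i]K\end{smallmatrix}\bigr)$ through the invertible matrix $G^{-1}=\bigl(\begin{smallmatrix}K\\ L\end{smallmatrix}\bigr)$ and invoking the invertibility of $B$, $E$ and the hypothesis $\tfrac32 d\le n$.
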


\begin{proof}
It follows from the invertibility of $B$ and $E$, the expression of the inverse of \eqref{inversemap2}, and the definition of the region $U$, that
\begin{equation}\label{bound2}
C_2 a_{ii}\leq h_i^2\leq C'_2 a_{ii},\ 1\leq i\leq d,
\end{equation}
where $C_2$ and $C'_2$ are positive constants dependent on $a_{ij}$, $1\leq j\leq n$, and $j\neq i$, and $a_{11},...,a_{dd}$ are sufficiently large.

Consider the following unit cone:
\begin{align*}
K_1(\mathbb{R}^d,\mathbb{R}^{n-d})&=\bigg\{\mathbf{v}=
\left(
\begin{array}{c}
v_0\\
u_0
\end{array}
\right):\
v_0=(v_{0,1},...,v_{0,d})^T\in\mathbb{R}^d,\\
&u_0=(u_{0,1},...,u_{0,n-d})^T\in\mathbb{R}^{n-d},\ \mbox{and}\ |u_0|>|v_0|\bigg\},
\end{align*}
where $|v_0|=\max\{|v_{0,1}|,...,|v_{0,d}|\}$, $|u_0|=\max\{|u_{0,1}|,...,|u_{0,d}|\}$,
and $|\mathbf{v}|=\max\{|v_0|,|u_0|\}$.
Assume that $DH(\mathbf{v})=
\left(
\begin{array}{c}
v_{-1}\\
u_{-1}
\end{array}
\right)$, where $v_{-1}=(v_{-1,1},...,v_{-1,d})^T\in\mathbb{R}^d$ and $u_{-1}=(u_{-1,1},...,u_{-1,n-d})^T\in\mathbb{R}^{n-d}$. Suppose that $v_{0}=
\left(
\begin{array}{c}
v^1_{0}\\
v^2_{0}
\end{array}
\right)$ and $v_{-1}=
\left(
\begin{array}{c}
v^1_{-1}\\
v^2_{-1}
\end{array}
\right)$, where the dimension of $v^{1}_0$ and $v^{1}_{-1}$ is $2d-n$.
By \eqref{inverse-high-ii}, one has that
\beq\label{inverseexpression2}
\left(
\begin{array}{c}
v^{1}_{-1}\\
v^2_{-1}\\
u_{-1}
\end{array}
\right)=
\left(
\begin{array}{c}
 G^{-1}u_0\\
B^{-1}v^1_0+Su_0\\
Vv^1_0+E^{-1}v^2_0+Wu_0
\end{array}
\right).
\eeq

To show that $m'_{\mathcal{C},w_0}\geq\lambda$, it is sufficient to show that $|v_{-1}|<|u_{-1}|$ and $|u_{-1}|\geq\lambda|u_0|$.
It follows from \eqref{inverse-high-ii} and \eqref{inverseexpression2}
that we only need to study $Su_0$, $Vv^1_0$, and $Wu_0$. So, we have to consider the following three terms:
\beqq
B^{-1}(\mbox{diag}[2h_1,...,2h_{2d-n}])Ku_0,\
N (\mbox{diag}[2h_{n-d+1},...,2h_d])B^{-1}v^1_0,
\eeqq
\beqq
E^{-1}(\mbox{diag}[2h_{2d-n+1},...,2h_d])\left(\begin{array}{c}
L\\
B^{-1}(\mbox{diag}[2h_1,...,2h_{2d-n}])K
\end{array}
\right)u_0.
\eeqq
Since $G^{-1}=\left(\begin{array}{c} K \\ L \end{array}\right)$, where $K$ is $(2d-n)\times (n-d)$, and $L$ is $(2n-3d)\times (n-d)$, one has that
\begin{align*}
&\bigg\|E^{-1}(\mbox{diag}[2h_{2d-n+1},...,2h_d])\left(\begin{array}{c}
L\\
B^{-1}(\mbox{diag}[2h_1,...,2h_{2d-n}])K
\end{array}
\right)\bigg\|\\
\gtrsim&\bigg\| B^{-1}(\mbox{diag}[2h_1,...,2h_{2d-n}])K\bigg\|,
\end{align*}
if $a_{11},...,a_{dd}$ are sufficiently large. By direct calculation,
\beqq
\left(\begin{array}{c}
L\\
B^{-1}(\mbox{diag}[2h_1,...,2h_{2d-n}])K
\end{array}
\right)
=
\left(\begin{array}{cc}
E_{2n-3d}& 0\\
0& B^{-1}
\end{array}
\right)(\mbox{diag}[\underbrace{1,...,1}_{2n-3d},2h_1,...,2h_{2d-n}])
\left(\begin{array}{c}
L\\
K
\end{array}
\right),
\eeqq
where $E_{2n-3d}=\mbox{diag}[\underbrace{1,...,1}_{2n-3d}]$.
It follows from the assumption $\frac{3}{2}d\leq n$, the fact that $E$ and $G$ are invertible, and \eqref{bound2}, that
\beqq
\|Wu_0\|\geq\|Vv^1_0\|,
\eeqq
if $a_{11},...,a_{dd}$ are sufficiently large.

Hence, $m'_{\mathcal{C},w_0}\geq\lambda>1$ for sufficiently large $a_{11},...,a_{dd}$. This completes the proof.
\end{proof}

\begin{lemma}\label{Case-iii-hyp}
In Case (3), for any fixed $a_{ij}$ in \eqref{inversemap3}, $1\leq i\neq j\leq n$, there is a positive constant $N_1$, such that if $\min_{1\leq i\leq d}\{a_{ii}\}\geq N_1$, then for the map \eqref{inversemap3} and any $w_0\in U\cap F(U)$, one has that $m'_{\mathcal{C},w_0}\geq\lambda>1$.
\end{lemma}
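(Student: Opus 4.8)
The plan is to follow the template of Lemmas \ref{Case-i-hyp} and \ref{Case-ii-hyp}, deducing the co-expansion bound $m'_{\mathcal{C},w_0}\geq\lambda$ directly from the block Jacobian \eqref{inverse-high-iii} of the inverse map $H$. First I would record the Case~(3) analogue of \eqref{bound1} and \eqref{bound2}: for every $w_0\in U\cap F(U)$, writing $w_{-1}=H(w_0)\in U$ and $h_i=h_i(w_0)$, the coordinates satisfy a two-sided bound
\beqq
C_3\,a_{ii}\leq h_i^2\leq C'_3\,a_{ii},\quad 1\leq i\leq d,
\eeqq
with positive constants $C_3,C'_3$ depending only on the fixed off-diagonal entries $a_{ij}$. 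The upper bound is immediate from $w_{-1}\in U=[-R,R]^n$ with $R=\min_i\sqrt{a_{ii}}$ as in \eqref{regionU}; the lower bound follows, as in the previous two cases, by combining the explicit inverse formulas \eqref{inverseexpression3} and the recursive expressions for $h_{n-d+1},\dots,h_{d+1}$ with the invertibility of $C$ and $G$ and the constraint $F(w_{-1})=w_0\in U$ on the first $d$ coordinates. This is precisely the input that lets the factors $2h_i$ appearing throughout \eqref{inverse-high-iii} act as genuine expansion coefficients of order $\sqrt{a_{ii}}\to\infty$, and in particular keeps them bounded away from zero.

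Next I would take the same cone complement as before, the set of $\mathbf{v}=(v_0,u_0)^T$ with $v_0\in\mathbb{R}^d$, $u_0\in\mathbb{R}^{n-d}$ and $|u_0|>|v_0|$, and compute $DH(\mathbf{v})=(v_{-1},u_{-1})^T$ from \eqref{inverse-high-iii}, exactly as in \eqref{inverseexpression1} and \eqref{inverseexpression2}. The goal is to show $|v_{-1}|<|u_{-1}|$ together with $|u_{-1}|\geq\lambda|u_0|$, which gives $m'_{\mathcal{C},w_0}\geq\lambda$. The part of the output coming from the block $G^{-1}u_0$, together with all contributions from the parameter-independent blocks ($C^{-1}$, $S$, $-DG^{-1}$, and the $-C^{-1}AG^{-1}$ summand of $V$), is bounded uniformly in the diagonal parameters; so the analysis reduces to the blocks carrying the $2h_i$ factors, namely the term $C^{-1}(\mbox{diag}[2h_1,\dots,2h_{n-d-1}])K$ inside $V$ and the recursively defined block $SW$ composed with the Jacobian of $(h_{n-d},\dots,h_d)$. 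Each of these is a product of factors $2h_i\sim\sqrt{a_{ii}}$ times fixed invertible matrices, so taking $\min_i a_{ii}\geq N_1$ large enough forces both desired inequalities, in complete analogy with \eqref{bound0} in Case~(1).

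The new difficulty, which separates Case~(3) from Cases~(1)--(2), is the recursive structure responsible for the inverse having degree $2^{2d-n}$: the rows $h_{n-d+1},\dots,h_{d+1}$ form a chain in which each entry depends quadratically on its predecessor, so the block $SW\,\partial(h_{n-d},\dots,h_d)/\partial x$ of \eqref{inverse-high-iii} is only specified recurrently, with entries that are products of as many as $2d-n$ factors of the form $2h_i/a_{n-d+i-1,n-d+i}$. The delicate point is to verify that, after this compounding, the dominant monomial of the relevant output block is a genuine product of the large factors $h_i$ driven by the $u_0$-input (so that its magnitude grows rather than being annihilated by the bounded linear parts), and that it lands in $u_{-1}$ rather than in the bounded $v_{-1}$ block. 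I would carry this out by induction on the chain index $i=1,\dots,2d-n+1$: differentiating the recursion for $h_{n-d+i}$ multiplies the partial derivative established at step $i-1$ by a further factor $2h_{n-d+i-1}/a_{n-d+i-1,n-d+i}$ plus bounded terms, where the lower bound $h_i^2\geq C_3 a_{ii}$ guarantees that no leading coefficient vanishes and the invertibility of $C$ and $G$ guarantees the leading term survives. Combining the resulting estimate with the companion forward-expansion bound for $m_{\mathcal{C},w_0}$ (to be obtained analogously) and invoking Lemma \ref{unifhyp} then yields the uniform hyperbolicity used in Theorem \ref{hyp-highd-dim}; the present lemma, however, asserts only the co-expansion half, so the induction just described is the crux.
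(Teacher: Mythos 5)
Your proposal follows the same route as the paper's proof: the two-sided bound $C_3 a_{ii}\le h_i^2\le C_3' a_{ii}$ from the inverse formulas and $U=[-R,R]^n$, the same cone complement $|u_0|>|v_0|$, the reduction to $|v_{-1}|<|u_{-1}|$ and $|u_{-1}|\ge\lambda|u_0|$ via the block Jacobian \eqref{inverse-high-iii}, and the identification of the dominant term in $u^1_{-1}$ as the compounded product of the factors $2h_i$ along the recursive chain (the paper states this as $|u^1_{-1}|\approx C\prod_{i=n-d}^{d}|h_i|$ without writing out the induction you describe). Your induction on the chain index simply makes that step explicit, so the argument is correct and essentially identical in structure.
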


\begin{proof}
By the invertibility of $C$ and $G$, the expression of the inverse of \eqref{inversemap3}, and the definition of the region $U$, one has that
\begin{equation}\label{bound3}
C_3 a_{ii}\leq h_i^2\leq C'_3 a_{ii},\ 1\leq i\leq d,
\end{equation}
where $C_3$ and $C'_3$ are positive constants dependent on $a_{ij}$, $1\leq i\neq j\leq n$, and $a_{11},...,a_{dd}$ are sufficiently large.

Consider the following unit cone:
\begin{align*}
K_1(\mathbb{R}^d,\mathbb{R}^{n-d})&=\bigg\{\mathbf{v}=
\left(
\begin{array}{c}
v_0\\
u_0
\end{array}
\right):\
v_0=(v_{0,1},...,v_{0,d})^T\in\mathbb{R}^d,\\
&u_0=(u_{0,1},...,u_{0,n-d})^T\in\mathbb{R}^{n-d},\ \mbox{and}\ |u_0|>|v_0|\bigg\},
\end{align*}
where $|v_0|=\max\{|v_{0,1}|,...,|v_{0,d}|\}$, $|u_0|=\max\{|u_{0,1}|,...,|u_{0,d}|\}$, and $|\mathbf{v}|=\max\{|v_0|,|u_0|\}$.
Denote $DH(\mathbf{v})=
\left(
\begin{array}{c}
v_{-1}\\
u_{-1}
\end{array}
\right)$, where $v_{-1}=(v_{-1,1},...,v_{-1,d})^T$ and $u_{-1}=(u_{-1,1},...,u_{-1,n-d})^T$. Suppose that $v_{0}=
\left(
\begin{array}{c}
v^1_{0}\\
v^2_{0}
\end{array}
\right)$ with the dimension of $v^1_{0}$ equals to $n-d-1$, $u_{-1}=
\left(
\begin{array}{c}
u^1_{-1}\\
u^2_{-1}
\end{array}
\right)$ with the dimension of $u^{1}_{-1}$ equals to $1$, $v_{-1}=
\left(
\begin{array}{c}
v^1_{-1}\\
v^2_{-1}
\end{array}
\right)$ with the dimension of $v^{1}_{-1}$ equals to $n-d$.

By \eqref{inverse-high-iii}, one has that
\beq\label{matrix3}
\left(
\begin{array}{c}
v^{1}_{-1}\\
\hline
v^2_{-1}\\
u^1_{-1}\\
\hline
u^2_{-1}
\end{array}
\right)=
\left(
\begin{array}{c}
G^{-1}u_0\\
\hline
S
v^2_{0}
-DG^{-1}u_0+SW
\left(
\begin{array}{ccc}
\frac{\partial h_{n-d}}{\partial x_1} & \cdots &  \frac{\partial h_{n-d}}{\partial x_n}\\
\vdots & \vdots & \vdots\\
\frac{\partial h_{d}}{\partial x_1} & \cdots &  \frac{\partial h_{d}}{\partial x_n}
\end{array}
\right)\mathbf{v}
\\
\hline
C^{-1}v^1_0+Vu_0
\end{array}
\right).
\eeq

To show that $m'_{\mathcal{C},w_0}\geq\lambda$, it is sufficient to show that $|v_{-1}|<|u_{-1}|$ and $|u_{-1}|\geq\lambda|u_0|$.
By \eqref{inverse-high-iii} and \eqref{matrix3}, one only need to investigate the expressions for $v^2_{-1}$, $u^1_{-1}$, and $u^2_{-1}$. By \eqref{inverseexpression3} and  \eqref{inverse-high-iii}, if $a_{11}$,...,$a_{dd}$ are sufficiently large, then $|u^1_{-1}|\approx C\prod^{d}_{i=n-d}|h_{i}|$, where $C$ is a constant dependent on $\mathbf{v}$ an the parameter $a_{ij}$, $i\neq j$. This, together with the expression for $u^2_{1}$ and \eqref{bound3}, implies that $|v_{-1}|<|u_{-1}|$ and $|u_{-1}|\geq\lambda|u_0|$. This completes the whole proof.
\end{proof}

Now, it is to study $U\cap F^{-1}(U)$.

\begin{lemma}\label{positive-hyp}
For any fixed $a_{ij}$, $1\leq i\neq j\leq n$, there is a positive constant $N_1$, such that if $\min_{1\leq i\leq d}\{a_{ii}\}\geq N_1$, then for the map \eqref{generalmap} and any $w_0\in U\cap F^{-1}(U)$, one has that $m_{\mathcal{C},w_0}\geq\lambda>1$.
\end{lemma}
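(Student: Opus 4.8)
The plan is to prove the forward expansion estimate directly from the block structure of $DF$, using that on $U\cap F^{-1}(U)$ the first $d$ coordinates of $w_0$ are forced to be large in modulus. A useful preliminary observation is that the forward map \eqref{generalmap} is \emph{the same} in all three Cases (only the inverse splits into the forms \eqref{inversemap1}, \eqref{inversemap2}, \eqref{inversemap3}), so this single lemma supplies the missing forward half of the hyperbolicity estimate for each of Lemmas \ref{Case-i-hyp}--\ref{Case-iii-hyp} at once.

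First I would extract a uniform lower bound on the coordinates. Write $w_0=(x_1,\dots,x_n)$. The hypothesis $w_0\in F^{-1}(U)$ says $F(w_0)\in U=[-R,R]^n$, so $|f_i(w_0)|\leq R$ for $1\leq i\leq d$. Since $f_i(w_0)=a_{ii}-x_i^2+\sum_{j\neq i}a_{ij}x_j$ with every $|x_j|\leq R$, the linear part is bounded by $C_iR$, where $C_i:=\sum_{j\neq i}|a_{ij}|$, whence $a_{ii}-x_i^2\leq(1+C_i)R$. Because $R=\min_{1\leq k\leq d}\sqrt{a_{kk}}\leq\sqrt{a_{ii}}$, for $\min_k a_{kk}\geq N_1$ large enough this gives
\[
x_i^2\geq a_{ii}-(1+C_i)\sqrt{a_{ii}}\geq\tfrac12 a_{ii},\qquad\text{hence}\qquad |x_i|\geq\tfrac12\sqrt{a_{ii}},\quad 1\leq i\leq d.
\]
(If no such $w_0$ exists the claim is vacuous.) This is the crux: it turns the invariance condition into the statement that every diagonal entry $-2x_i$ of $DF_{w_0}$ has modulus at least $\sqrt{a_{ii}}$.

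Next I set up the cone. The cone field $\mathcal{C}$ for Lemma \ref{unifhyp} has core $\mathbb{R}^d$, namely
\[
K_1(\mathbb{R}^d,\mathbb{R}^{n-d})=\Big\{\mathbf{v}=(v_0,u_0)^T:\ v_0\in\mathbb{R}^d,\ u_0\in\mathbb{R}^{n-d},\ |u_0|\leq|v_0|\Big\},
\]
the complement of the cone on which $m'_{\mathcal{C},w_0}$ was estimated in Lemmas \ref{Case-i-hyp}--\ref{Case-iii-hyp}. Take $\mathbf{v}\in\mathcal{C}_{w_0}$, so $|\mathbf{v}|=|v_0|$, and let $i^*\in\{1,\dots,d\}$ realize $|v_{0,i^*}|=|v_0|$. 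From \eqref{generalmap} the $i^*$-th row of $DF_{w_0}\mathbf{v}$ is $-2x_{i^*}v_{0,i^*}+\sum_{j\neq i^*}a_{i^*j}\mathbf{v}_j$, so
\[
|(DF_{w_0}\mathbf{v})_{i^*}|\geq 2|x_{i^*}|\,|v_0|-C_{i^*}|\mathbf{v}|=\big(2|x_{i^*}|-C_{i^*}\big)|v_0|\geq\big(\sqrt{a_{i^*i^*}}-C_{i^*}\big)|\mathbf{v}|\geq\lambda|\mathbf{v}|,
\]
once $N_1$ is chosen with $\sqrt{N_1}-\max_i C_i\geq\lambda$. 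Thus $|DF_{w_0}\mathbf{v}|\geq\lambda|\mathbf{v}|$ and $m_{\mathcal{C},w_0}\geq\lambda>1$. The same estimate, set against the bound $|(DF_{w_0}\mathbf{v})_i|\leq C|v_0|$ for the last $n-d$ rows (which carry no diagonal term), also gives $DF_{w_0}\mathcal{C}_{w_0}\subset\mathcal{C}_{F(w_0)}$, confirming that $\mathcal{C}$ is invariant and has constant orbit core dimension $d$.

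The main obstacle is the first step: securing the uniform lower bound $|x_i|\gtrsim\sqrt{a_{ii}}$, since without it the diagonal entries of $DF$ need not dominate and the fixed off-diagonal parameters $a_{ij}$ could wreck expansion; everything after it is a one-line dominant-term estimate. Combined with Lemmas \ref{Case-i-hyp}--\ref{Case-iii-hyp}, which provide $m'_{\mathcal{C},w_0}\geq\lambda$ for $w_0\in U\cap F(U)$ in Cases (1)--(3), Lemma \ref{unifhyp} with $N=1$ then yields that $\Lambda=\bigcap_{-\infty}^{+\infty}F^i(U)$ is uniformly hyperbolic, as required for Theorem \ref{hyp-highd-dim}.
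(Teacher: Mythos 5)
Your proposal is correct and follows essentially the same route as the paper: both use $F(w_0)\in U=[-R,R]^n$ together with $R\leq\sqrt{a_{ii}}$ to force $x_i^2\gtrsim a_{ii}$ for $1\leq i\leq d$, and then run a dominant-diagonal estimate on the row of $DF_{w_0}$ indexed by the coordinate realizing $|v_0|$. Your write-up is in fact slightly more careful than the paper's (explicit constant $\tfrac12 a_{ii}$, a generic maximizing index $i^*$ instead of the WLOG choice $i^*=1$, and an explicit check of cone invariance), but the underlying argument is identical.
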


\begin{proof}
Suppose that $w_0=(x_{0,1},...,x_{0,n})\in U$, $w_1=F(w_0)=(x_{1,1},...,x_{1,n})$, and $w_{-1}=F^{-1}(w_0)=(x_{-1,1},...,x_{-1,n})$.
Since $w_0\in U\cap F^{-1}(U)$, one has that $w_1=F(w_0)\in U$, that is,
\beqq
-R\leq a_{ii}-x_i^2+\sum_{1\leq j\leq n,\ j\neq i}a_{ij}x_{j}\leq R,\ 1\leq i\leq d,
\eeqq
this, together with the fact that $|x_k|\leq R$, $1\leq k\leq n$, and the definition of $R$ in \eqref{regionU}, implies that
\beq\label{boundfor}
C_4 a_{ii}\leq x^2_{i}\leq C'_4 a_{ii},\ 1\leq i\leq d,
\eeq
where $C_4$ and $C'_4$ are positive constants dependent on $a_{ij}$, $1\leq j\leq n$, and $j\neq i$, and $a_{11},...,a_{dd}$ are sufficiently large.

Consider the unit cone
\begin{align*}
K_1(\mathbb{R}^d,\mathbb{R}^{n-d})&=\bigg\{\mathbf{v}=
\left(
\begin{array}{c}
v_0\\
u_0
\end{array}
\right):\
v_0=(v_{0,1},...,v_{0,d})^T\in\mathbb{R}^d,\\
&u_0=(u_{0,1},...,u_{0,n-d})^T\in\mathbb{R}^{n-d},\ \mbox{and}\ |u_0|\leq|v_0|\bigg\},
\end{align*}
where $|v_0|=\max\{|v_{0,1}|,...,|v_{0,d}|\}$, $|u_0|=\max\{|u_{0,1}|,...,|u_{0,d}|\}$, and $|\mathbf{v}|=\max\{|v_0|,|u_0|\}$.

Without loss of generality, suppose that $|v_0|=|v_{0,1}|$ and $|u_0|=|u_{0,1}|$. Set $DF(\mathbf{v})=
\left(
\begin{array}{c}
v_1\\
u_1
\end{array}
\right)$, where $v_1=(v_{1,1},...,v_{1,d})^T\in\mathbb{R}^d$,
$u_1=(u_{1,1},...,u_{1,n-d})^T\in\mathbb{R}^{n-d}$. So, if $a_{11}$ is sufficiently large, by \eqref{boundfor}, then
\begin{align*}
&|v_1|\geq|v_{1,1}|\geq 2|x_1||v_{0,1}|-\sum_{2\leq j\leq d} |a_{1j}||v_{0,j}|-\sum_{d+1\leq j\leq n}|a_{1j}||u_{0,j-d}|\\
&\geq
2|x_1||v_{0,1}|-\sum_{2\leq j\leq n} |a_{1j}||v_{0,1}|\geq \lambda\max_{d+1\leq i\leq n}\bigg(\sum_{1\leq j\leq n,\ j\neq i}|a_{ij}|\bigg)|v_{0,1}|\geq\lambda|u_1|,
\end{align*}
implying that $m_{\mathcal{C},w_0}\geq\lambda>1$. This completes the proof.
\end{proof}

\begin{lemma}\label{high-dim-hyp-set}
In Cases (1)--(3), for fixed parameters $a_{ij}$, $1\leq i\neq j\leq n$, and sufficiently large $a_{11}$,...,$a_{dd}$, the invariant set $\Lambda=\cap^{\infty}_{i=-\infty}F^{i}(U)$ is uniformly hyperbolic.
\end{lemma}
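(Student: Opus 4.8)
The plan is to invoke Newhouse's criterion (Lemma \ref{unifhyp}) with $N=1$, using the single cone field built from the standard unit cone $K_1(\mathbb{R}^d,\mathbb{R}^{n-d})$ that already appears in Lemmas \ref{Case-i-hyp}--\ref{positive-hyp}. Concretely, for every $w\in\Lambda$ I would set $\mathcal{C}_w:=K_1(\mathbb{R}^d,\mathbb{R}^{n-d})$, whose core is the fixed subspace $E_w=\mathbb{R}^d$. Since $\dim E_w=d$ is independent of $w$, this cone field has constant orbit core dimension over $\Lambda$, so the structural hypothesis of Lemma \ref{unifhyp} on the cone field is met at no extra cost.

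First I would fix the off-diagonal parameters $a_{ij}$, $i\neq j$, and choose the diagonal entries large. Each of Lemmas \ref{Case-i-hyp}, \ref{Case-ii-hyp}, \ref{Case-iii-hyp} supplies a threshold $N_1$ (depending on the case and on the $a_{ij}$) above which the co-expansion estimate $m'_{\mathcal{C},w_0}\geq\lambda$ holds, while Lemma \ref{positive-hyp} supplies a threshold above which the expansion estimate $m_{\mathcal{C},w_0}\geq\lambda$ holds. Taking $\min_{1\leq i\leq d}a_{ii}$ at least the maximum of the two relevant thresholds makes both estimates simultaneously available in each of the three cases.

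Next I would localize the two estimates to $\Lambda$. Because $\Lambda=\bigcap_{i=-\infty}^{\infty}F^i(U)$ is $F$-invariant, every $w\in\Lambda$ lies in $U\cap F(U)$ and in $U\cap F^{-1}(U)$; indeed $w\in F(U)$ and $w\in F^{-1}(U)$ follow directly from the intersection. Hence Lemma \ref{positive-hyp} gives $m_{\mathcal{C},w}\geq\lambda>1$ at every $w\in\Lambda$, so $F$ is expanding on $\mathcal{C}$, and the appropriate one of Lemmas \ref{Case-i-hyp}--\ref{Case-iii-hyp} gives $m'_{\mathcal{C},w}\geq\lambda>1$ at every $w\in\Lambda$, so $F$ is co-expanding on $\mathcal{C}$. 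Applying Lemma \ref{unifhyp} with $N=1$ then yields that $\Lambda$ is uniformly hyperbolic in all three cases.

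The step needing care is bookkeeping rather than a genuine analytic obstacle: one must confirm that the cone $\{|u_0|\leq|v_0|\}$ used in Lemma \ref{positive-hyp} and the complementary region $\{|u_0|>|v_0|\}$ used in Lemmas \ref{Case-i-hyp}--\ref{Case-iii-hyp} refer to the \emph{same} cone $\mathcal{C}_w$ with core $\mathbb{R}^d$, so that ``expanding'' and ``co-expanding'' are measured against one consistent splitting. Once this is verified, no further work is required here: all the quantitative derivative bounds have been absorbed into the preceding lemmas, and the present lemma is merely their assembly through Lemma \ref{unifhyp}.
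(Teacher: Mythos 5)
Your proposal is correct and is essentially the paper's own argument: the paper's proof of Lemma \ref{high-dim-hyp-set} is a one-line assembly of Lemmas \ref{unifhyp} and \ref{Case-i-hyp}--\ref{positive-hyp}, which is exactly the combination you carry out (with the helpful extra bookkeeping that the expansion estimate of Lemma \ref{positive-hyp} and the co-expansion estimates of Lemmas \ref{Case-i-hyp}--\ref{Case-iii-hyp} refer to one consistent cone with core $\mathbb{R}^d$, and that points of $\Lambda$ lie in both $U\cap F(U)$ and $U\cap F^{-1}(U)$).
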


\begin{proof}
This can be obtained by applying Lemmas \ref{unifhyp}, \ref{Case-i-hyp}--\ref{positive-hyp}. \end{proof}

\begin{lemma} \label{components}
In Cases (1)--(3), for fixed $a_{ij}$, $1\leq i\neq j\leq n$, if $a_{11},...,a_{dd}$ are sufficiently large, then $F(U)\cap U$ and $F^{-1}(U)\cap U$ have $2^{d}$ connected components, respectively.
\end{lemma}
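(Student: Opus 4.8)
The plan is to reduce everything to counting the connected components of the two ``source'' sets $U\cap F^{-1}(U)$ and $U\cap F(U)$. In each of Cases (1)--(3) the map $F$ is a global diffeomorphism of $\mathbb{R}^n$ with the explicit polynomial inverse $H$, so both $F$ and $H$ are homeomorphisms and preserve the number of connected components. Since $F$ is a bijection one has $F\big(U\cap F^{-1}(U)\big)=F(U)\cap U$ and $H\big(U\cap F(U)\big)=F^{-1}(U)\cap U$; hence it suffices to show that each of $U\cap F^{-1}(U)$ and $U\cap F(U)$ has exactly $2^{d}$ components. The two are handled by the same argument, applied to the quadratic folding of $F$ and of $H$, respectively.

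Consider first $U\cap F^{-1}(U)=\{x\in U:\ F(x)\in U\}$. The factor $2^{d}$ originates from the folding in the first $d$ coordinates. By the estimate \eqref{boundfor} established in the proof of Lemma \ref{positive-hyp}, every $x$ in this set satisfies $C_{4}a_{ii}\le x_{i}^{2}\le C_{4}'a_{ii}$ for $1\le i\le d$, so each such $x_{i}$ is bounded away from $0$ and is trapped in one of the two disjoint intervals $I_{i}^{+}=[\sqrt{C_{4}a_{ii}},\sqrt{C_{4}'a_{ii}}]$ or $I_{i}^{-}=-I_{i}^{+}$. Consequently the sign vector $\sigma=(\operatorname{sign}x_{1},\dots,\operatorname{sign}x_{d})\in\{+,-\}^{d}$ is locally constant on $U\cap F^{-1}(U)$, which splits this set into at most $2^{d}$ relatively open and closed pieces $\Lambda_{\sigma}$. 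For $U\cap F(U)=\{x\in U:\ H(x)\in U\}$ one repeats the argument verbatim, now using the bounds $Ca_{ii}\le h_{i}^{2}\le C'a_{ii}$, $1\le i\le d$, supplied by \eqref{bound1}, \eqref{bound2}, \eqref{bound3} in Lemmas \ref{Case-i-hyp}--\ref{Case-iii-hyp}: the $d$ folding coordinates $h_{i}$ of the inverse are confined to two branches, again yielding at most $2^{d}$ pieces.

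It remains to prove that each piece $\Lambda_{\sigma}$ is nonempty and connected. Fixing $\sigma$, I would work on the slab $S_{\sigma}$ where $x_{i}\in I_{i}^{\sigma_{i}}$ for $1\le i\le d$ and introduce the coordinate change $\psi_{\sigma}(x)=(f_{1}(x),\dots,f_{d}(x),x_{d+1},\dots,x_{n})$. Its Jacobian is block triangular, with $(n-d)\times(n-d)$ identity block and $d\times d$ diagonal block $-2\diag(x_{1},\dots,x_{d})+(a_{ij})_{i\ne j,\,i,j\le d}$; since $|x_{i}|\ge\sqrt{C_{4}a_{ii}}$ is large while the $a_{ij}$ are fixed, this block is strictly diagonally dominant, so $\psi_{\sigma}$ is a local diffeomorphism. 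Equivalently, on each branch the quadratic equation $x_{i}^{2}=a_{ii}+\sum_{j\ne i}a_{ij}x_{j}-f_{i}$ can be solved as $x_{i}=\sigma_{i}\big(a_{ii}+\sum_{j\ne i}a_{ij}x_{j}-f_{i}\big)^{1/2}$, and for large $a_{ii}$ the right-hand side is a contraction in $x$ (its derivatives are $O(a_{ii}^{-1/2})$), so $x'=(x_{1},\dots,x_{d})$ is a single-valued smooth function of the values $f_{1},\dots,f_{d}$ and of $x''=(x_{d+1},\dots,x_{n})$. Thus $\Lambda_{\sigma}$ is the graph over the base region cut out in the $(f_{1},\dots,f_{d},x'')$ variables by the box constraints $f_{i}\in[-R,R]$ together with the linear constraints $\sum_{j\ne i}a_{ij}x_{j}\in[-R,R]$, $d<i\le n$, and nonemptiness is obtained by exhibiting one admissible point (e.g. $x_{i}\approx\sigma_{i}\sqrt{a_{ii}}$ together with $x''$ chosen from the invertibility of the blocks $C,G,\dots$ in \eqref{inversemap1}--\eqref{inversemap3}).

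The main obstacle is exactly this last step: verifying that each $\Lambda_{\sigma}$ is connected, i.e. that the linear constraints coming from the non-quadratic rows ($d<i\le n$) do not break a piece into several pieces, and that all $2^{d}$ sign patterns are actually realized. The delicate point is that the fixed off-diagonal and linear terms must stay negligible against the quadratic scale $|x_{i}|\sim\sqrt{a_{ii}}$; this is precisely where the largeness of $a_{11},\dots,a_{dd}$ enters, both through the diagonal dominance that makes $\psi_{\sigma}$ a diffeomorphism and keeps the solving graph single-valued, and through the estimates \eqref{boundfor} and \eqref{bound1}--\eqref{bound3} that confine the folding coordinates. Once each $\Lambda_{\sigma}$ is identified as a nonempty connected generalized rectangle, the count is exactly $2^{d}$ for $U\cap F^{-1}(U)$ and for $U\cap F(U)$; transporting by the homeomorphisms $F$ and $H$ gives the same count for $F(U)\cap U$ and $F^{-1}(U)\cap U$, which proves the lemma.
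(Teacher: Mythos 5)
Your proposal follows essentially the same route as the paper: for large $a_{ii}$ the condition $f_i(x)\in[-R,R]$ folds each of the first $d$ coordinates into two branches near $\pm\sqrt{a_{ii}}$, giving $2^d$ pieces, and the count is transported between the two sets by the diffeomorphism $F$. The paper's own proof is in fact terser (a one-variable slice argument plus $F^{-1}(U)\cap U=F^{-1}(F(U)\cap U)$) and leaves the connectedness of each piece just as informal as the step you flag; note also that your separate treatment of $U\cap F(U)$ via the folding of $H$ is redundant, since $F^{-1}(U)\cap U=U\cap F^{-1}(U)$ is already one of the two target sets and $F(U)\cap U$ is its homeomorphic image.
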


\begin{proof}
It is to show that $F(U)\cap U$ has $2^{d}$ components.

Fix any $i$, $1\leq i\leq d$, and fix any $x_{j}$ with $x_j\in[-R,R]$ and $j\neq i$, where $R$ is specified in \eqref{regionU}. So, the function $f_i(x_1,...,x_n)=a_{ii}-x_i^2+\sum_{1\leq j\leq n,\ j\neq i}a_{ij}x_{j}$ can be rewritten as follows:
\beqq
f_i(x_1,...,x_{i-1},y,x_{i+1},...,x_n)=a_{ii}-y^2+\sum_{1\leq j\leq n,\ j\neq i}a_{ij}x_{j},\ y\in[-R,R],\ 1\leq i\leq d.
\eeqq
By the definition of $R$ in \eqref{regionU}, if $a_{ii}$ is sufficiently large, then $ f_i(...,y,...)\approx a_{ii}-y^2+Ca^{1/2}_{ii}\approx a_{ii}-y^2$,
which implies  $\mbox{Graph}(y,f_i(...,y,...))\cap([-R,R]\times[-R,R])$ has two connected components for sufficiently large $a_{ii}$. Therefore, one has that $F(U)\cap U$ has $2^{d}$ components for sufficiently large $a_{ii}$, $1\leq i\leq d$.

Since the map $F$ is diffeomorphic, one has that $F^{-1}(U)\cap U=F^{-1}(F(U)\cap U)$ has $2^{d}$ components. This completes the proof.
\end{proof}

Therefore, it follows from Lemmas \ref{chaoticshift},  \ref{high-dim-hyp-set}, and \ref{components} that Theorem \ref{hyp-highd-dim} holds.

\begin{remark}
Similar results can be obtained, if we substitute the functions $\Delta_i(x)$ by general polynomials. For example, we could consider the following type of polynomial maps:
\begin{equation}\label{generalmap111}
\left\{
\begin{array}{ll}
f_1(x_1,...,x_n)=a_{11}x_{1}(1-x_1)+\sum_{2\leq j\leq n} a_{1j}x_j\\
\vdots \qquad  \qquad \vdots \qquad  \qquad \vdots \qquad  \qquad \vdots\\
f_d(x_1,...,x_n)=a_{dd}x_{d}(1-x_d)+\sum_{1\leq j\leq n,\ j\neq d}a_{dj}x_{j}\\
f_{d+1}(x_1,...,x_n)=\sum_{1\leq j\leq n,\ j\neq d+1}a_{d+1,j}x_{j}\\
\vdots \qquad  \qquad \vdots \qquad  \qquad \vdots \qquad  \qquad \vdots\\
f_n(x_1,...,x_n)=\sum_{1\leq j\leq n-1}a_{nj}x_j
\end{array} \right.,
\end{equation}
where $a_{ij}$, $1\leq i,j\leq n$, are parameters.
\end{remark}

\begin{remark}
It is an interesting question to find the general inverse maps for the following type of maps:
\begin{equation}\label{generalmap11}
\left\{
\begin{array}{ll}
f_1(x_1,...,x_n)=a_{11}(x_{1}-b_{11})\cdots(x_1-b_{1s_1})+\sum_{2\leq j\leq n} a_{1j}x_j\\
\vdots \qquad  \qquad \vdots \qquad  \qquad \vdots \qquad  \qquad \vdots\\
f_d(x_1,...,x_n)=a_{dd}(x_{d}-b_{d1})\cdots(x_d-b_{ds_d})+\sum_{1\leq j\leq n,\ j\neq d}a_{dj}x_{j}\\
f_{d+1}(x_1,...,x_n)=\sum_{1\leq j\leq n,\ j\neq d+1}a_{d+1,j}x_{j}\\
\vdots \qquad  \qquad \vdots \qquad  \qquad \vdots \qquad  \qquad \vdots\\
f_n(x_1,...,x_n)=\sum_{1\leq j\leq n-1}a_{nj}x_j
\end{array} \right.,
\end{equation}
where $a_{ij}$ and $b_{kl}$ are parameters. We guess that we could apply similar discussions above to show that for fixed $a_{ij}$, $1\leq i\neq j\leq n$ and sufficiently large $a_{11},...,a_{dd}$, there might exist a Smale horseshoe and  a uniformly hyperbolic invariant set on which the map is topologically conjugate with the fullshift on $s_1s_2\cdots s_d$ symbols.
\end{remark}

\bigskip

\section{Existence of strange attractors by simulations}\label{attractor}

In this section, many interesting maps with strange attractors are collected.
The Mathematical software Maple is applied to find the strange attractors and the corresponding maximal Lyapunov exponent is calculated. The maximal Lyapunov exponents of these maps are all positive, yielding that these maps have complicated dynamical behavior. For the calculation of the maximal Lyapunov exponents, please refer to \cite{ChoeErgodic}.
For the simulation of these strange attractors, the initial value is taken as $(0.5,0.5,0.5)$,
the first $80000$ points are omitted and the next $200000$ points are kept. Figures \ref{FigureMap2}-\ref{FigureMap33} give the simulation graphs of the maps (2), (8), (11), (20), (24), and (33) from the following table.

\begin{center}
\begin{longtable}{|c|c|c|c|c|}
\hline
  \multicolumn{4}{|c|}{Strange Attractors} \\
\hline
\ No. &\ Parameters & \ Polynomial &\ Max Lya Exp \\
\hline
 & $a_1=2.6$, $a_2=0.3$, $a_3=0$, & p(x)= x(1-x)&  \\
1. & $b_1=0.4$, $b_2=1.7$, $b_3=0.5,$ & q(y)= y(1-y)& 0.109706\\
& $c_1=1$, $c_2=0$, $c_3=0$ & r(z)=0 &  \\
\hline
& $a_1=2.8$, $a_2=0.2$, $a_3=0.6,$ & p(x)=x(1-x) & \\
2. & $b_1=0.6$, $b_2=2.4$, $b_3=0.1$, & q(y)=y(1-y) & 0.322432\\
& $c_1=1$, $c_2=0$, $c_3=0$ & r(z)=0 &\\
\hline
& $a_1=4.5$, $a_2=0.1$, $a_3=0$, & $p(x)=x(1-x)^2$ & \\
3. & $b_1=0.3$, $b_2=7$, $b_3=0.1$, & $q(y)=y(1-y)^2$ & 0.399491\\
& $c_1=1$, $c_2=0$, $c_3=0$ & $r(z)=0$ &\\
\hline
& $a_1=5.9$, $a_2=0.1$, $a_3=0$, & $p(x)=x^2(1-x)$ & \\
4. & $b_1=0.3$, $b_2=1.3$, $b_3=0.1$, & $q(y)=y^2(1-y)$ & 0.418139\\
& $c_1=1$, $c_2=0$, $c_3=0$ & r(z)=0 &\\
\hline
& $a_1=5.9$, $a_2=0.1$, $a_3=0$, & $p(x)=x^2(1-x)$ & \\
5. & $b_1=0.3$, $b_2=4.7$, $b_3=0.1$, & $q(y)=y^2(1-y)$ & 0.476862\\
& $c_1=1$, $c_2=0$, $c_3=0$ & $r(z)=0$ &\\
\hline
& $a_1=8$, $a_2=0.1$, $a_3=0$, & $p(x)=x(1-x)^3$ &\\
6. & $b_1=0.3$, $b_2=1.3$, $b_3=0.7$, & $q(y)=y(1-y)^3$ & 0.312734\\
& $c_1=1$, $c_2=0$, $c_3=0$ & $r(z)=0$ &\\
\hline
 & $a_1=13$, $a_2=0.1$, $a_3=0,$ & $p(x)=x^2(1-x)^2$ & \\
7. &$b_1=0.3$, $b_2=8$, $b_3=0.1$, & $q(y)=y^2(1-y)^2$ & 0.427413 \\
& $c_1=1$, $c_2=0$, $c_3=0$ & $r(z)=0$ &\\
\hline
& $a_1=10$, $a_2=0.1$, $a_3=0$, & $p(x)=x(1-x)^4$ &\\
8. &$b_1=0.7$, $b_2=10$, $b_3=0.2$ & $q(y)=y(1-y)^4$ & 0.304944 \\
& $c_1=1$, $c_2=0$, $c_3=0$ & $r(z)=0$ &\\
\hline
 & $a_1=28$, $a_2=0.1$, $a_3=0$, & $p(x)=x^2(1-x)^3$ & \\
9. & $b_1=0.3$, $b_2=20$, $b_3=0.1$, & $q(y)=y^2(1-y)^3$ & 0.383889 \\
& $c_1=1$, $c_2=0$, $c_3=0$ & $r(z)=0$ &  \\
\hline
& $a_1=20$, $a_2=0.4$, $a_3=0$, & $p(x)=x^3(1-x)^2$ & \\
10.& $b_1=0.3$, $b_2=20$, $b_3=0.1$ & $q(y)=y^3(1-y)^2$ & 0.434574  \\
& $c_1=1$, $c_2=0$, $c_3=0$ & $r(z)=0$ & \\
\hline
 & $a_1=34.5$, $a_2=0.4$, $a_3=0$, & $p(x)=x^2(1-x)^4$ &\\
11. & $b_1=0.5$, $b_2=34$, $b_3=0.2$, & $q(y)=y^2(1-y)^4$ & 0.280940 \\
& $c_1=1$, $c_2=0$, $c_3=0$ & $r(z)=0$ &\\
\hline
 & $a_1=30$, $a_2=0.1$, $a_3=0$, & $p(x)=x^2(1-x)^4$ & \\
12 & $b_1=0.2$, $b_2=23$, $b_3=0.2$, & $q(y)=y^2(1-y)^4$ & 0.330790\\
& $c_1=1$, $c_2=0$, $c_3=0$ & $r(z)=0$ &\\
\hline
& $a_1=37$, $a_2=0.2$, $a_3=0$ & $p(x)=x^3(1-x)^3$, &\\
13. & $b_1=0.6$, $b_2=36$, $b_3=0.2$, & $q(y)=y^3(1-y)^3$ & 0.231571\\
& $c_1=1$, $c_2=0$, $c_3=0$ & $r(z)=0$ &\\
\hline
& $a_1=29$, $a_2=0.2$, $a_3=0$, & $p(x)=x^4(1-x)^2$ &\\
14. & $b_1=0.6$, $b_2=22$, $b_3=0.2$, & $q(y)=y^4(1-y)^2$ &0.367095\\
& $c_1=1$, $c_2=0$, $c_3=0$ & $r(z)=0$ &\\
\hline
 & $a_1=65$, $a_2=0.3$, $a_3=0$, & $p(x)=x^3(1-x)^4$ &\\
15. & $b_1=0.5$, $b_2=50$, $b_3=0.3$, & $q(y)=y^3(1-y)^4$ & 0.252414\\
& $c_1=1$, $c_2=0$, $c_3=0$ & $r(z)=0$ &\\
\hline
& $a_1=180$, $a_2=0.3$, $a_3=0$, & $p(x)=x^4(1-x)^4$ &\\
16. & $b_1=0.2$, $b_2=160$, $b_3=0.3$, & $q(y)=y^4(1-y)^4$ &0.399516\\
& $c_1=1$, $c_2=0$, $c_3=0$ & $r(z)=0$ &\\
\hline
 &$a_1=390$, $a_2=0.3$, $a_3=0$, & $p(x)=x^4(1-x)^5$ &\\
17. &$b_1=0.5$, $b_2=250$, $b_3=0.1$, & $q(y)=y^4(1-y)^5$ &0.358692\\
&$c_1=1$, $c_2=0$, $c_3=0$ & $r(z)=0$& \\
\hline
& $a_1=19000$, $a_2=0.1$, $a_3=0$, & $p(x)=x^8(1-x)^7$ &\\
18.& $b_1=0.3$, $b_2=800$, $b_3=0.7$, & $q(y)=y^6(1-y)^7$ &0.542847\\
& $c_1=1$, $c_2=0$, $c_3=0$ & $r(z)=0$ &\\
\hline
& $a_1=2.5$, $a_2=0.2$, $a_3=0$, & $p(x)=(x+1)x(1-x)$ &\\
19. &$b_1=0.2$, $b_2=2.7$, $b_3=0.1$, & $q(y)=(y+1)y(1-y)$ & 0.210240\\
&$c_1=1$, $c_2=0$, $c_3=0$ & $r(z)=0$ &\\
\hline
 & $a_1=0$, $a_2=1$, $a_3=0$, & $p(x)=0$ &\\
20.& $b_1=0$, $b_2=0$, $b_3=1$, & $q(y)=0$ &0.175343\\
& $c_1=0.6$, $c_2=0.6$, $c_3=1$ & $r(z)=(z+1)^2(1-z)-1$ &\\
\hline
 & $a_1=0$, $a_2=1$, $a_3=0$, & $p(x)=0$ &\\
21.& $b_1=0$, $b_2=0$, $b_3=1$, & $q(y)=0$ &0.401407\\
& $c_1=0.1$, $c_2=0.2$, $c_3=14.5$ & $r(z)=z^2(1-z)^2$ &\\
\hline
 & $a_1=0$, $a_2=1$, $a_3=0$, & $p(x)=0$ &\\
22.& $b_1=0$, $b_2=0$, $b_3=1$, & $q(y)=0$ &0.250262\\
& $c_1=0.3$, $c_2=0.4$, $c_3=14.1$ & $r(z)=z^3(1-z)^2$ &\\
\hline
 & $a_1=0$, $a_2=0.7$, $a_3=0$, & $p(x)=0$ &\\
23.& $b_1=0$, $b_2=0$, $b_3=0.7$, & $q(y)=0$ &0.492135\\
& $c_1=0.2$, $c_2=0$, $c_3=23$ & $r(z)=z^3(1-z)^2$ &\\
\hline
  & $a_1=0$, $a_2=0.8$, $a_3=0$, & $p(x)=0$ &\\
24.& $b_1=0$, $b_2=0$, $b_3=0.7$, & $q(y)=0$ &0.510269\\
& $c_1=0.2$, $c_2=0$, $c_3=53$ & $r(z)=z^2(1-z)^4$ &\\
\hline
 & $a_1=0$, $a_2=0.8$, $a_3=0$, & $p(x)=0$ &\\
25.& $b_1=0$, $b_2=0$, $b_3=0.7$, & $q(y)=0$ &0.279378\\
& $c_1=0.3$, $c_2=0.3$, $c_3=440$ & $r(z)=z^4(1-z)^5$ &\\
\hline
  & $a_1=2$, $a_2=0.3$, $a_3=0$, & $p(x)=x(1-x)$ &\\
26.& $b_1=0.4$, $b_2=2$, $b_3=0.5$, & $q(y)=y(1-y)$ &0.113261\\
& $c_1=0.5$, $c_2=0$, $c_3=2$ & $r(z)=z(1-z)$ &\\
\hline
 & $a_1=3.5$, $a_2=0.1$, $a_3=0$, & $p(x)=x(1-x)$ &\\
27.& $b_1=0.3$, $b_2=2.3$, $b_3=0.2$, & $q(y)=y(1-y)$ &0.336029\\
& $c_1=0.2$, $c_2=0$, $c_3=3$ & $r(z)=z(1-z)$ &\\
\hline
 & $a_1=3.2$, $a_2=0.1$, $a_3=0$, & $p(x)=x(1-x)$ &\\
28.& $b_1=0.3$, $b_2=2.5$, $b_3=0.2$, & $q(y)=y(1-y)$ &0.0399768\\
& $c_1=0.2$, $c_2=0$, $c_3=2.6$ & $r(z)=z(1-z)$ &\\
\hline
 & $a_1=3.2$, $a_2=0.1$, $a_3=0$, & $p(x)=x(1-x)$ &\\
29.& $b_1=0.3$, $b_2=2.5$, $b_3=0.2$, & $q(y)=y(1-y)$ &0.0971938\\
& $c_1=0.2$, $c_2=0$, $c_3=3$ & $r(z)=z(1-z)$ &\\
\hline
  & $a_1=3.2$, $a_2=0.1$, $a_3=0$, & $p(x)=x(1-x)$ &\\
30.& $b_1=0.3$, $b_2=2.5$, $b_3=0.2$, & $q(y)=y(1-y)$ &0.198463\\
& $c_1=0.2$, $c_2=0$, $c_3=3.2$ & $r(z)=z(1-z)$ &\\
\hline
& $a_1=3.2$, $a_2=0.1$, $a_3=0$, & $p(x)=x(1-x)$ &\\
31.& $b_1=0.3$, $b_2=2.5$, $b_3=0.2$, & $q(y)=y(1-y)$ &0.226691\\
& $c_1=0.2$, $c_2=0$, $c_3=3.3$ & $r(z)=z(1-z)$ &\\
\hline
 & $a_1=3.2$, $a_2=0.1$, $a_3=0$, & $p(x)=x(1-x)$ &\\
32.& $b_1=0.3$, $b_2=2.5$, $b_3=0.2$, & $q(y)=y(1-y)$ &0.293952\\
& $c_1=0.2$, $c_2=0$, $c_3=3.4$ & $r(z)=z(1-z)$ &\\
\hline
  & $a_1=15$, $a_2=0.2$, $a_3=0.1$, & $p(x)=x^2(1-x)^4$ &\\
33.& $b_1=0.3$, $b_2=2.6$, $b_3=0.2$, & $q(y)=y(1-y)$ &0.149243\\
& $c_1=0.6$, $c_2=0.2$, $c_3=20$ & $r(z)=z^2(1-z)^3$ &\\
\hline
  & $a_1=20$, $a_2=0.3$, $a_3=0.2$, & $p(x)=x^2(1-x)^4$ &\\
34.& $b_1=0.2$, $b_2=2.5$, $b_3=0.2$, & $q(y)=y(1-y)$ &0.183814\\
& $c_1=0.2$, $c_2=0.2$, $c_3=25$ & $r(z)=z^2(1-z)^3$ &\\
\hline
  & $a_1=4$, $a_2=0.4$, $a_3=0.3$, & $p(x)=x^2(1-x)^3$ &\\
35.& $b_1=0.2$, $b_2=90$, $b_3=0.2$, & $q(y)=y^3(1-y)^4$ &0.333486\\
& $c_1=0.2$, $c_2=0.2$, $c_3=1760$ & $r(z)=z^5(1-z)^6$ &\\
\hline
  & $a_1=4$, $a_2=0.4$, $a_3=0.3$, & $p(x)=x^2(1-x)^3$ &\\
36.& $b_1=0.2$, $b_2=60$, $b_3=0.2$, & $q(y)=y^3(1-y)^4$ &0.206663\\
& $c_1=0.2$, $c_2=0.2$, $c_3=1700$ & $r(z)=z^5(1-z)^6$ &\\
\hline
\end{longtable}
\end{center}
\bigskip

\section{Example}

In this section, two examples are provided to illustrate the theoretical results obtained in Theorems \ref{onepositive} and \ref{twopositive}. All the illustration graphs in this section are drawn by using the software Mathematica. The interested readers can make simple programs to run some softwares and obtain more interesting graphs.

\begin{example} \label{illustr2}
Consider the following example:
\begin{equation}
\left\{
  \begin{array}{ll}
    f_1(x,y,z)=7(x-1)(2-x)+0.2y+z \\
    f_2(x,y,z)=x \\
    f_3(x,y,z)=0.2y,
  \end{array}
\right.
\end{equation}
It is evident that this map satisfies the conditions in Case (i) of Theorem \ref{onepositive}. Set $U=[1,2]\times [1,2]\times[0.2,0.4]$. Figures \ref{Case-ii-positive} and \ref{Case-ii-inverse} provide the graphs of $F(U)\cap U$ and $F^{-1}(U)\cap U$, respectively.  By Theorem \ref{onepositive}, there exists a Smale horseshoe, the invariant set is uniformly hyperbolic and is topologically conjugate to fullshift on two symbols.
\end{example}

\begin{example}\label{illustr1}
Consider the following example:
\begin{equation}
\left\{
  \begin{array}{ll}
    f_1(x,y,z)=7(x-1)(2-x)+0.4y \\
    f_2(x,y,z)=7(y-1)(2-y)+0.4z \\
    f_3(x,y,z)=x,
  \end{array}
\right.
\end{equation}
It is evident that this map satisfies the conditions in Case (i) of Theorem \ref{twopositive}. Set $U=[1,2]\times [1,2]\times[1,2]$. Figures \ref{Caseipositive} and \ref{Caseiinverse} give the graphs of $F(U)\cap U$ and $F^{-1}(U)\cap U$, respectively. From the computer simulations, we find that the graph of $F^{-1}(U)$ is very big compared with $U$. So, we only give a part of the graph of $F^{-1}(U)\cap U$.  It follows from Theorem \ref{twopositive} that there are a Smale horseshoe and the uniformly hyperbolic invariant set on which the map is topologically conjugate to fullshift on four symbols.
\end{example}

\bigskip

\section*{Acknowledgments}

I would like to thank Professor Sheldon Newhouse for his encouragement, comments, and providing many useful references.

I devote this work to my parents, I cannot thank my parents enough for all the support and love they have given me.
\bigskip

\baselineskip=10pt
\bibliography{referhenon}
\bibliographystyle{plain}

\newpage

\centerline{\Large \bf List of Figure Captions}

\baselineskip=16pt
\vspace{0.4 in} \noindent
\begin{enumerate}

\item[Figure 1.] The chaotic attractor of map (2) in Section 5, where the initial value is taken as $(0.5,0.5,0.5)$.

\item[Figure 2.] The chaotic attractor of system map (8) in Section 5, where the initial value is taken as $(0.5,0.5,0.5)$.

\item[Figure 3.] The chaotic attractor of map (11) in Section 5, where the initial value is taken as $(0.5,0.5,0.5)$.

\item[Figure 4.] The chaotic attractor of map (20) in Section 5, where the initial value is taken as $(0.5,0.5,0.5)$.

\item[Figure 5.] The chaotic attractor of map (24) in Section 5, where the initial value is taken as $(0.5,0.5,0.5)$.

\item[Figure 6.] The chaotic attractor of map (33) in Section 5, where the initial value is taken as $(0.5,0.5,0.5)$.

\item[Figure 7.] The illustration graph of $F(U)\cap U$ in Example \ref{illustr2}, where $U=[1,2]\times[1,2]\times[0.2,0.4]$.

\item[Figure 8.]  The illustration graph of $F^{-1}(U)\cap U$ in Example \ref{illustr2}, where $U=[1,2]\times[1,2]\times[0.2,0.4]$.

\item[Figure 9.] The illustration graph of $F(U)\cap U$ in Example \ref{illustr1}, where $U=[1,2]\times[1,2]\times[1,2]$.

\item[Figure 10.] The illustration graph of $F^{-1}(U)\cap U$ in Example \ref{illustr1}, where $U=[1,2]\times[1,2]\times[1,2]$.

\end{enumerate}

\newpage

\begin{figure}[H]
\begin{center}
\scalebox{0.3 }{ \includegraphics{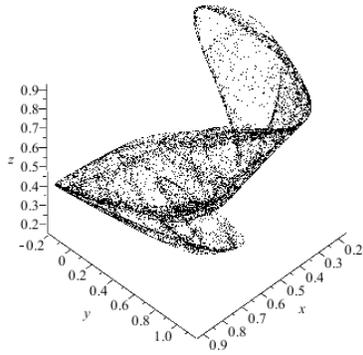}}
\renewcommand{\figure}{Fig.}
\caption{The chaotic attractor of map (2) in Section 5, where the initial value is taken as $(0.5,0.5,0.5)$.
}\label{FigureMap2}
\end{center}
\end{figure}

\begin{figure}[H]
\begin{center}
\scalebox{0.3 }{ \includegraphics{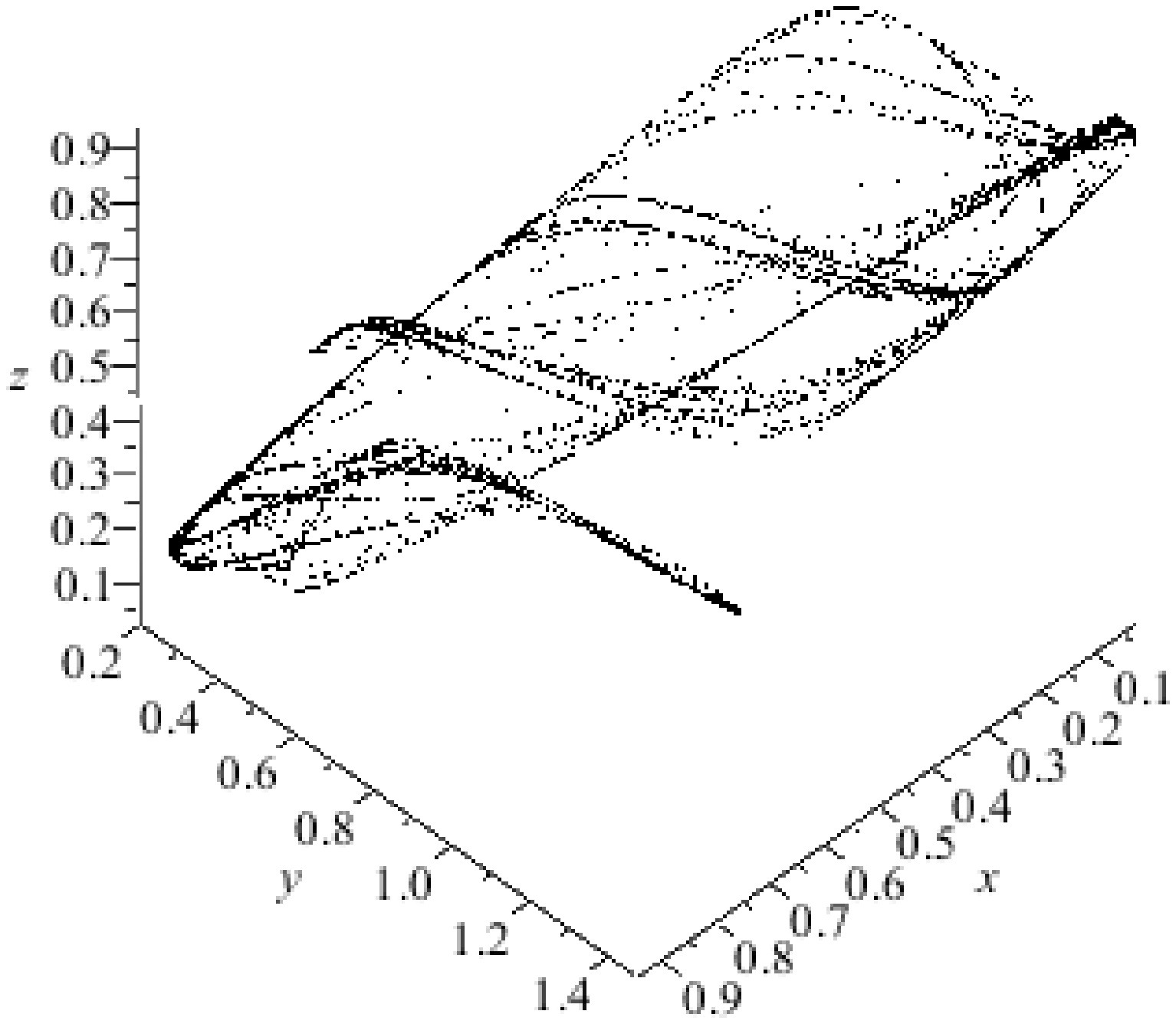}}
\renewcommand{\figure}{Fig.}
\caption{The chaotic attractor of system map (8) in Section 5, where the initial value is taken as $(0.5,0.5,0.5)$.}\label{FigureMap8}
\end{center}
\end{figure}

\begin{figure}[H]
\begin{center}
\scalebox{0.3 }{ \includegraphics{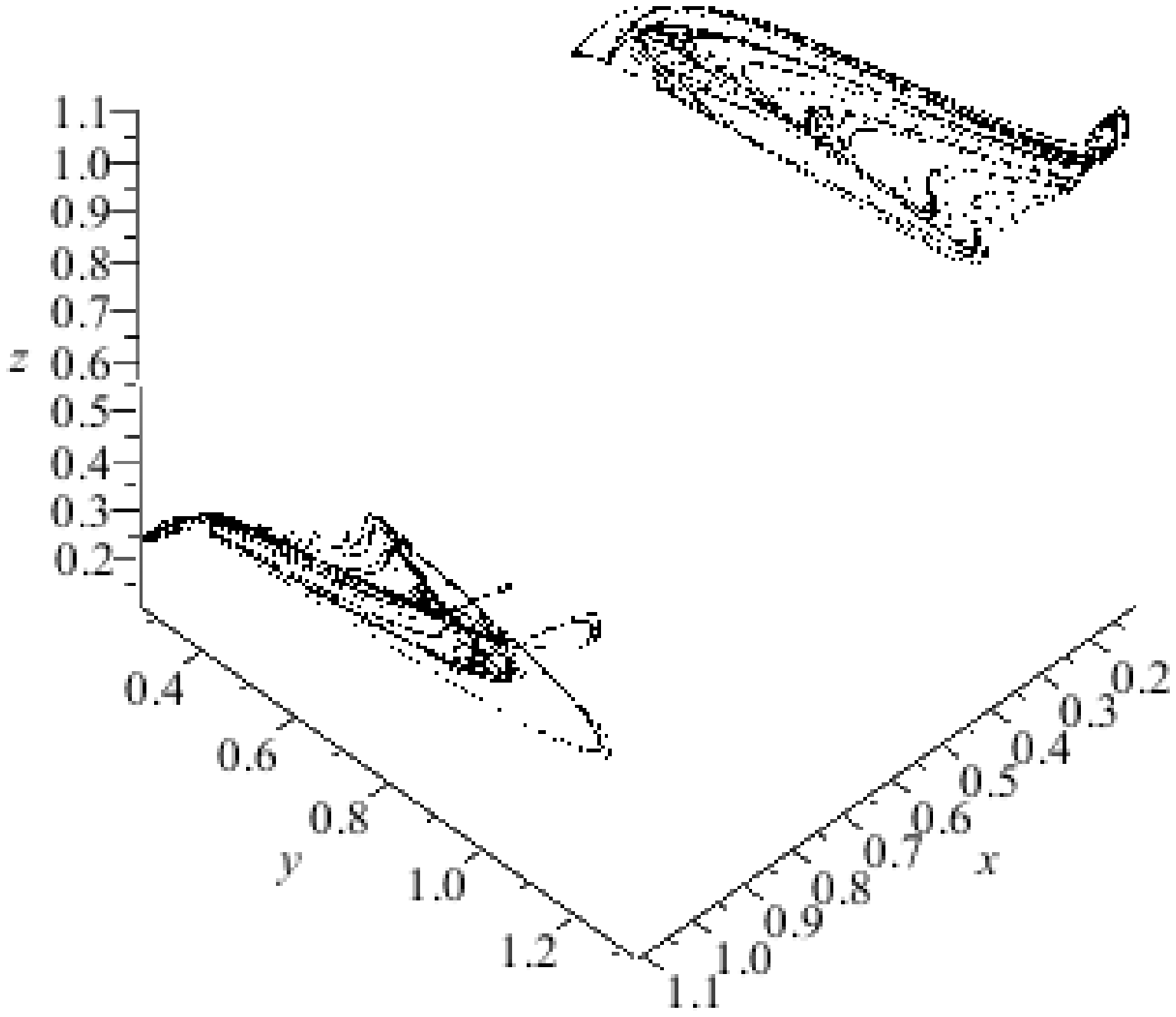}}
\renewcommand{\figure}{Fig.}
\caption{The chaotic attractor of map (11) in Section 5, where the initial value is taken as $(0.5,0.5,0.5)$.
}\label{FigureMap11}
\end{center}
\end{figure}

\begin{figure}[H]
\begin{center}
\scalebox{0.3 }{ \includegraphics{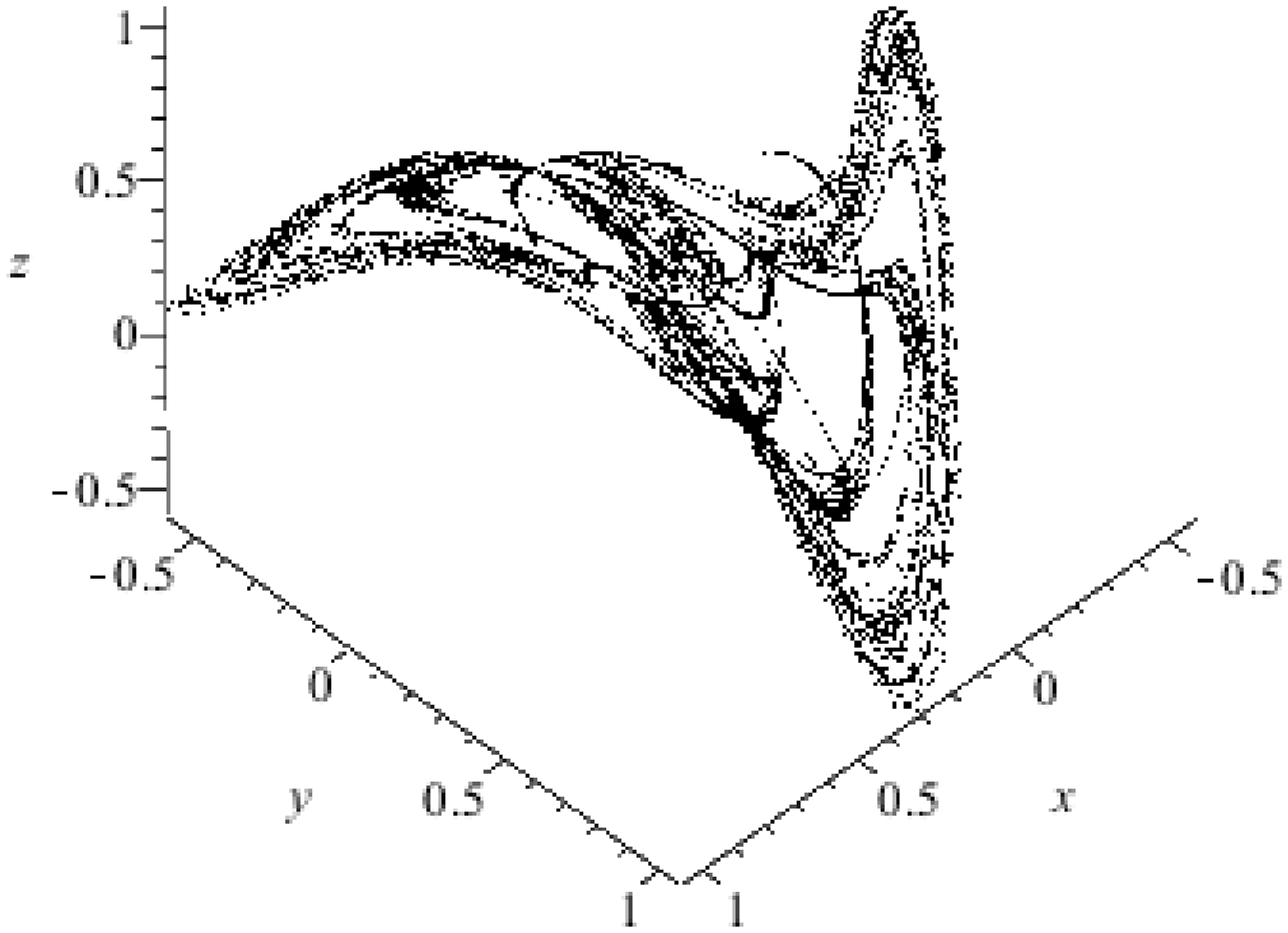}}
\renewcommand{\figure}{Fig.}
\caption{The chaotic attractor of map (20) in Section 5, where the initial value is taken as $(0.5,0.5,0.5)$.}\label{FigureMap20}
\end{center}
\end{figure}

\begin{figure}[H]
\begin{center}
\scalebox{0.3 }{ \includegraphics{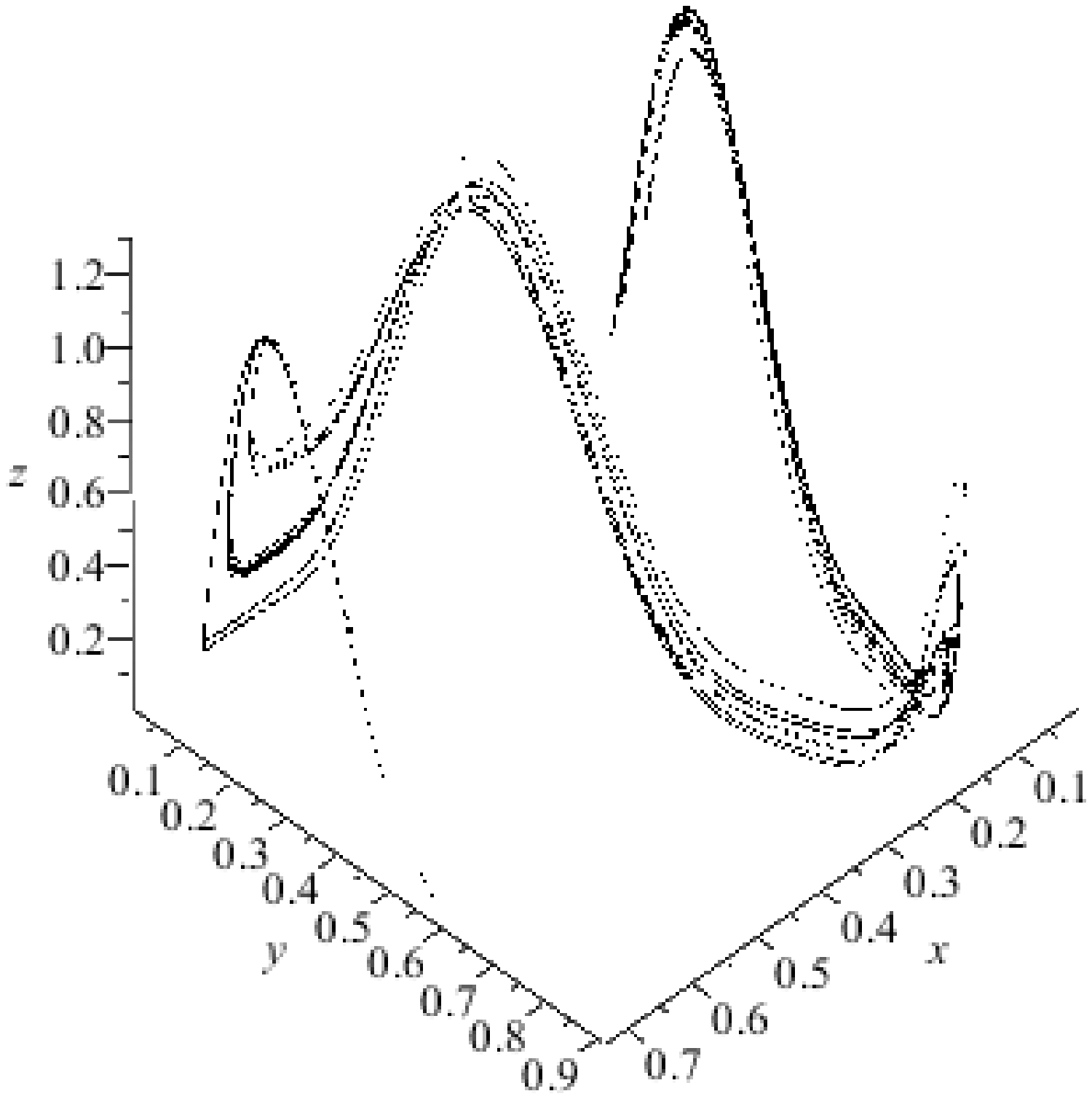}}
\renewcommand{\figure}{Fig.}
\caption{The chaotic attractor of map (24) in Section 5, where the initial value is taken as $(0.5,0.5,0.5)$.}\label{FigureMap24}
\end{center}
\end{figure}

\begin{figure}[H]
\begin{center}
\scalebox{0.3 }{ \includegraphics{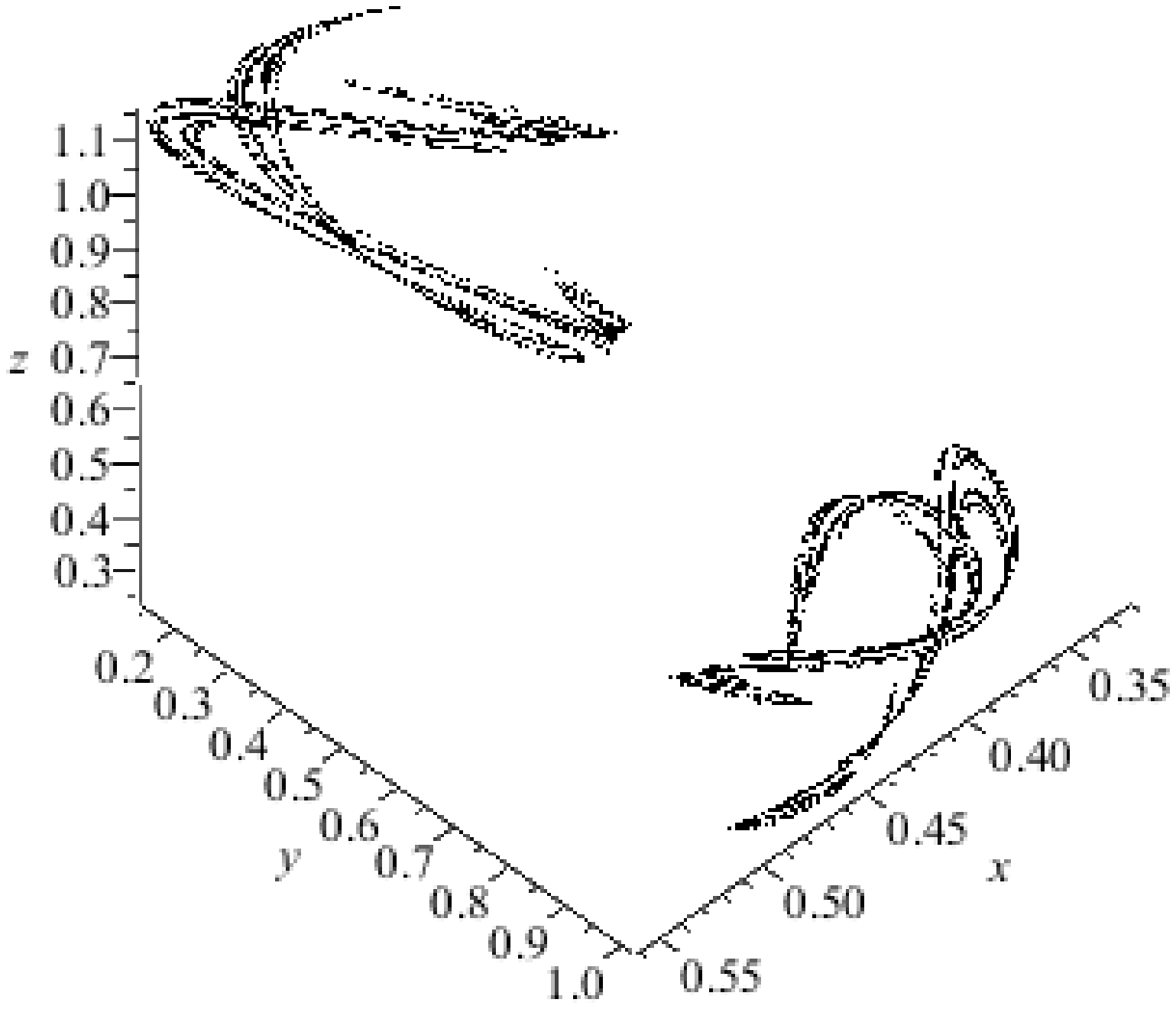}}
\renewcommand{\figure}{Fig.}
\caption{The chaotic attractor of map (33) in Section 5, where the initial value is taken as $(0.5,0.5,0.5)$.}\label{FigureMap33}
\end{center}
\end{figure}

\begin{figure}
\begin{center}

\centerline{\epsfig{figure=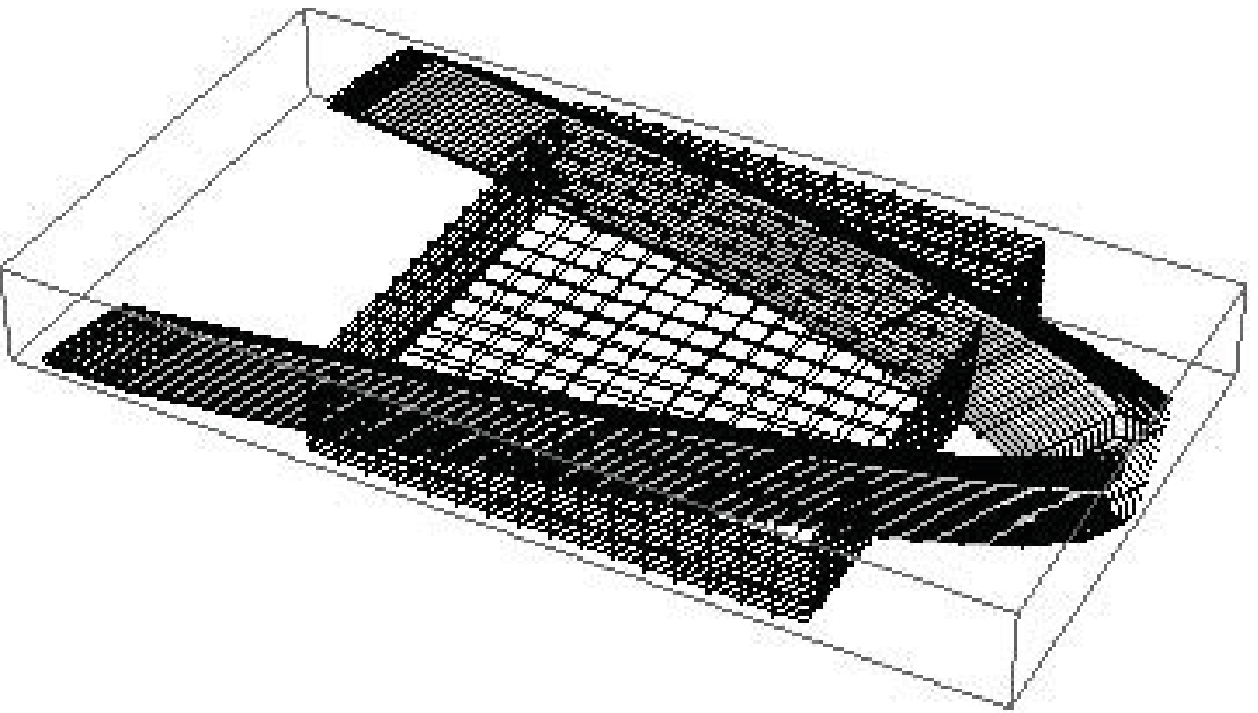,height=3in,width=4.5in}}

\caption{The illustration graph of $F(U)\cap U$ in Example \ref{illustr2}, where $U=[1,2]\times[1,2]\times[0.2,0.4]$.}\label{Case-ii-positive}
\end{center}
\end{figure}

\begin{figure}
\begin{center}

\centerline{\epsfig{figure=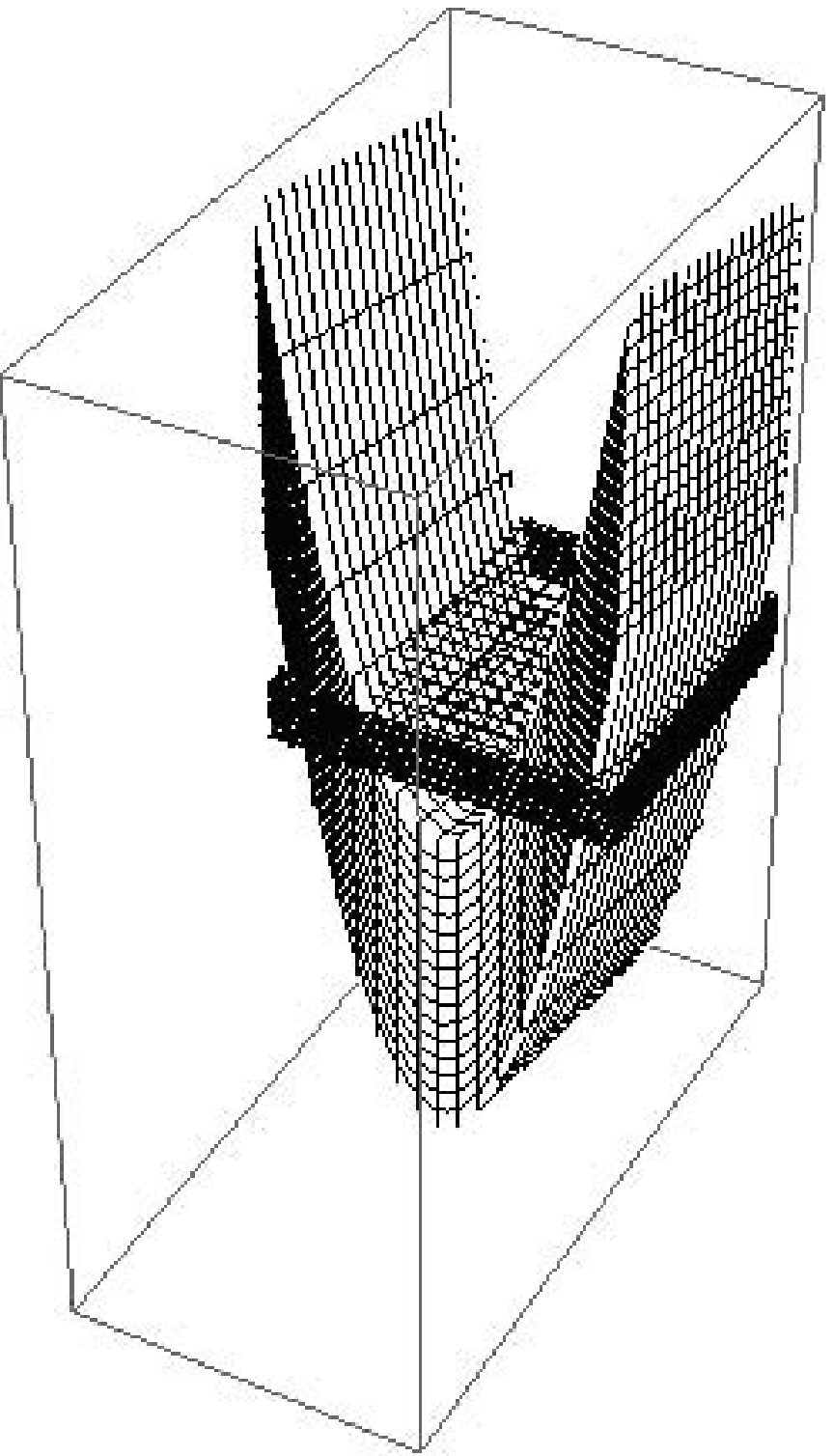,height=3in,width=4.5in}}

\caption{The illustration graph of $F^{-1}(U)\cap U$ in Example \ref{illustr2}, where $U=[1,2]\times[1,2]\times[0.2,0.4]$.}\label{Case-ii-inverse}
\end{center}
\end{figure}

\begin{figure}
\begin{center}
\centerline{\epsfig{figure=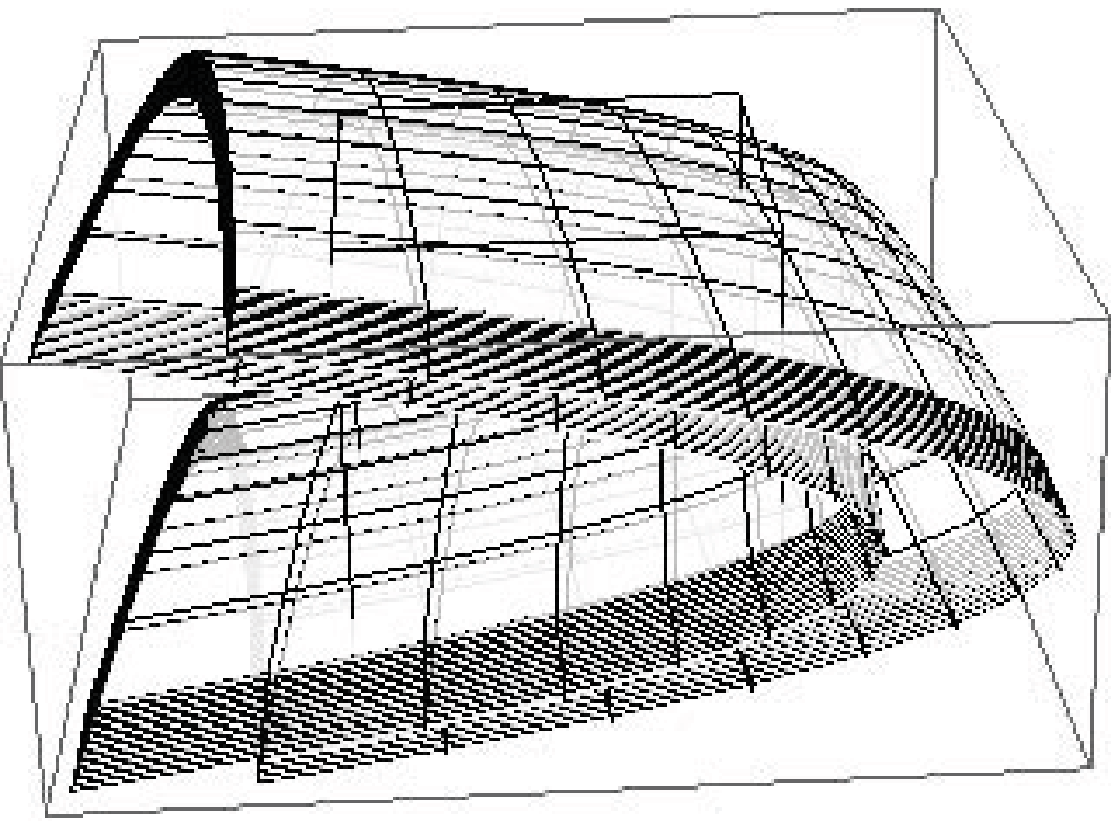,height=3in,width=4.5in}}
\caption{The illustration graph of $F(U)\cap U$ in Example \ref{illustr1}, where $U=[1,2]\times[1,2]\times[1,2]$.}\label{Caseipositive}
\end{center}
\end{figure}

\begin{figure}
\begin{center}

\centerline{\epsfig{figure=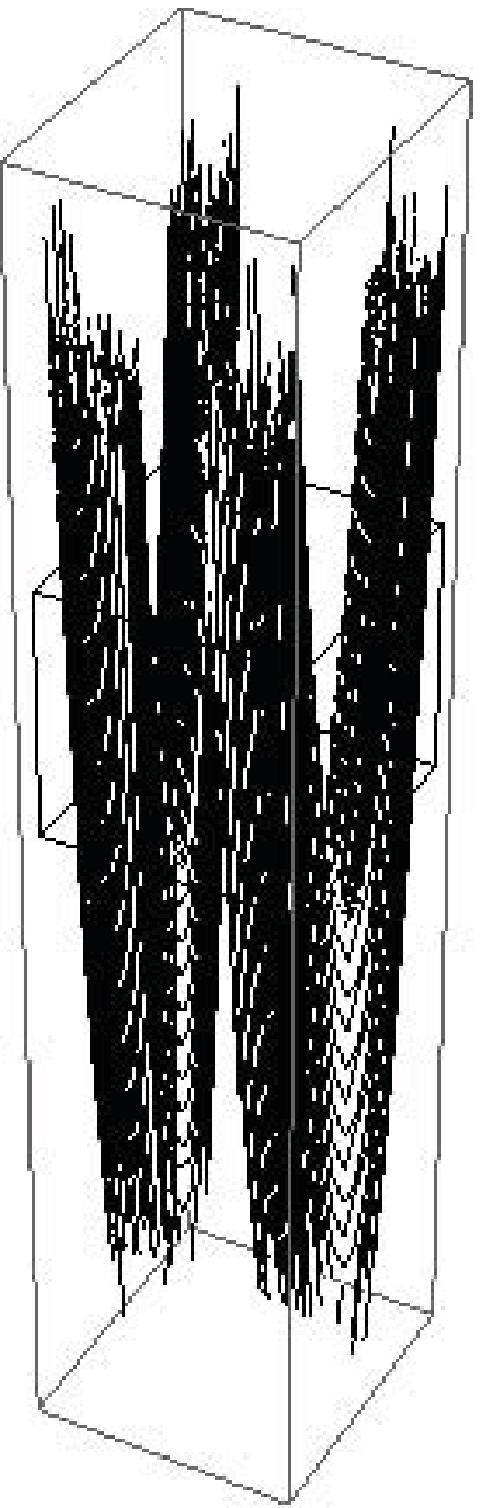,height=3in,width=3in}}

\caption{The illustration graph of $F^{-1}(U)\cap U$ in Example \ref{illustr1}, where $U=[1,2]\times[1,2]\times[1,2]$.}\label{Caseiinverse}
\end{center}
\end{figure}

\end{document}